\documentclass{article}

\usepackage{arxiv}

\newcommand{\removed}[1]{}
\usepackage{times}
\usepackage{soul}
\usepackage{url}
\usepackage{hyperref}
\usepackage[utf8]{inputenc}
\usepackage{graphicx}
\usepackage{amsmath}
\usepackage{amsthm}
\usepackage{booktabs}
\usepackage{algorithm}
\usepackage{algorithmic}

\usepackage{xcolor}

\usepackage{stackengine}
\def\defeq{\mathrel{\ensurestackMath{\stackon[1pt]{=}{\scriptscriptstyle\Delta}}}}

\usepackage{algorithm}
\usepackage{algorithmic}

\usepackage{times}

\usepackage{hyperref}
\usepackage{url}

\usepackage{amsmath}

\usepackage{amssymb}
\usepackage{mathtools}
\usepackage{amsthm}

\usepackage{algorithmic}

\usepackage{lscape}
\usepackage[capitalize,noabbrev]{cleveref}

\theoremstyle{plain}

\usepackage[textsize=tiny]{todonotes}
\usepackage{multirow}

\usepackage{ascmac}
\usepackage{float}
\usepackage{perpage}
\MakeSorted{figure}
\MakeSorted{table}

\usepackage{url}
\usepackage{natbib}
\usepackage{chapterbib}

\usepackage{xcolor}
\usepackage{color}   
\usepackage{tikz}
\tikzset{%
mynode/.style={circle,minimum width=.5ex, fill=none,draw}, 
myfillnode/.style={circle,minimum width=.5ex, fill=lightgray,draw}, 
}
\usepackage{amssymb}
\usepackage{natbib}

\newcommand{\indep}{\perp \!\!\! \perp}
\usepackage{amsmath}               
\usepackage{lscape}
\usepackage{algorithm}
\usepackage{color}
\usepackage{tikz}
\usepackage{amsmath,amsthm}
\newtheorem{theorem}{Theorem}
\newtheorem{definition}{Definition}
\newtheorem{assumption}{Assumption}
\newtheorem{lemma}{Lemma}

\newtheorem{corollary}{Corollary}
\usepackage{bm}
\usepackage{comment}
\usepackage{here}
\allowdisplaybreaks[4]
\usepackage{caption}
\usepackage{bbding}
\usepackage{arydshln}
\usepackage{afterpage}
\usepackage{thmtools}
\usepackage{authblk}

\usepackage{mathrsfs}

\newcommand{\yuta}[1]{\textcolor{red}{#1}}

\usepackage{soul}
  
\begin{document}


  \title{Measures for Assessing Causal Effect Heterogeneity Unexplained by Covariates}



\author[1]{Yuta Kawakami}
\author[2]{Jin Tian}

  \affil[1]{Mohamed bin Zayed University of Artificial Intelligence; E-mail: Yuta.Kawakami@mbzuai.ac.ae}

  \affil[2]{Mohamed bin Zayed University of Artificial Intelligence; E-mail: Jin.Tian@mbzuai.ac.ae}




\maketitle


  \begin{abstract}
{There has been considerable interest in estimating heterogeneous causal effects across individuals or subpopulations. Researchers often assess causal effect heterogeneity based on the subjects' covariates using the conditional average causal effect (CACE). However, substantial heterogeneity may persist even after accounting for the covariates. Existing work on causal effect heterogeneity unexplained by covariates mainly focused on binary treatment and outcome. In this paper, we introduce novel heterogeneity measures, P-CACE and N-CACE, for binary treatment and continuous outcome that represent CACE over the positively and negatively affected subjects, respectively. We also introduce new heterogeneity measures, P-CPICE and N-CPICE, for continuous treatment and continuous outcome by leveraging stochastic interventions,  expanding causal questions that researchers can answer. We establish identification and bounding theorems for these new measures. Finally, we show their application to a real-world dataset.}
\end{abstract}

\section{INTRODUCTION}
\label{sec1}


There is 
a growing interest in understanding how the causal effect of a treatment ($X$) on an outcome ($Y$) varies across different individuals or subgroups within a population \citep{Xie2013}. 
To investigate this heterogeneity, researchers often employ the \emph{conditional average causal effect (CACE)}, also known as conditional average treatment effect (CATE), given by $\mathbb{E}[Y_1-Y_0|W=w]$ 
where $Y_x$ denotes the potential outcome under treatment $X=x$ and $W$ are the subjects' covariates (e.g. gender, age, and race). 
CACE allows for the assessment of causal effects within specific subpopulations defined by observed covariates of the subjects \citep{Athey2016,Shalit2017,Athey2019,Kunzel2019,Wager2018,Jacob2021,Zhang2022,Singh2023,Kawakami2024b}.
Many studies on this heterogeneity assume the additive noise model $Y:=f_Y'(X,W)+U_Y$, such as the partially linear regression model $Y:=f_Y^1(W)X+f_Y^2(W)+U_Y$  \citep{Robinson1988,Chernozhukov2018}.
In this model, the whole treatment effect heterogeneity is fully explained away by the observed covariates $W$ using CACE. 
However, 
in many cases, considerable  heterogeneity remains unexplained
by CACE, likely due to the interaction between $X$ and unobserved variables $U_Y$  \citep{Gadbury2004,Poulson2012}.
For example, consider the relationship $Y:=X+W+XU_Y$, $U_Y \sim {\cal N}(0,1)$. {Given $W=0$, we have that the individual causal effect $Y_{1}-Y_{0}=1+U_Y$   still depends on $U_Y$  and could be positive or negative, 
even though CACE $\mathbb{E}[Y_1-Y_0|W=0]=1$.}

Research on individual causal effect heterogeneity unexplained by CACE primarily focuses on binary treatments and outcomes  
\citep{Gadbury2000,Gadbury2001,Albert2005,Poulson2012,Brand2013,Zhang2013,Yin2018,Post2023,Wu2024}.
Under the framework of principal stratification for binary treatments and outcomes  \citep{Frangakis2002}, individuals can be categorized into four distinct groups \citep{Ben-Michael2024}. To quantify causal effect heterogeneity across these groups, several measures have been proposed. Notably,   
\citet{Shen2013} introduced the (conditional) \emph{treatment harm rate} (THR) as $\mathbb{P}(Y_0=1,Y_1=0|W=w)$ 
and the (conditional) \emph{treatment benefit rate} (TBR) as $\mathbb{P}(Y_0=0,Y_1=1|W=w)$. 
THR is also called the fraction negatively affected (FNA) in machine learning \citep{Kallus2022,Li2023}.
Various ways of extending THR and TBR to continuous outcomes have been proposed \cite{Shen2013,Zhang2013,Yin2018,Kallus2022}.




In this paper, we introduce novel measures to quantify causal effect heterogeneity that remains unexplained by observed covariates, focusing on continuous outcomes. 
First, we define a new classification of individuals into four groups under binary treatment and continuous outcome, extending that in \citep{Ben-Michael2024}. We then introduce two novel measures, P-CACE and N-CACE, to quantify CACE within subpopulations experiencing positive and negative treatment effects, respectively.   Furthermore, we demonstrate that the overall CACE can be naturally decomposed into two distinct components, namely P-CACE and N-CACE.


Next, we extend our analysis to examine causal effect heterogeneity in the context of continuous treatments and continuous outcomes. We employ stochastic interventions that allow us to investigate more flexible scenarios of continuous treatments \citep{Munoz2012}. Within this framework, we classify individuals into four distinct groups. Subsequently, we introduce two novel measures, P-CPICE and N-CPICE, designed to quantify the average causal effect within subpopulations characterized by positive and negative treatment effects, respectively. We show that the average effect over the whole population can be decomposed into P-CPICE and N-CPICE.

We establish identification and bounding theorems for each measure we introduce. 
Finally, we 
illustrate their application on a real-world dataset from medicine.


We provide a summary table comparing the applicable settings of existing causal effect heterogeneity measures to ours in Appendix \ref{appT}.

\section{NOTATIONS AND BACKGROUNDS\label{sec-2}}

We represent each variable with a capital letter $(X)$ and its realized value with a small letter $(x)$.
Let $\mathbb{I}(x)$ be an indicator function that takes $1$ if $x$ is true and $0$ if $x$ is false.
Denote $\Omega_Y$ the domain of $Y$,
$\mathbb{E}[Y]$  the expectation of $Y$, and 
$\mathbb{P}(Y< y)$ the cumulative distribution function (CDF) of the continuous variable $Y$.
We write $g(x)={\cal O}(h(x))$ as $x \rightarrow \infty$ if there exists a positive real number $M$ and a real number $D$ such that $|g(x)| \leq Mh(x)$ for all $x \geq D$.
$X_N=O_p(a_N)$ means the set of values $X_N/a_N$ are stochastically bounded and 
$X_N=o_p(a_N)$ means the 
values $X_N/a_N$ converge to zero in probability as $N \rightarrow \infty$.

\noindent{\bf Structural causal models.} 
We use the language of Structural Causal Models (SCM) as our basic semantic and inferential framework \citep{Pearl09}.
An SCM ${\cal M}$ is a tuple $\left<{\boldsymbol U},{\boldsymbol V}, {\cal F}, \mathbb{P}_{\boldsymbol U} \right>$, where ${\boldsymbol U}$ is a set of exogenous (unobserved) variables following a joint distribution $\mathbb{P}_{\boldsymbol U}$, and ${\boldsymbol V}$ is a set of endogenous (observable) variables whose values are determined by structural functions ${\cal F}=\{f_{V_i}\}_{V_i \in {\boldsymbol V}}$ such that $v_i:= f_{V_i}({\mathbf{pa}}_{V_i},{\boldsymbol u}_{V_i})$ where ${\mathbf{PA}}_{V_i} \subseteq {\boldsymbol V}$ and $U_{V_i} \subseteq {\boldsymbol U}$. 
Each SCM ${\cal M}$ induces an observational distribution $\mathbb{P}_{\boldsymbol V}$ over ${\boldsymbol V}$.  
An \emph{atomic intervention} of setting a set of endogenous variables ${\boldsymbol X}$ to constants ${\boldsymbol x}$, denoted by $do({\boldsymbol x})$, replaces the original equations of ${\boldsymbol X}$ by ${\boldsymbol X} :={\boldsymbol x}$ and induces a \textit{sub-model}  ${\cal M}_{\boldsymbol x}$.
We denote the \emph{potential outcome} $Y$ under intervention $do({x})$ by $Y_{{x}}({\boldsymbol u})$, which is the solution of $Y$ with ${\boldsymbol U}={\boldsymbol u}$ in the sub-model ${\cal M}_{x}$. 
A \emph{stochastic intervention} $do({\boldsymbol X}^{\pi})$ replaces the original equations of ${\boldsymbol X}$ by setting ${\boldsymbol X}$ to random variables ${\boldsymbol X}^{\pi}$ with a probability distribution function (PDF) $\pi({\boldsymbol x})$, inducing a \textit{sub-model} ${\cal M}_{{\boldsymbol X}^{\pi}}$. 
The potential outcome $Y_{{\boldsymbol X}^{\pi}}({\boldsymbol u})$ denotes a solution of $Y$ with ${\boldsymbol U}={\boldsymbol u}$ in the sub-model ${\cal M}_{{\boldsymbol X}^{\pi}}$.




When studying the causal effect heterogeneity, researchers often consider the following SCM ${\cal M}$:
\begin{align}
Y:=f_Y(X,W,U_Y), X:=f_X(W,U_X), W:=f_W(U_W),
\end{align}
where {$Y$ is an 
outcome variable, $X$ is a 
treatment variable, $W$ are a set of covariates, and } 
$U_Y$, $U_X$, and $U_W$ are latent exogenous variables following a joint distribution $\mathbb{P}(U_Y, U_X, U_W)$.

\noindent{\bf Causal effects measures.}
We explain popular causal effect measures 
used in the previous work.

\noindent\textbf{Binary treatment.} 
For a binary treatment, the \emph{individual causal effect (ICE)} is defined as {$\text{\normalfont ICE}(\boldsymbol{u})\defeq Y_1(\boldsymbol{u})-Y_0(\boldsymbol{u})$.}
The \emph{average causal effect (ACE)} is defined as $\text{\normalfont ACE}\defeq\mathbb{E}[Y_1-Y_0]$.
For any $w \in \Omega_W$, the \emph{conditional average causal effect (CACE)} is defined as $\text{\normalfont CACE}(w)\defeq\mathbb{E}[Y_1-Y_0|W=w]$, which measures causal effect heterogeneity across subpopulations specified by the  subjects' observed covariates. 



Substantial heterogeneity may persist even after accounting for the covariates. To investigate individual causal effect heterogeneity, researchers divide subjects into four groups by the joint values of the potential outcomes for binary treatment and outcome, known as \emph{principal stratification} \citet{Frangakis2002}. 
This categorization is widely used in the causal inference area and has various names, e.g., compliers, always-takers, defiers, and never-takers in non-compliance problems \citep{Imbens1994}.
The following categorization is used in \cite{Ben-Michael2024,Li2023}:
\begin{definition} \label{def-ps}
We call subjects:
\begin{enumerate}
\vspace{0.2cm}
\setlength{\parskip}{0cm}
\setlength{\itemsep}{0.1cm}
\item for which $\{Y_0=0,Y_1=1\}$, {\it useful treatment group},
\item for which $\{Y_0=1,Y_1=0\}$, {\it harmful treatment group},
\item for which $\{Y_0=0,Y_1=0\}$, {\it useless treatment group}, 
\item for which $\{Y_0=1,Y_1=1\}$, {\it harmless treatment group}.
\vspace{-0cm}
\end{enumerate}
\end{definition}
Based on the principal stratification, \citet{Shen2013} introduced the \emph{treatment harm rate (THR)}, $\text{\normalfont THR}(w)\defeq\mathbb{P}(Y_0=1,Y_1=0|W=w)$, 
and the \emph{treatment benefit rate (TBR)}, $\text{\normalfont TBR}(w)\defeq\mathbb{P}(Y_0=0,Y_1=1|W=w)$, to measure causal effect heterogeneity for binary outcomes.  
THR is also called the \emph{fraction negatively affected (FNA)} in \citep{Kallus2022,Li2023}.

THR and TBR have been extended for use with a continuous outcome.
\citet{Shen2013} and \citet{Zhang2013} use THR and TBR by binarizing potential outcomes $\left(\mathbb{I}(Y_0<y_0),\mathbb{I}(Y_1<y_1)\right)$ given thresholds $y_0$ and $y_1$. \citet{Yin2018} extend THR and TBR by $\text{\normalfont THR}_c(w)\defeq\mathbb{P}(Y_0-Y_1>c|W=w)$ and $\text{\normalfont TBR}_c(w)\defeq\mathbb{P}(Y_1-Y_0>c|W=w)$ given a constant $c$. 
\citet{Kallus2022} defines FNA for 
continuous outcome as $\mathbb{P}(\zeta(W)(Y_1-Y_0)<\delta)$ with a threshold $\delta$ and a weighting function $\zeta$.

{Effect heterogeneity between treated and untreated units is examined in \cite{Pearl2017}, which study the difference between the average effect of treatment on the treated (ETT)  and the average effect of treatment on the untreated (ETU), given by  $\mathbb{E}[Y_1-Y_0|X=1]$ - $\mathbb{E}[Y_1-Y_0|X=0]$.}

\noindent\textbf{Continuous treatment.}
\citet{Munoz2012} defined the \emph{population intervention causal effect (PICE)} as $\text{\normalfont PICE}\defeq\mathbb{E}[Y_{X^{\pi_1}}-Y_{X^{\pi_0}}]$ 
based on stochastic interventions, which enable one to study more flexible scenarios of continuous treatment. 
For example, \citet{Mauro2020} introduced ``\emph{single shift}" interventions with random variables $(X^{\pi_0},X^{\pi_1})=(X,X+d)$, {which increase the treatment by a set amount $d$ from the current value,} and ``\emph{double shift}" $(X^{\pi_0},X^{\pi_1})=(X-d,X+d)$ for some value $d \in \mathbb{R}$. 
See other examples of stochastic interventions in \cite{Mauro2020} and \cite{Diaz2020}.


\noindent{\bf Probabilities of causation (PoC)}.
THR and TBR are closely related to the framework of probabilities of causation (PoC), especially the probability of necessity and sufficiency (PNS), 
a family of probabilities quantifying whether one event was the real cause of another 
\citep{Pearl1999,Tian2000}.
Recently, \citet{Kawakami2024} defined the conditional PNS for continuous treatment and outcome by $\mathbb{P}(Y_{x_0}< y \leq Y_{x_1}|W=w)$, 
and showed that the conditional PNS is identified under 
the assumption that there are no unobserved confounders (other than $W$) and 
a monotonicity assumption on $Y_x$:
\begin{assumption}[Conditional exogeneity]
\label{ASEXO2}
$Y_x\indep X | W$ for all $x \in \Omega_X$.
\end{assumption}
\begin{assumption}[Conditional monotonicity over $Y_{x}$]\footnote{Assumption~\ref{MONO2} is equivalent to a monotonicity assumption on the function $f_Y$ stated as follows: 
``The function $f_Y(x,w,U_Y)$ is either (i) monotonic increasing on $U_Y$ for all $x \in \Omega_X$ and $w \in \Omega_W$ almost surely w.r.t. $\mathbb{P}_{U_Y}$, or (ii) monotonic decreasing on $U_Y$ for all $x \in \Omega_X$ and $w \in \Omega_W$ almost surely w.r.t. $\mathbb{P}_{U_Y}$" \citep{Kawakami2024}. For example, additive noise model, i.e., $Y:=f(X,W)+U_Y$, and models with multiplicative noise of the form $Y:=f(X,W)\times U_Y$  satisfy this assumption.}
\label{MONO2}
The potential outcome $Y_{x}$ satisfies: for any $x_0,x_1 \in \Omega_X$, $y \in \Omega_Y$, and $w \in \Omega_W$, either $\mathbb{P}(Y_{x_0}< y \leq Y_{x_1}|W=w)=0$ or $\mathbb{P}(Y_{x_1}< y \leq Y_{x_0}|W=w)=0$.
\end{assumption}
Specifically, under SCM ${\cal M}$ and Assumptions \ref{ASEXO2} and \ref{MONO2},  $\mathbb{P}(Y_{x_0}< y \leq Y_{x_1}|W=w)$ is identifiable by
$\max\{\mathbb{P}(Y<y|X=x_0,W=w)-\mathbb{P}(Y<y|X=x_1,W=w),0\}$ \citep{Kawakami2024}. 


\section{POSITIVELY AND NEGATIVELY AFFECTED CACE}

In this section, we consider  binary treatment and continuous outcome and study the causal effect heterogeneity unexplained by CACE.  
Specifically, we introduce new heterogeneity measures, P-CACE and N-CACE, which capture the CACEs for positively and negatively affected subjects, respectively. We then provide identification and bounding theorems for these two new measures.

\subsection{Principal stratification for continuous outcome}
First, we  extend the principal stratification 
in Def.~\ref{def-ps} to binary treatment and continuous outcome.
\begin{definition}[Potential outcome types for a continuous outcome]
\label{def7}
For any $y \in \Omega_Y$, 
we name subjects: 
\begin{enumerate}
\vspace{0.2cm}
\setlength{\parskip}{0cm}
\setlength{\itemsep}{0.1cm}
\item for which $\{Y_0< y,Y_1\geq y\}$, {\it the positively affected subjects at point $y$},
\item for which $\{Y_0\geq y,Y_1< y\}$, {\it the negatively affected subjects at point $y$},
\item for which $\{Y_0\geq y,Y_1\geq y\}$, {\it the positively immutables at point $y$},
\item for which $\{Y_0< y,Y_1< y\}$, {\it the negatively immutables at point $y$}.
\vspace{-0cm}
\end{enumerate}
\end{definition}

The positively affected subjects at point $y$ are those propelled by the treatment from an outcome less than $y$ to at least $y$ by changing the intervention from $0$ to $1$.
The negatively affected subjects at point $y$ are those propelled by the treatment from an outcome at least $y$ to less than $y$ by changing the intervention from $0$ to $1$.
{Note that a subject is positively affected at some point  if and only if $Y_0<Y_1$, and a subject is negatively affected at some point if and only if $Y_1<Y_0$.}
The immutables at the point $y$ are those whose outcome remains at least $y$ or less than $y$ even after changing the intervention from $0$ to $1$.


The following relationship is useful:
\begin{restatable}{lemma}{Lemmaone}
\label{LEM1}
For any $y \in \Omega_Y$ and $w \in \Omega_W$, we have 
\begin{equation}
\begin{aligned}
&\mathbb{P}(Y_0<y|W=w)-\mathbb{P}(Y_1<y|W=w)=\mathbb{P}(Y_0<y \leq Y_1|W=w)-\mathbb{P}(Y_1<y\leq Y_0|W=w).
\end{aligned}
\end{equation}
\end{restatable}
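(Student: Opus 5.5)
The plan is to decompose each marginal event by the four potential outcome types of Definition~\ref{def7} and then cancel the common part. Fix $y \in \Omega_Y$ and $w \in \Omega_W$, and work under the conditional law given $W=w$. The events $\{Y_1\geq y\}$ and $\{Y_1<y\}$ partition the sample space, so the event $\{Y_0<y\}$ is the disjoint union of $\{Y_0<y,Y_1\geq y\}$ and $\{Y_0<y,Y_1<y\}$. By finite additivity of the conditional probability,
\begin{align*}
\mathbb{P}(Y_0<y|W=w)=\mathbb{P}(Y_0<y\leq Y_1|W=w)+\mathbb{P}(Y_0<y,Y_1<y|W=w).
\end{align*}

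Symmetrically, splitting $\{Y_1<y\}$ according to whether $Y_0\geq y$ or $Y_0<y$ gives
\begin{align*}
\mathbb{P}(Y_1<y|W=w)=\mathbb{P}(Y_1<y\leq Y_0|W=w)+\mathbb{P}(Y_0<y,Y_1<y|W=w).
\end{align*}

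Subtracting the second identity from the first, the common term $\mathbb{P}(Y_0<y,Y_1<y|W=w)$ cancels. This term is the probability of the negatively immutable group at point $y$, and it is finite, so the cancellation is legitimate. What remains is exactly the claimed identity.

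No step here is a real obstacle. The only point that needs care is that the joint events $\{Y_0<y\leq Y_1\}$ and related ones are well defined. This holds because both potential outcomes $Y_0(\boldsymbol{u})$ and $Y_1(\boldsymbol{u})$ are measurable functions of the same exogenous $\boldsymbol{U}$ in the SCM ${\cal M}$, so they live on a common probability space with a joint law conditional on $W=w$. No exogeneity or monotonicity assumption (Assumptions~\ref{ASEXO2} and \ref{MONO2}) is needed.
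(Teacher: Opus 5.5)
Your proof is correct and is essentially the paper's argument. Both split each marginal event according to whether the other potential outcome is below $y$, and then cancel the common term $\mathbb{P}(Y_0<y,Y_1<y|W=w)$.
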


\subsection{P-CACE and N-CACE for measuring causal effect heterogeneity}

We introduce the following measures for capturing causal effect heterogeneity unexplained by CACE.
\begin{definition}[P-CACE and N-CACE]
For any $w \in \Omega_W$, we define the CACEs for the positively and negatively affected subjects, P-CACE and N-CACE, respectively, by
\begin{equation}
\begin{aligned}
\text{\normalfont P-CACE}(w)\defeq\int_{\Omega_Y} \mathbb{P}(Y_0 < y \leq Y_1|W=w)dy, 
\end{aligned}
\end{equation}
\begin{equation}
\begin{aligned}
\text{\normalfont N-CACE}(w)\defeq\int_{\Omega_Y} \mathbb{P}(Y_1 < y \leq Y_0|W=w)dy.
\end{aligned}
\end{equation}
The average causal effects over positively affected subjects and negatively affected subjects, P-ACE and N-ACE, are defined as  
$\text{\normalfont P-ACE}\defeq\mathbb{E}_W[\text{\normalfont P-CACE}(W)]$ and $\text{\normalfont N-ACE}\defeq\mathbb{E}_W[\text{\normalfont N-CACE}(W)]$.
\end{definition}
P-CACE is an integration of the probability of the positively affected subjects at point $y$ over $\Omega_Y$.
N-CACE is an integration of the probability of the  negatively affected subjects at point $y$ over $\Omega_Y$.

P-CACE and N-CACE can also be expressed as 
\begin{equation}
\begin{aligned}
\text{\normalfont P-CACE}(w)=\mathbb{E}_{\boldsymbol{U}}\left[\int_{\Omega_Y} \mathbb{I}(Y_0 < y \leq Y_1)dy\Big|W=w\right],
\end{aligned}    
\end{equation}
\begin{equation}
\begin{aligned}
\text{\normalfont N-CACE}(w)=\mathbb{E}_{\boldsymbol{U}}\left[\int_{\Omega_Y} \mathbb{I}(Y_1 < y \leq Y_0)dy\Big|W=w\right].
\end{aligned}    
\end{equation}
{We have
\begin{align}\label{eq-aaa}
 \int_{\Omega_Y} \mathbb{I}(Y_0 < y \leq Y_1)dy = \begin{cases}
  \text{\normalfont ICE}(\boldsymbol{u}) & \text{\normalfont ICE}(\boldsymbol{u}) > 0 \\
   0 &\text{\normalfont ICE}(\boldsymbol{u}) \leq 0
 \end{cases},
\end{align}
where $\text{\normalfont ICE}(\boldsymbol{u})= Y_1(\boldsymbol{u})-Y_0(\boldsymbol{u})$. 
Therefore, P-CACE becomes
\begin{equation}
\begin{aligned}\label{eq-125}
&\text{\normalfont P-CACE}(w)=\mathbb{E}[\text{\normalfont ICE}(\boldsymbol{u})|\text{\normalfont ICE}(\boldsymbol{u})>0,W=w]\times\mathbb{P}(\text{\normalfont ICE}(\boldsymbol{u})>0|W=w).
\end{aligned}    
\end{equation}
Therefore, P-CACE only accounts for individuals with positive ICE, that is, P-CACE measures the portion of CACE over positively affected subjects (at some point). Similarly, we have
\begin{equation}
\begin{aligned}\label{eq-1252}
&\text{\normalfont N-CACE}(w)=\mathbb{E}[-\text{\normalfont ICE}(\boldsymbol{u})|\text{\normalfont ICE}(\boldsymbol{u})<0,W=w]\times\mathbb{P}(\text{\normalfont ICE}(\boldsymbol{u})<0|W=w).
\end{aligned}    
\end{equation}
N-CACE can be interpreted as measuring the (absolute value of) the negative portion of CACE over negatively affected subjects (at some point). }
Both P-CACE and N-CACE take non-negative values.



\noindent\textbf{Remark.} When the outcome $Y$ is binary, P-CACE and N-CACE reduce to existing measures TBR and THR, respectively. Specifically, $\text{\normalfont P-CACE}(w)= \mathbb{P}(Y_0 < 0 \leq Y_1|W=w)+\mathbb{P}(Y_0 < 1 \leq Y_1|W=w)=\mathbb{P}(Y_0 =0, Y_1=1|W=w)$, and $\text{\normalfont N-CACE}(w)= \mathbb{P}(Y_1 < 0 \leq Y_0|W=w)+\mathbb{P}(Y_1 < 1 \leq Y_0|W=w)=\mathbb{P}(Y_0 =1, Y_1=0|W=w)$.

\subsection{Decomposition of CACE by P-CACE and N-CACE \label{sec-34}}

P-CACE and N-CACE together form CACE. 
We make the following assumption.
\begin{assumption}[Finiteness of P-CACE and N-CACE]
\label{exi1}
We assume $\int_{\Omega_Y} \mathbb{P}(Y_0 < y \leq Y_1|W=w)dy<\infty$ and $\int_{\Omega_Y} \mathbb{P}(Y_1 < y \leq Y_0|W=w)dy<\infty$
for any $w \in \Omega_W$.
\end{assumption}

CACE can then be decomposed into the difference of P-CACE and N-CACE by Lemma~\ref{LEM1} as follows.
\begin{restatable}{proposition}{Propositionone}[Decomposition]
\label{PROP1}
Under SCM ${\cal M}$ and Assumption \ref{exi1},
for any $w \in \Omega_W$, we have
\begin{equation}
\text{\normalfont CACE}(w)=\text{\normalfont P-CACE}(w)-\text{\normalfont N-CACE}(w).
\end{equation}
\end{restatable}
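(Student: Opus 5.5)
We plan to prove the decomposition by integrating the identity of Lemma~\ref{LEM1} over $\Omega_Y$, and to recognize the integrated left-hand side as $\text{CACE}(w)$ through the layer-cake representation of an expectation.

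\textbf{Step 1: integrate Lemma~\ref{LEM1}.} Fix $w\in\Omega_W$ and integrate both sides of Lemma~\ref{LEM1} in $y$. By Assumption~\ref{exi1}, both integrals on the right-hand side are finite, so linearity applies. The right-hand side therefore equals $\text{P-CACE}(w)-\text{N-CACE}(w)$ by definition.

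\textbf{Step 2: identify the left-hand side.} We need
\[
\int_{\Omega_Y}\bigl\{\mathbb{P}(Y_0<y|W=w)-\mathbb{P}(Y_1<y|W=w)\bigr\}dy=\mathbb{E}[Y_1-Y_0|W=w].
\]
The cleanest route avoids separate integrability of $Y_0$ and $Y_1$ and works pointwise in $\boldsymbol{u}$. For real numbers $a,b$ we have $\mathbb{I}(a<y)-\mathbb{I}(b<y)=\mathbb{I}(a<y\le b)-\mathbb{I}(b<y\le a)$. Integrating this over $y\in\Omega_Y$, which we assume contains the range of the outcomes (for example an interval), gives $b-a$. With $a=Y_0(\boldsymbol{u})$ and $b=Y_1(\boldsymbol{u})$ this is exactly equation~\eqref{eq-aaa} together with its negative-part analogue, so
\[
\int_{\Omega_Y}\mathbb{I}(Y_0<y\le Y_1)\,dy-\int_{\Omega_Y}\mathbb{I}(Y_1<y\le Y_0)\,dy=\text{ICE}(\boldsymbol{u}).
\]
Taking conditional expectations given $W=w$ then finishes the argument, provided we may exchange $\mathbb{E}$ and $\int$.

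\textbf{Main obstacle: justifying the interchange.} Both integrands are nonnegative, so Tonelli's theorem applies to each term separately. This yields
\[
\mathbb{E}\bigl[\max(\text{ICE},0)\,\big|\,W=w\bigr]=\text{P-CACE}(w)<\infty
\quad\text{and}\quad
\mathbb{E}\bigl[\max(-\text{ICE},0)\,\big|\,W=w\bigr]=\text{N-CACE}(w)<\infty
\]
under Assumption~\ref{exi1}. Hence $\text{ICE}$ is conditionally integrable, and its conditional expectation, which is $\text{CACE}(w)$, equals the difference $\text{P-CACE}(w)-\text{N-CACE}(w)$. This argument uses Lemma~\ref{LEM1} only implicitly, through the pointwise identity in Step 2, and it never requires $\mathbb{E}[Y_x|W=w]$ to be finite.

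The one point to state carefully is that $\Omega_Y$ must be an interval covering the support of the outcomes, so that the Lebesgue integral of the indicator of $(a,b]$ is $b-a$. For a discrete $\Omega_Y$ such as $\{0,1\}$ in the Remark, the integral is read as a sum, and the same counting gives $b-a$ for integer-valued outcomes.
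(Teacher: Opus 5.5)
Your proof is correct, but it works along a different axis from the paper's. The paper works pointwise in $y$. It writes $\text{CACE}(w)=\mathbb{E}[Y_1|W=w]-\mathbb{E}[Y_0|W=w]=\int_{\Omega_Y}\{\mathbb{P}(Y_0<y|W=w)-\mathbb{P}(Y_1<y|W=w)\}dy$, substitutes Lemma~\ref{LEM1} under the integral, and uses Assumption~\ref{exi1} only to split the integral of a difference into a difference of two finite integrals. You work pointwise in $\boldsymbol{u}$ instead. You express the positive and negative parts of $\text{ICE}(\boldsymbol{u})$ as $y$-integrals of indicators (Eq.~\eqref{eq-aaa} and its analogue), and use Tonelli to get $\text{P-CACE}(w)=\mathbb{E}[\max(\text{ICE},0)|W=w]$ and $\text{N-CACE}(w)=\mathbb{E}[\max(-\text{ICE},0)|W=w]$. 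You then conclude by linearity, since Assumption~\ref{exi1} makes the ICE conditionally integrable.

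Your route is the more rigorous and more general one. The paper's first two equalities tacitly require finite $\mathbb{E}[Y_x|W=w]$, and they invoke the CDF-difference representation of a mean difference without justification. You need nothing beyond Assumption~\ref{exi1}: for instance, $Y:=U_Y$ with $U_Y$ standard Cauchy gives $\text{P-CACE}=\text{N-CACE}=\text{CACE}=0$, although $\mathbb{E}[Y_1|W=w]-\mathbb{E}[Y_0|W=w]$ is undefined. You also show that Assumption~\ref{exi1} is precisely conditional integrability of the ICE. The paper's route is shorter and keeps everything in terms of the marginal conditional CDFs, which is the form that feeds directly into Theorem~\ref{theo1} and the signed-area picture of Figure~\ref{fig:1}.

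Your Step~1 is not actually needed for the decomposition. Combined with your pointwise argument, it instead proves as a by-product the CDF representation of $\text{CACE}(w)$ that the paper takes for granted. Both routes share the implicit requirement, which you rightly flag, that $\Omega_Y$ be an interval containing the outcomes' support.
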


We provide examples to illustrate how P-CACE and N-CACE measure the heterogeneity of causal effects. 
{For simplicity, we first consider models without covariates $W$ and compute P-ACE and N-ACE.}

\noindent{\bf Example 1 (Homogeneous and positive ICE).}
Consider the model $Y:=X+U_Y$, $U_Y \sim {\cal N}(0,1)$, 
which has homogeneous and positive {ICE~$=1$}.
We have $\text{\normalfont ACE}=\mathbb{E}[Y_1]-\mathbb{E}[Y_0]=1$,
$\text{\normalfont P-ACE}
=\int_{-\infty}^{\infty} \mathbb{P}(U_Y < y \leq U_Y+1)dy=1$, and
$\text{\normalfont N-ACE}
=\int_{-\infty}^{\infty} \mathbb{P}(U_Y+1 < y \leq U_Y)dy=0$.
{The results are consistent with the setting that there do not exist negatively affected subjects. }

\noindent{\bf Example 2 (Heterogeneous ICE).}
We consider the model $Y:=XU_Y$, $U_Y \sim {\cal N}(0,1)$, where half of the subjects have positive ICE and the other half have negative ICE.
Under this setting, 
$\text{\normalfont ACE}=\mathbb{E}[Y_1]-\mathbb{E}[Y_0]=0$,
$\text{\normalfont P-ACE}
=\int_{0}^{\infty} \mathbb{P}(y \leq U_Y)dy>0$, and
$\text{\normalfont N-ACE}
=\int_{0}^{\infty} \mathbb{P}(U_Y < -y)dy>0$. In addition, we have $\text{\normalfont P-ACE}=\text{\normalfont N-ACE}$ 
since $\mathbb{P}(y \leq U_Y)=\mathbb{P}(U_Y < -y)$ holds for $U_Y \sim {\cal N}(0,1)$ for any $y \in [0,\infty]$.  
{Even if ACE is equal to 0,   P-ACE and N-ACE are able to measure the heterogeneity of causal effects over  positively and negatively affected subjects.}

\begin{figure}
\vspace{-0cm}
    \centering
\begin{tikzpicture}[x=1pt,y=1pt]
\definecolor{fillColor}{RGB}{255,255,255}
\path[use as bounding box,fill=fillColor,fill opacity=0.00] (0,0) rectangle (252.94,216.81);
\begin{scope}
\path[clip] ( 49.20, 61.20) rectangle (227.75,167.61);
\definecolor{drawColor}{RGB}{0,0,0}

\path[draw=drawColor,line width= 0.4pt,line join=round,line cap=round] ( 55.81, 97.98) --
	( 57.48, 97.98) --
	( 59.15, 97.98) --
	( 60.82, 97.98) --
	( 62.49, 97.98) --
	( 64.16, 97.98) --
	( 65.83, 97.98) --
	( 67.50, 97.98) --
	( 69.17, 97.98) --
	( 70.84, 97.98) --
	( 72.51, 97.98) --
	( 74.18, 97.98) --
	( 75.85, 97.98) --
	( 77.52, 97.98) --
	( 79.19, 97.98) --
	( 80.86, 97.98) --
	( 82.53, 97.98) --
	( 84.20, 97.98) --
	( 85.87, 97.98) --
	( 87.54, 97.98) --
	( 89.21, 97.98) --
	( 90.88, 97.98) --
	( 92.55, 97.98) --
	( 94.22, 97.98) --
	( 95.89, 97.98) --
	( 97.56, 97.98) --
	( 99.23, 97.98) --
	(100.90, 97.98) --
	(102.57, 97.98) --
	(104.24, 97.98) --
	(105.91, 97.98) --
	(107.58, 97.98) --
	(109.25, 97.98) --
	(110.92, 97.98) --
	(112.59, 97.98) --
	(114.26, 97.98) --
	(115.93, 97.98) --
	(117.60, 97.98) --
	(119.27, 97.98) --
	(120.94, 97.98) --
	(122.61, 97.98) --
	(124.28, 97.98) --
	(125.95, 97.98) --
	(127.62, 97.98) --
	(129.29, 97.98) --
	(130.96, 97.98) --
	(132.63, 97.98) --
	(134.30, 97.98) --
	(135.97, 97.98) --
	(137.64, 97.98) --
	(139.31, 97.98) --
	(140.98, 97.98) --
	(142.65, 97.98) --
	(144.32, 97.98) --
	(145.99, 97.98) --
	(147.66, 97.98) --
	(149.33, 97.98) --
	(151.00, 97.98) --
	(152.67, 97.98) --
	(154.34, 97.98) --
	(156.01, 97.98) --
	(157.68, 97.98) --
	(159.35, 97.98) --
	(161.02, 97.98) --
	(162.69, 97.98) --
	(164.36, 97.98) --
	(166.03, 97.98) --
	(167.70, 97.98) --
	(169.37, 97.98) --
	(171.04, 97.98) --
	(172.71, 97.98) --
	(174.38, 97.98) --
	(176.05, 97.98) --
	(177.71, 97.98) --
	(179.38, 97.98) --
	(181.05, 97.98) --
	(182.72, 97.98) --
	(184.39, 97.98) --
	(186.06, 97.98) --
	(187.73, 97.98) --
	(189.40, 97.98) --
	(191.07, 97.98) --
	(192.74, 97.98) --
	(194.41, 97.98) --
	(196.08, 97.98) --
	(197.75, 97.98) --
	(199.42, 97.98) --
	(201.09, 97.98) --
	(202.76, 97.98) --
	(204.43, 97.98) --
	(206.10, 97.98) --
	(207.77, 97.98) --
	(209.44, 97.98) --
	(211.11, 97.98) --
	(212.78, 97.98) --
	(214.45, 97.98) --
	(216.12, 97.98) --
	(217.79, 97.98) --
	(219.46, 97.98) --
	(221.13, 97.98);
\end{scope}
\begin{scope}
\path[clip] (  0.00,  0.00) rectangle (252.94,216.81);
\definecolor{drawColor}{RGB}{0,0,0}

\path[draw=drawColor,line width= 0.4pt,line join=round,line cap=round] ( 54.14, 61.20) -- (221.13, 61.20);

\path[draw=drawColor,line width= 0.4pt,line join=round,line cap=round] ( 54.14, 61.20) -- ( 54.14, 55.20);

\path[draw=drawColor,line width= 0.4pt,line join=round,line cap=round] ( 87.54, 61.20) -- ( 87.54, 55.20);

\path[draw=drawColor,line width= 0.4pt,line join=round,line cap=round] (120.94, 61.20) -- (120.94, 55.20);

\path[draw=drawColor,line width= 0.4pt,line join=round,line cap=round] (154.34, 61.20) -- (154.34, 55.20);

\path[draw=drawColor,line width= 0.4pt,line join=round,line cap=round] (187.73, 61.20) -- (187.73, 55.20);

\path[draw=drawColor,line width= 0.4pt,line join=round,line cap=round] (221.13, 61.20) -- (221.13, 55.20);

\node[text=drawColor,anchor=base,inner sep=0pt, outer sep=0pt, scale=  1.00] at ( 54.14, 39.60) {0};

\node[text=drawColor,anchor=base,inner sep=0pt, outer sep=0pt, scale=  1.00] at ( 87.54, 39.60) {2};

\node[text=drawColor,anchor=base,inner sep=0pt, outer sep=0pt, scale=  1.00] at (120.94, 39.60) {4};

\node[text=drawColor,anchor=base,inner sep=0pt, outer sep=0pt, scale=  1.00] at (154.34, 39.60) {6};

\node[text=drawColor,anchor=base,inner sep=0pt, outer sep=0pt, scale=  1.00] at (187.73, 39.60) {8};

\node[text=drawColor,anchor=base,inner sep=0pt, outer sep=0pt, scale=  1.00] at (221.13, 39.60) {10};

\path[draw=drawColor,line width= 0.4pt,line join=round,line cap=round] ( 49.20, 65.14) -- ( 49.20,163.67);

\path[draw=drawColor,line width= 0.4pt,line join=round,line cap=round] ( 49.20, 65.14) -- ( 43.20, 65.14);

\path[draw=drawColor,line width= 0.4pt,line join=round,line cap=round] ( 49.20, 81.56) -- ( 43.20, 81.56);

\path[draw=drawColor,line width= 0.4pt,line join=round,line cap=round] ( 49.20, 97.98) -- ( 43.20, 97.98);

\path[draw=drawColor,line width= 0.4pt,line join=round,line cap=round] ( 49.20,114.40) -- ( 43.20,114.40);

\path[draw=drawColor,line width= 0.4pt,line join=round,line cap=round] ( 49.20,130.83) -- ( 43.20,130.83);

\path[draw=drawColor,line width= 0.4pt,line join=round,line cap=round] ( 49.20,147.25) -- ( 43.20,147.25);

\path[draw=drawColor,line width= 0.4pt,line join=round,line cap=round] ( 49.20,163.67) -- ( 43.20,163.67);

\node[text=drawColor,rotate= 90.00,anchor=base,inner sep=0pt, outer sep=0pt, scale=  1.00] at ( 34.80, 65.14) {-1};

\node[text=drawColor,rotate= 90.00,anchor=base,inner sep=0pt, outer sep=0pt, scale=  1.00] at ( 34.80, 81.56) {0};

\node[text=drawColor,rotate= 90.00,anchor=base,inner sep=0pt, outer sep=0pt, scale=  1.00] at ( 34.80, 97.98) {1};

\node[text=drawColor,rotate= 90.00,anchor=base,inner sep=0pt, outer sep=0pt, scale=  1.00] at ( 34.80,114.40) {2};

\node[text=drawColor,rotate= 90.00,anchor=base,inner sep=0pt, outer sep=0pt, scale=  1.00] at ( 34.80,130.83) {3};

\node[text=drawColor,rotate= 90.00,anchor=base,inner sep=0pt, outer sep=0pt, scale=  1.00] at ( 34.80,147.25) {4};

\node[text=drawColor,rotate= 90.00,anchor=base,inner sep=0pt, outer sep=0pt, scale=  1.00] at ( 34.80,163.67) {5};

\path[draw=drawColor,line width= 0.4pt,line join=round,line cap=round] ( 49.20, 61.20) --
	(227.75, 61.20) --
	(227.75,167.61) --
	( 49.20,167.61) --
	cycle;
\end{scope}
\begin{scope}
\path[clip] ( 49.20, 61.20) rectangle (227.75,167.61);
\definecolor{drawColor}{RGB}{0,0,0}

\path[draw=drawColor,line width= 0.4pt,dash pattern=on 4pt off 4pt ,line join=round,line cap=round] ( 55.81, 99.06) --
	( 57.48, 98.70) --
	( 59.15, 98.43) --
	( 60.82, 98.09) --
	( 62.49, 98.14) --
	( 64.16, 97.91) --
	( 65.83, 97.97) --
	( 67.50, 97.91) --
	( 69.17, 97.97) --
	( 70.84, 97.99) --
	( 72.51, 97.93) --
	( 74.18, 97.99) --
	( 75.85, 98.02) --
	( 77.52, 98.00) --
	( 79.19, 98.13) --
	( 80.86, 98.12) --
	( 82.53, 98.37) --
	( 84.20, 98.64) --
	( 85.87, 99.06) --
	( 87.54, 99.39) --
	( 89.21, 99.81) --
	( 90.88,100.06) --
	( 92.55,100.66) --
	( 94.22,101.18) --
	( 95.89,101.69) --
	( 97.56,102.28) --
	( 99.23,102.82) --
	(100.90,103.49) --
	(102.57,103.96) --
	(104.24,104.55) --
	(105.91,105.08) --
	(107.58,105.67) --
	(109.25,106.45) --
	(110.92,106.90) --
	(112.59,107.41) --
	(114.26,107.99) --
	(115.93,108.63) --
	(117.60,109.34) --
	(119.27,109.94) --
	(120.94,110.52) --
	(122.61,111.07) --
	(124.28,111.81) --
	(125.95,112.52) --
	(127.62,112.94) --
	(129.29,113.61) --
	(130.96,114.27) --
	(132.63,114.72) --
	(134.30,115.62) --
	(135.97,116.16) --
	(137.64,116.76) --
	(139.31,117.42) --
	(140.98,118.12) --
	(142.65,118.61) --
	(144.32,119.42) --
	(145.99,120.31) --
	(147.66,120.70) --
	(149.33,121.17) --
	(151.00,121.93) --
	(152.67,122.58) --
	(154.34,123.28) --
	(156.01,123.85) --
	(157.68,124.68) --
	(159.35,125.05) --
	(161.02,125.69) --
	(162.69,126.35) --
	(164.36,127.15) --
	(166.03,127.65) --
	(167.70,128.56) --
	(169.37,128.82) --
	(171.04,129.44) --
	(172.71,130.37) --
	(174.38,130.65) --
	(176.05,131.39) --
	(177.71,132.31) --
	(179.38,132.93) --
	(181.05,133.54) --
	(182.72,134.30) --
	(184.39,134.69) --
	(186.06,135.58) --
	(187.73,136.04) --
	(189.40,136.88) --
	(191.07,137.34) --
	(192.74,138.16) --
	(194.41,138.60) --
	(196.08,139.58) --
	(197.75,139.96) --
	(199.42,140.89) --
	(201.09,141.38) --
	(202.76,141.99) --
	(204.43,142.57) --
	(206.10,143.09) --
	(207.77,144.15) --
	(209.44,144.44) --
	(211.11,145.20) --
	(212.78,145.90) --
	(214.45,146.41) --
	(216.12,147.06) --
	(217.79,147.77) --
	(219.46,148.22) --
	(221.13,149.24);
\end{scope}
\begin{scope}
\path[clip] (  0.00,  0.00) rectangle (252.94,216.81);
\definecolor{drawColor}{RGB}{0,0,0}

\path[draw=drawColor,line width= 0.4pt,line join=round,line cap=round] ( 54.14, 61.20) -- (221.13, 61.20);

\path[draw=drawColor,line width= 0.4pt,line join=round,line cap=round] ( 54.14, 61.20) -- ( 54.14, 55.20);

\path[draw=drawColor,line width= 0.4pt,line join=round,line cap=round] ( 87.54, 61.20) -- ( 87.54, 55.20);

\path[draw=drawColor,line width= 0.4pt,line join=round,line cap=round] (120.94, 61.20) -- (120.94, 55.20);

\path[draw=drawColor,line width= 0.4pt,line join=round,line cap=round] (154.34, 61.20) -- (154.34, 55.20);

\path[draw=drawColor,line width= 0.4pt,line join=round,line cap=round] (187.73, 61.20) -- (187.73, 55.20);

\path[draw=drawColor,line width= 0.4pt,line join=round,line cap=round] (221.13, 61.20) -- (221.13, 55.20);

\node[text=drawColor,anchor=base,inner sep=0pt, outer sep=0pt, scale=  1.00] at ( 54.14, 39.60) {0};

\node[text=drawColor,anchor=base,inner sep=0pt, outer sep=0pt, scale=  1.00] at ( 87.54, 39.60) {2};

\node[text=drawColor,anchor=base,inner sep=0pt, outer sep=0pt, scale=  1.00] at (120.94, 39.60) {4};

\node[text=drawColor,anchor=base,inner sep=0pt, outer sep=0pt, scale=  1.00] at (154.34, 39.60) {6};

\node[text=drawColor,anchor=base,inner sep=0pt, outer sep=0pt, scale=  1.00] at (187.73, 39.60) {8};

\node[text=drawColor,anchor=base,inner sep=0pt, outer sep=0pt, scale=  1.00] at (221.13, 39.60) {10};

\path[draw=drawColor,line width= 0.4pt,line join=round,line cap=round] ( 49.20, 65.14) -- ( 49.20,163.67);

\path[draw=drawColor,line width= 0.4pt,line join=round,line cap=round] ( 49.20, 65.14) -- ( 43.20, 65.14);

\path[draw=drawColor,line width= 0.4pt,line join=round,line cap=round] ( 49.20, 81.56) -- ( 43.20, 81.56);

\path[draw=drawColor,line width= 0.4pt,line join=round,line cap=round] ( 49.20, 97.98) -- ( 43.20, 97.98);

\path[draw=drawColor,line width= 0.4pt,line join=round,line cap=round] ( 49.20,114.40) -- ( 43.20,114.40);

\path[draw=drawColor,line width= 0.4pt,line join=round,line cap=round] ( 49.20,130.83) -- ( 43.20,130.83);

\path[draw=drawColor,line width= 0.4pt,line join=round,line cap=round] ( 49.20,147.25) -- ( 43.20,147.25);

\path[draw=drawColor,line width= 0.4pt,line join=round,line cap=round] ( 49.20,163.67) -- ( 43.20,163.67);

\node[text=drawColor,rotate= 90.00,anchor=base,inner sep=0pt, outer sep=0pt, scale=  1.00] at ( 34.80, 65.14) {-1};

\node[text=drawColor,rotate= 90.00,anchor=base,inner sep=0pt, outer sep=0pt, scale=  1.00] at ( 34.80, 81.56) {0};

\node[text=drawColor,rotate= 90.00,anchor=base,inner sep=0pt, outer sep=0pt, scale=  1.00] at ( 34.80, 97.98) {1};

\node[text=drawColor,rotate= 90.00,anchor=base,inner sep=0pt, outer sep=0pt, scale=  1.00] at ( 34.80,114.40) {2};

\node[text=drawColor,rotate= 90.00,anchor=base,inner sep=0pt, outer sep=0pt, scale=  1.00] at ( 34.80,130.83) {3};

\node[text=drawColor,rotate= 90.00,anchor=base,inner sep=0pt, outer sep=0pt, scale=  1.00] at ( 34.80,147.25) {4};

\node[text=drawColor,rotate= 90.00,anchor=base,inner sep=0pt, outer sep=0pt, scale=  1.00] at ( 34.80,163.67) {5};

\path[draw=drawColor,line width= 0.4pt,line join=round,line cap=round] ( 49.20, 61.20) --
	(227.75, 61.20) --
	(227.75,167.61) --
	( 49.20,167.61) --
	cycle;
\end{scope}
\begin{scope}
\path[clip] ( 49.20, 61.20) rectangle (227.75,167.61);
\definecolor{drawColor}{RGB}{0,0,0}

\path[draw=drawColor,line width= 0.4pt,dash pattern=on 1pt off 3pt ,line join=round,line cap=round] ( 55.81, 82.56) --
	( 57.48, 82.23) --
	( 59.15, 81.95) --
	( 60.82, 81.76) --
	( 62.49, 81.63) --
	( 64.16, 81.58) --
	( 65.83, 81.56) --
	( 67.50, 81.56) --
	( 69.17, 81.56) --
	( 70.84, 81.56) --
	( 72.51, 81.56) --
	( 74.18, 81.56) --
	( 75.85, 81.56) --
	( 77.52, 81.58) --
	( 79.19, 81.63) --
	( 80.86, 81.76) --
	( 82.53, 81.96) --
	( 84.20, 82.23) --
	( 85.87, 82.55) --
	( 87.54, 82.93) --
	( 89.21, 83.34) --
	( 90.88, 83.82) --
	( 92.55, 84.27) --
	( 94.22, 84.76) --
	( 95.89, 85.26) --
	( 97.56, 85.77) --
	( 99.23, 86.39) --
	(100.90, 86.87) --
	(102.57, 87.48) --
	(104.24, 87.96) --
	(105.91, 88.61) --
	(107.58, 89.25) --
	(109.25, 89.90) --
	(110.92, 90.43) --
	(112.59, 90.94) --
	(114.26, 91.70) --
	(115.93, 92.20) --
	(117.60, 92.84) --
	(119.27, 93.42) --
	(120.94, 94.06) --
	(122.61, 94.66) --
	(124.28, 95.43) --
	(125.95, 95.92) --
	(127.62, 96.68) --
	(129.29, 97.28) --
	(130.96, 97.91) --
	(132.63, 98.35) --
	(134.30, 99.15) --
	(135.97, 99.71) --
	(137.64,100.37) --
	(139.31,101.01) --
	(140.98,101.64) --
	(142.65,102.13) --
	(144.32,102.87) --
	(145.99,103.45) --
	(147.66,104.26) --
	(149.33,104.75) --
	(151.00,105.51) --
	(152.67,106.05) --
	(154.34,106.77) --
	(156.01,107.33) --
	(157.68,108.05) --
	(159.35,108.53) --
	(161.02,109.32) --
	(162.69,109.96) --
	(164.36,110.45) --
	(166.03,111.34) --
	(167.70,111.97) --
	(169.37,112.49) --
	(171.04,113.11) --
	(172.71,113.73) --
	(174.38,114.40) --
	(176.05,115.47) --
	(177.71,115.82) --
	(179.38,116.21) --
	(181.05,117.10) --
	(182.72,117.82) --
	(184.39,118.37) --
	(186.06,119.11) --
	(187.73,119.66) --
	(189.40,120.49) --
	(191.07,120.93) --
	(192.74,121.71) --
	(194.41,122.25) --
	(196.08,123.03) --
	(197.75,123.61) --
	(199.42,124.06) --
	(201.09,124.89) --
	(202.76,125.50) --
	(204.43,126.18) --
	(206.10,126.79) --
	(207.77,127.34) --
	(209.44,128.26) --
	(211.11,128.88) --
	(212.78,129.31) --
	(214.45,130.09) --
	(216.12,130.97) --
	(217.79,131.42) --
	(219.46,132.05) --
	(221.13,132.87);
\end{scope}
\begin{scope}
\path[clip] (  0.00,  0.00) rectangle (252.94,216.81);
\definecolor{drawColor}{RGB}{0,0,0}

\path[draw=drawColor,line width= 0.4pt,line join=round,line cap=round] ( 54.14, 61.20) -- (221.13, 61.20);

\path[draw=drawColor,line width= 0.4pt,line join=round,line cap=round] ( 54.14, 61.20) -- ( 54.14, 55.20);

\path[draw=drawColor,line width= 0.4pt,line join=round,line cap=round] ( 87.54, 61.20) -- ( 87.54, 55.20);

\path[draw=drawColor,line width= 0.4pt,line join=round,line cap=round] (120.94, 61.20) -- (120.94, 55.20);

\path[draw=drawColor,line width= 0.4pt,line join=round,line cap=round] (154.34, 61.20) -- (154.34, 55.20);

\path[draw=drawColor,line width= 0.4pt,line join=round,line cap=round] (187.73, 61.20) -- (187.73, 55.20);

\path[draw=drawColor,line width= 0.4pt,line join=round,line cap=round] (221.13, 61.20) -- (221.13, 55.20);

\node[text=drawColor,anchor=base,inner sep=0pt, outer sep=0pt, scale=  1.00] at ( 54.14, 39.60) {0};

\node[text=drawColor,anchor=base,inner sep=0pt, outer sep=0pt, scale=  1.00] at ( 87.54, 39.60) {2};

\node[text=drawColor,anchor=base,inner sep=0pt, outer sep=0pt, scale=  1.00] at (120.94, 39.60) {4};

\node[text=drawColor,anchor=base,inner sep=0pt, outer sep=0pt, scale=  1.00] at (154.34, 39.60) {6};

\node[text=drawColor,anchor=base,inner sep=0pt, outer sep=0pt, scale=  1.00] at (187.73, 39.60) {8};

\node[text=drawColor,anchor=base,inner sep=0pt, outer sep=0pt, scale=  1.00] at (221.13, 39.60) {10};

\path[draw=drawColor,line width= 0.4pt,line join=round,line cap=round] ( 49.20, 65.14) -- ( 49.20,163.67);

\path[draw=drawColor,line width= 0.4pt,line join=round,line cap=round] ( 49.20, 65.14) -- ( 43.20, 65.14);

\path[draw=drawColor,line width= 0.4pt,line join=round,line cap=round] ( 49.20, 81.56) -- ( 43.20, 81.56);

\path[draw=drawColor,line width= 0.4pt,line join=round,line cap=round] ( 49.20, 97.98) -- ( 43.20, 97.98);

\path[draw=drawColor,line width= 0.4pt,line join=round,line cap=round] ( 49.20,114.40) -- ( 43.20,114.40);

\path[draw=drawColor,line width= 0.4pt,line join=round,line cap=round] ( 49.20,130.83) -- ( 43.20,130.83);

\path[draw=drawColor,line width= 0.4pt,line join=round,line cap=round] ( 49.20,147.25) -- ( 43.20,147.25);

\path[draw=drawColor,line width= 0.4pt,line join=round,line cap=round] ( 49.20,163.67) -- ( 43.20,163.67);

\node[text=drawColor,rotate= 90.00,anchor=base,inner sep=0pt, outer sep=0pt, scale=  1.00] at ( 34.80, 65.14) {-1};

\node[text=drawColor,rotate= 90.00,anchor=base,inner sep=0pt, outer sep=0pt, scale=  1.00] at ( 34.80, 81.56) {0};

\node[text=drawColor,rotate= 90.00,anchor=base,inner sep=0pt, outer sep=0pt, scale=  1.00] at ( 34.80, 97.98) {1};

\node[text=drawColor,rotate= 90.00,anchor=base,inner sep=0pt, outer sep=0pt, scale=  1.00] at ( 34.80,114.40) {2};

\node[text=drawColor,rotate= 90.00,anchor=base,inner sep=0pt, outer sep=0pt, scale=  1.00] at ( 34.80,130.83) {3};

\node[text=drawColor,rotate= 90.00,anchor=base,inner sep=0pt, outer sep=0pt, scale=  1.00] at ( 34.80,147.25) {4};

\node[text=drawColor,rotate= 90.00,anchor=base,inner sep=0pt, outer sep=0pt, scale=  1.00] at ( 34.80,163.67) {5};

\path[draw=drawColor,line width= 0.4pt,line join=round,line cap=round] ( 49.20, 61.20) --
	(227.75, 61.20) --
	(227.75,167.61) --
	( 49.20,167.61) --
	cycle;

\begin{scope}
  \path[draw=drawColor,line width=0.4pt,fill=fillColor,fill opacity=0.85]
    (58,162) rectangle (130,138);

  \draw[line width=0.4pt] (62,156) -- (82,156);
  \node[anchor=west] at (86,156) {\footnotesize CACE};

  \draw[line width=0.4pt,dash pattern=on 1pt off 3pt] (62,149) -- (82,149);
  \node[anchor=west] at (86,149) {\footnotesize N-CACE};

  \draw[line width=0.4pt,dash pattern=on 4pt off 4pt] (62,142) -- (82,142);
  \node[anchor=west] at (86,142) {\footnotesize P-CACE};
\end{scope}

\node[text=drawColor,anchor=base,inner sep=0pt,outer sep=0pt]
  at (137.6, 25) {$W$};

\end{scope}
\end{tikzpicture}
\vspace{-0cm}
\caption{{\bf (Decomposition by P-CACE and N-CACE.)} The solid line is CACE, the dotted line is N-CACE, and the dashed line is P-CACE varying from $W=0$ to $W=10$.
The x-axis represents the value of $W$, and the y-axis represents the values of each measure. }
\label{fig:d1}
\end{figure}
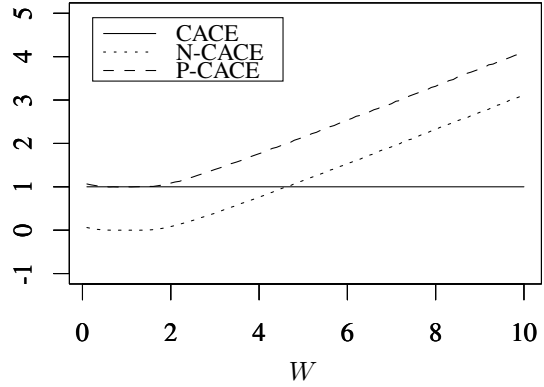

\noindent{\bf Example 3 (Homogeneous and zero ICE).}
We consider the model $Y:=U_Y$, $U_Y \sim {\cal N}(0,1)$, where all subjects have zero ICE.
Under this setting,
$\text{\normalfont ACE}=\mathbb{E}[Y_1]-\mathbb{E}[Y_0]=0$,
$\text{\normalfont P-ACE}
=\int_{-\infty}^{\infty} \mathbb{P}(U_Y < y \leq U_Y)dy=0$, and
$\text{\normalfont N-ACE}
=\int_{-\infty}^{\infty} \mathbb{P}(U_Y < y \leq U_Y)dy=0$.

ACEs in both Examples 2 and 3 are equal to 0; however, 
they have completely different patterns of effect heterogeneity.
{Our newly introduced P-ACE and N-ACE are able to measure  the  heterogeneity unexplained by ACE.}

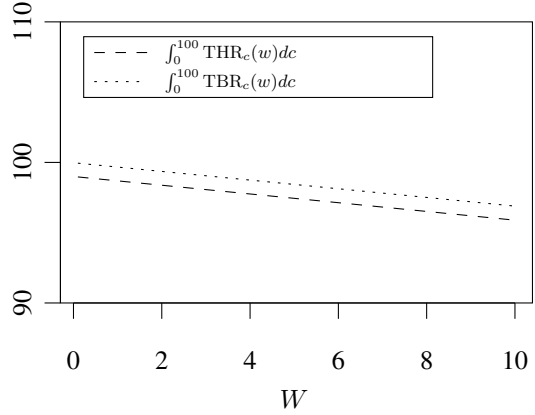
\begin{figure}
\vspace{-0cm}
    \centering
\scalebox{1}{
\begin{tikzpicture}[x=1pt,y=1pt]

\definecolor{fillColor}{RGB}{255,255,255}
\definecolor{drawColor}{RGB}{0,0,0}

\def\YLOW{61.20}
\def\YHIGH{167.61}

\def\yminData{149.725}   
\def\ymaxData{168.315}   

\pgfmathsetmacro{\yscale}{(\YHIGH-\YLOW)/(\ymaxData-\yminData)}
\pgfmathsetmacro{\yshift}{\YLOW - \yscale*\yminData}

\path[use as bounding box,fill=fillColor,fill opacity=0]
  (0,0) rectangle (252.94,216.81);

\begin{scope}
\path[clip] (0,0) rectangle (252.94,216.81);

\path[draw=drawColor,line width=0.4pt] (54.14,61.20) -- (221.13,61.20);

\foreach \x in {54.14,87.54,120.94,154.34,187.73,221.13}
  \path[draw=drawColor,line width=0.4pt] (\x,61.20) -- (\x,55.20);

\node at (54.14,39.60) {0};
\node at (87.54,39.60) {2};
\node at (120.94,39.60) {4};
\node at (154.34,39.60) {6};
\node at (187.73,39.60) {8};
\node at (221.13,39.60) {10};

\path[draw=drawColor,line width=0.4pt] (49.20,\YLOW) -- (49.20,\YHIGH);

\path[draw=drawColor,line width=0.4pt] (49.20,\YLOW) -- (43.20,\YLOW);
\path[draw=drawColor,line width=0.4pt] (49.20,{(\YLOW+\YHIGH)/2}) -- (43.20,{(\YLOW+\YHIGH)/2});
\path[draw=drawColor,line width=0.4pt] (49.20,\YHIGH) -- (43.20,\YHIGH);

\node[rotate=90] at (34.80,\YLOW) {90};
\node[rotate=90] at (34.80,{(\YLOW+\YHIGH)/2}) {100};
\node[rotate=90] at (34.80,\YHIGH) {110};

\path[draw=drawColor,line width=0.4pt]
  (49.20,\YLOW) -- (227.75,\YLOW) --
  (227.75,\YHIGH) -- (49.20,\YHIGH) -- cycle;

\node at (137.6,25) {$W$};

\end{scope}

\begin{scope}
\path[clip] (49.20,\YLOW) rectangle (227.75,\YHIGH);

\begin{scope}
\pgftransformcm{1}{0}{0}{\yscale}{\pgfpoint{0pt}{\yshift pt}}

\path[draw=drawColor,line width=0.4pt,dash pattern=on 4pt off 4pt]
  (55.81,158.06) --
  (221.13,155.20);
\end{scope}
\end{scope}

\begin{scope}
\path[clip] (49.20,\YLOW) rectangle (227.75,\YHIGH);

\begin{scope}
\pgftransformcm{1}{0}{0}{\yscale}{\pgfpoint{0pt}{\yshift pt}}

\path[draw=drawColor,line width=0.4pt,dash pattern=on 1pt off 3pt]
  (55.81,158.97) --
  (221.13,156.13);
\end{scope}
\end{scope}

\begin{scope}
\path[draw=drawColor,line width=0.4pt,fill=fillColor,fill opacity=0.85]
  (58,162) rectangle (190,139);

\draw[dash pattern=on 4pt off 4pt] (62,156) -- (82,156);
\node[anchor=west,scale=0.75] at (86,156)
  {\footnotesize $\int_{0}^{100}\mathrm{THR}_c(w)dc$};

\draw[dash pattern=on 1pt off 3pt] (62,145) -- (82,145);
\node[anchor=west,scale=0.75] at (86,145)
  {\footnotesize $\int_{0}^{100}\mathrm{TBR}_c(w)dc$};
\end{scope}
\end{tikzpicture}
}
\vspace{-0cm}
\caption{{\bf (Decomposition by $\text{\normalfont THR}_c$ and $\text{\normalfont TBR}_c$.)} 
The dotted line is $\int_{0}^{100}\text{\normalfont TBR}_c(w)dc$, and the dashed line is $\int_{0}^{100}\text{\normalfont THR}_c(w)dc$. 
The x-axis means the value of $W$, and the y-axis means the values of each measure. }
\label{fig:d2}
\end{figure}

We provide an additional example with covariates. 

{
\noindent{\bf Example 4 (Heterogeneous ICE with covariates).}
We consider the model:
\begin{align}
\label{eq119}
Y:=X+\mathbb{I}(X=1)U_Y+\mathbb{I}(X=0)WU_Y , U_Y \sim {\cal N}(0,1).
\end{align}
Under this setting, 
CACE is always $1$ for any $w \in \Omega_W$.
However, P-CACE and N-CACE depend on the value of $W$.
We plot the CACE, P-CACE, and N-CACE in Figure \ref{fig:d1}.
Near $W=0$, P-CACE is almost 1 and N-CACE is almost 0.
As the value of $W$ increases, the values of both P-CACE and N-CACE increase while the difference of P-CACE and N-CACE is always 1.
This synthetic example illustrates that P-CACE and N-CACE can reveal aspect of causal heterogeneity that CACE cannot capture.
}

{
\noindent\textbf{Remark: Comparison with $\text{\normalfont THR}_c$ and $\text{\normalfont TBR}_c$.} 
\citet{Yin2018} introduced heterogeneous measures for a continuous outcome, $\text{\normalfont TBR}_c(w)\defeq\mathbb{P}(Y_1-Y_0>c|W=w)$ and $\text{\normalfont THR}_c(w)\defeq\mathbb{P}(Y_0-Y_1>c|W=w)$ given a constant $c$, and showed the decomposition $\displaystyle\text{\normalfont CACE}(w)=\int_{0}^{\infty}\{\text{\normalfont TBR}_c(w)-\text{\normalfont THR}_c(w)\}dc$. This decomposition is different from ours. To illustrate, in the setting of Eq.~\eqref{eq119}, $\displaystyle\int_{0}^{\infty}\text{\normalfont TBR}_c(w)dc$ and $\displaystyle\int_{0}^{\infty}\text{\normalfont THR}_c(w)dc$ diverge. 
We calculate $\displaystyle\int_{0}^{100}\text{\normalfont TBR}_c(w)dc$ and $\displaystyle\int_{0}^{100}\text{\normalfont THR}_c(w)dc$ for $w \in \Omega_W$ from $W=0$ to $W=10$ and plot them in Figure \ref{fig:d2}. Their decomposition differs substantially from ours given in Figure \ref{fig:d1}. 
}

\subsection{Identification of P-CACE and N-CACE}

P-CACE and N-CACE can be identified by identifying counterfactual probabilities $\mathbb{P}(Y_0 < y \leq Y_1|W=w)$ and $\mathbb{P}(Y_1 < y \leq Y_0|W=w)$.  
The identification of this type of probabilities  is discussed in \citep{Kawakami2024}.
We obtain the following result: 
\begin{restatable}{theorem}{Theoremone}[Identification of P-CACE and N-CACE]
\label{theo1}
Under SCM ${\cal M}$ and Assumptions \ref{ASEXO2}, \ref{MONO2}, and \ref{exi1}, 
P-CACE and N-CACE are identifiable by
\begin{equation}
\begin{aligned}
\label{eq14}
&\text{\normalfont P-CACE}(w)=\int_{\Omega_Y} \max\Big\{\mathbb{P}(Y<y|X=0,W=w)-\mathbb{P}(Y<y|X=1,W=w),0\Big\}dy,
\end{aligned}
\end{equation}
\begin{equation}
\begin{aligned}
\label{eq15}
&\text{\normalfont N-CACE}(w)=\int_{\Omega_Y} \max\Big\{\mathbb{P}(Y<y|X=1,W=w)-\mathbb{P}(Y<y|X=0,W=w),0\Big\}dy.
\end{aligned}
\end{equation}
\end{restatable}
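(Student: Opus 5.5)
The plan is to establish the identity pointwise in $y$, showing that the integrands of P-CACE and N-CACE equal the integrands in Eqs.~\eqref{eq14} and \eqref{eq15} for every $y \in \Omega_Y$. The result then follows by integrating over $\Omega_Y$; Assumption~\ref{exi1} guarantees both sides are finite. Fix $w \in \Omega_W$ and $y \in \Omega_Y$.

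\textbf{Step 1: express the marginal CDFs in terms of the two counterfactual probabilities.} By Lemma~\ref{LEM1},
\begin{equation*}
\mathbb{P}(Y_0<y|W=w)-\mathbb{P}(Y_1<y|W=w)=\mathbb{P}(Y_0<y \leq Y_1|W=w)-\mathbb{P}(Y_1<y\leq Y_0|W=w).
\end{equation*}

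\textbf{Step 2: remove one of the two terms using monotonicity.} Assumption~\ref{MONO2}, applied with $x_0=0$ and $x_1=1$, says at least one of the two terms on the right is zero. Both terms are nonnegative. So if the left side $D(y)$ is positive, the second term must vanish and $\mathbb{P}(Y_0<y\leq Y_1|W=w)=D(y)$. If $D(y)$ is nonpositive, the first term must vanish (when $D(y)<0$ it cannot be positive, and when $D(y)=0$ both terms are equal and one is zero). In every case,
\begin{equation*}
\mathbb{P}(Y_0<y\leq Y_1|W=w)=\max\{D(y),0\} \quad\text{and}\quad \mathbb{P}(Y_1<y\leq Y_0|W=w)=\max\{-D(y),0\}.
\end{equation*}

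\textbf{Step 3: identify the marginal CDFs from observed data.} By consistency of the SCM, $Y=Y_x$ whenever $X=x$. By Assumption~\ref{ASEXO2},
\begin{equation*}
\mathbb{P}(Y_x<y|W=w)=\mathbb{P}(Y_x<y|X=x,W=w)=\mathbb{P}(Y<y|X=x,W=w)
\end{equation*}
for $x\in\{0,1\}$. This requires $\mathbb{P}(X=x|W=w)>0$, a positivity condition implicit in the conditional probabilities being well defined. Substituting into Step 2 gives the integrands of Eqs.~\eqref{eq14} and \eqref{eq15}, and integrating over $y$ completes the proof.

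\textbf{Main obstacle.} The subtle point is Step 2. Assumption~\ref{MONO2} is stated pointwise in $(x_0,x_1,y,w)$, so we only need that, for each fixed $y$, one of the two probabilities vanishes; the zero term may differ across $y$. The case analysis above is organized to use nothing more than this. Beyond that, the argument is routine, and measurability of the integrand in $y$ holds because CDFs are monotone.
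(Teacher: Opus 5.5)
Your proof is correct, and its skeleton matches the paper's: identify $\mathbb{P}(Y_0<y\le Y_1|W=w)$ and $\mathbb{P}(Y_1<y\le Y_0|W=w)$ pointwise in $y$, then substitute into the definitions. The difference is how the pointwise identity is obtained.

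The paper does not prove it. It invokes Lemma~\ref{lem-app}, which is a specialization of Theorem~4.2 of \citet{Kawakami2024} on the conditional PNS. You derive it directly, in three moves:
\begin{itemize}
\item Lemma~\ref{LEM1} writes the CDF difference as $A-B$ with $A,B\ge 0$.
\item Assumption~\ref{MONO2} forces $\min\{A,B\}=0$, hence $A=\max\{A-B,0\}$ and $B=\max\{B-A,0\}$.
\item Assumption~\ref{ASEXO2} together with consistency turns the counterfactual marginals into $\mathbb{P}(Y<y|X=x,W=w)$.
\end{itemize}

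Your route gains self-containment and transparency. It shows that monotonicity is needed only in its pointwise-in-$y$ form, with the vanishing term allowed to change with $y$. It also surfaces the positivity requirement $\mathbb{P}(X=x|W=w)>0$, which the paper leaves implicit. Finally, it carries over verbatim to arbitrary $x_0,x_1$, so it would equally prove Theorem~\ref{theo3}. The paper's route is shorter, but it places all of the substance in an external reference.

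One minor remark: the two integrands are nonnegative and agree for every $y$, so the integrals coincide in $[0,\infty]$ even without Assumption~\ref{exi1}. Finiteness matters for the decomposition in Proposition~\ref{PROP1}, not for identification.
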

{An illustration of the computation of P-CACE and N-CACE in terms of the conditional CDF $\mathbb{P}(Y<y|X=0,W=w)$ and $\mathbb{P}(Y<y|X=1,W=w)$ is provided in Figure~\ref{fig:1}.}

\begin{figure}[tb]
\vspace{-0cm}
\hspace{-0cm}
\centering
\begin{tikzpicture}[x=40pt,y=40pt]
\draw[fill=black, dashed,  fill opacity=0.2] (0,0) .. controls (2,0.1) .. (3,1.5);
\draw[fill=black, dashed,  fill opacity=0.5] (3,1.5) .. controls (4,2.6) .. (6,3); 

  \draw[fill=black,  fill opacity=0.2] (0,0) .. controls (1,1.3) .. (3,1.5);
  \draw[fill=black,  fill opacity=0.5] (3,1.5) .. controls (5,2) .. (6,3);


    \draw[thick, black, ->] (0, -0.5) -- (0, 3.5);
    \draw[thick, black, ->] (-0.5, 0) -- (6, 0)
      node[anchor=west] {$y$};

    \draw[thick, black, ->] (-0.5, 3)node[anchor=east] {$1$} -- (6, 3);

    \draw(-0.25,-0.25)node {0};
\end{tikzpicture}
\vspace{-0cm}
\caption{
Illustration of P-CACE and N-CACE. For the conditional CDF $\mathbb{P}(Y<y|X=0,W=w)$ and $\mathbb{P}(Y<y|X=1,W=w)$ shown as the solid and dashed line respectively, P-CACE$(w)$ is given by the light gray region and  N-CACE$(w)$ by the dark gray region.
The difference of the light and dark gray region gives CACE$(w)$. 
}
\label{fig:1}
\end{figure}
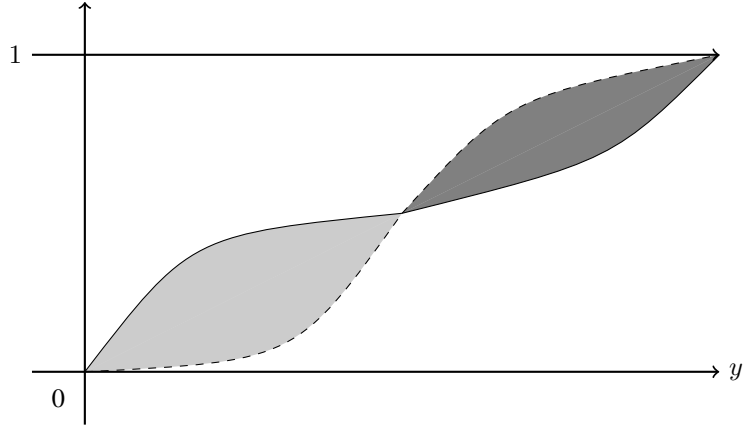
\subsection{Bounding P-CACE and N-CACE}
The identification result in Theorem~\ref{theo1} relies on the monotonicity assumption, which is implausible in some scenarios.   To address such scenarios, 
we provide the following bounding results:
\begin{restatable}{theorem}{Theoremtwo} [Bounding of P-CACE and N-CACE] \label{theo_b1}
Under SCM ${\cal M}$ and Assumptions \ref{ASEXO2} and \ref{exi1}, for any $w \in \Omega_W$, 
the lower bound of P-CACE is given by Eq.~\eqref{eq14} and the upper bound of P-CACE is given by
\begin{equation}
\label{eq11}
\begin{aligned}
\int_{\Omega_Y}\min\left\{
\begin{array}{c}
1-\mathbb{P}(Y<y|X=1,W=w),\\
\mathbb{P}(Y<y|X=0,W=w)
\end{array}
\right\}dy;
\end{aligned}
\end{equation}
and the lower bound of N-CACE is given by Eq.~\eqref{eq15} and the upper bound of N-CACE is given by
\begin{equation}
\label{eq12}
\begin{aligned}
\int_{\Omega_Y}\min\left\{
\begin{array}{c}
1-\mathbb{P}(Y<y|X=0,W=w),\\
\mathbb{P}(Y<y|X=1,W=w)
\end{array}
\right\}dy.
\end{aligned}
\end{equation}
These are sharp bounds given solely the conditional distribution $\mathbb{P}(Y<y|X=x,W=w)$.
\end{restatable}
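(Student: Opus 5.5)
The plan is to reduce everything to pointwise Fréchet--Hoeffding bounds on the integrand $\mathbb{P}(Y_0<y\le Y_1\mid W=w)$, and then integrate in $y$.

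\textbf{Step 1 (identification of the marginals).} Under Assumption~\ref{ASEXO2}, $\mathbb{P}(Y_x<y\mid W=w)=\mathbb{P}(Y<y\mid X=x,W=w)$. Write $F_x(y)$ for this quantity. For each fixed $y$, the event $\{Y_0<y\le Y_1\}$ is the intersection $A\cap B^c$, where $A=\{Y_0<y\}$ and $B=\{Y_1<y\}$. Conditional on $W=w$, we have $\mathbb{P}(A)=F_0(y)$ and $\mathbb{P}(B^c)=1-F_1(y)$.

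\textbf{Step 2 (pointwise bounds).} The Fréchet inequalities give
\[
\max\{F_0(y)-F_1(y),0\}\le \mathbb{P}(Y_0<y\le Y_1\mid W=w)\le \min\{F_0(y),1-F_1(y)\}.
\]
The lower bound also follows from Lemma~\ref{LEM1}, since $\mathbb{P}(Y_1<y\le Y_0\mid W=w)\ge 0$. Integrating over $\Omega_Y$ (monotonicity of the integral, with finiteness from Assumption~\ref{exi1} for the quantity itself) yields Eq.~\eqref{eq14} as the lower bound and Eq.~\eqref{eq11} as the upper bound. The N-CACE case is symmetric: swap the roles of $0$ and $1$.

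\textbf{Step 3 (sharpness).} This is the main obstacle. Pointwise Fréchet sharpness at each $y$ does not automatically give sharpness of the integral, because a single joint law of $(Y_0,Y_1)\mid W=w$ must attain the bound simultaneously for all $y$. It must also be realizable in an SCM consistent with Assumption~\ref{ASEXO2} and the observed conditionals.

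\emph{Lower bound.} Use the comonotone coupling $Y_0=F_0^{-1}(U)$ and $Y_1=F_1^{-1}(U)$ with a single $U\sim\mathrm{Unif}(0,1)$ independent of $X$ given $W$. This is exactly the monotone model of Assumption~\ref{MONO2}. Under this coupling, for each $y$ only one of the two crossing probabilities is nonzero, so by Lemma~\ref{LEM1} the integrand equals $\max\{F_0-F_1,0\}$ for every $y$. Hence Theorem~\ref{theo1}'s expression is attained, and the lower bound is sharp.

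\emph{Upper bound.} Use the countermonotone coupling $Y_0=F_0^{-1}(U)$ and $Y_1=F_1^{-1}(1-U)$. Then $\{Y_0<y\}=\{U<F_0(y)\}$ and $\{Y_1\ge y\}=\{U<1-F_1(y)\}$, up to the usual quantile/left-continuity care. Both events are initial segments of $U$, so their intersection has probability $\min\{F_0(y),1-F_1(y)\}$ for every $y$ simultaneously, and the bound is attained.

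In both cases, take $U_Y=U$ independent of $(U_X,U_W)$ and set $f_Y(x,w,u)=F_{x\mid w}^{-1}(u)$ or its reflected version. This reproduces the observed $\mathbb{P}(Y<y\mid X=x,W=w)$, which completes the sharpness claim given only these conditional distributions. The main care point is handling ties and atoms with the strict/non-strict inequalities in the quantile representation.
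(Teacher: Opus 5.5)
Your proof is correct and follows the paper's route: pointwise Fr\'echet bounds on $\mathbb{P}(Y_0<y\le Y_1\mid W=w)$, integrated over $\Omega_Y$, with sharpness coming from the comonotone and countermonotone couplings, each of which attains its bound at every $y$ simultaneously. Your assignment of the couplings is the correct one: the comonotone coupling attains the lower bound, consistent with Theorem~\ref{theo1}, and the countermonotone coupling attains the upper bound, whereas the paper's one-line sharpness remark states them the other way around.
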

Additionally, when 
the distribution $\mathbb{P}(Y<y,X=x|W=w)$ is available, we have the following results: 
\begin{restatable}{theorem}{Theoremthree}[Bounding of P-CACE and N-CACE]
\label{theo_boud_binary}
Under SCM ${\cal M}$ 
and Assumptions \ref{ASEXO2} and \ref{exi1}, {given $\mathbb{P}(Y<y,X=x|W=w)$ where $\Omega_X=\{0,1\}$}, we have,  for any $w \in \Omega_W$, P-CACE and N-CACE are bounded by
\begin{equation}
\label{eq33}
\begin{aligned}
&\int_{\Omega_Y} l^P(y;w)dy\leq \text{\normalfont P-CACE}(w)\leq \int_{\Omega_Y} u^P(y;w)dy,
\end{aligned}
\end{equation}
\begin{equation}
\begin{aligned}
\label{eq34}
&\int_{\Omega_Y} l^N(y;w)dy\leq \text{\normalfont N-CACE}(w)\leq \int_{\Omega_Y} u^N(y;w)dy,
\end{aligned}
\end{equation}
where
\begin{equation}
\begin{aligned}
&l^P(y;w)=\max
\left\{
\begin{array}{c}
    0,\\
    \mathbb{P}(Y<y|X=0,W=w)-\mathbb{P}(Y<y|X=1,W=w),\\
    \mathbb{P}(Y<y|X=0,W=w)-\mathbb{P}(Y<y|W=w),\\
    \mathbb{P}(Y<y|W=w)-\mathbb{P}(Y<y|X=1,W=w)
\end{array}
\right\},
\end{aligned}
\end{equation}
\begin{equation}
\begin{aligned}
&u^P(y;w)=\min
\left\{
\begin{array}{c}
1-\mathbb{P}(Y<y|X=1,W=w),\\
\mathbb{P}(Y<y|X=0,W=w),\\
1-\mathbb{P}(Y<y,X=1|W=w)\\
\hspace{3cm}+\mathbb{P}(Y<y,X=0|W=w),\\
\mathbb{P}(Y<y|X=0,W=w)-\mathbb{P}(Y<y|X=1,W=w)\\
\hspace{0cm}+\mathbb{P}(Y<y,X=1|W=w)\\
\hspace{2cm}+1-\mathbb{P}(Y<y,X=0|W=w)
\end{array}
\right\},
\end{aligned}
\end{equation}
\begin{equation}
\begin{aligned}
&l^N(y;w)=\max
\left\{
\begin{array}{c}
    0,\\
    \mathbb{P}(Y<y|X=1,W=w)-\mathbb{P}(Y<y|X=0,W=w),\\
    \mathbb{P}(Y<y|X=1,W=w)-\mathbb{P}(Y<y|W=w),\\
    \mathbb{P}(Y<y|W=w)-\mathbb{P}(Y<y|X=0,W=w)
\end{array}
\right\},
\end{aligned}
\end{equation}
\begin{equation}
\label{eq20}
\begin{aligned}
&u^N(y;w)=\min
\left\{\begin{array}{c}
1-\mathbb{P}(Y<y|X=0,W=w),\\
\mathbb{P}(Y<y|X=1,W=w),\\
1-\mathbb{P}(Y<y,X=0|W=w)\\
\hspace{3cm}+\mathbb{P}(Y<y,X=1|W=w),\\
\mathbb{P}(Y<y|X=1,W=w)-\mathbb{P}(Y<y|X=0,W=w)\\
\hspace{0cm}+\mathbb{P}(Y<y,X=0|W=w)\\
\hspace{2cm}+1-\mathbb{P}(Y<y,X=1|W=w)
\end{array}
\right\}.
\end{aligned}
\end{equation}
\end{restatable}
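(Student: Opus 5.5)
The approach is pointwise: for each fixed $y \in \Omega_Y$ and $w \in \Omega_W$, I will bound the integrand $\mathbb{P}(Y_0<y\leq Y_1|W=w)$ by $l^P(y;w)$ and $u^P(y;w)$. Integrating over $\Omega_Y$ then gives Eq.~\eqref{eq33}, since the integrals are monotone and Assumption~\ref{exi1} makes them well defined. The N-CACE bounds in Eq.~\eqref{eq34} follow from the same argument with the roles of $0$ and $1$ swapped. For fixed $y$, binarize the outcome as $B_x=\mathbb{I}(Y_x\geq y)$ and $B=\mathbb{I}(Y\geq y)$. The target $\mathbb{P}(Y_0<y\leq Y_1|W=w)$ is then exactly the probability of necessity and sufficiency $\mathbb{P}(B_0=0,B_1=1|W=w)$ for a binary treatment and binary outcome. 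This is the Tian--Pearl setting \citep{Tian2000}, where both experimental and observational data are available within the stratum $W=w$.

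First I compute the experimental margins. By Assumption~\ref{ASEXO2}, $\mathbb{P}(B_x=0|W=w)=\mathbb{P}(Y<y|X=x,W=w)$. Next I use consistency: $\{X=x,Y<y\}=\{X=x,B_x=0\}$. This ties the joint law of $(X,B_0,B_1)$ given $W=w$ to the observed probabilities $\mathbb{P}(Y<y,X=x|W=w)$. Exogeneity gives $(B_0,B_1)\indep X|W$ only marginally for each $x$. So I will not assume joint independence of $(B_0,B_1)$ from $X$. Instead I will work with the eight cell probabilities $q_{x b_0 b_1}=\mathbb{P}(X=x,B_0=b_0,B_1=b_1|W=w)$, subject to the following linear constraints:
\begin{itemize}
\item observational constraints: $\sum_{b_1} q_{x b_0 b_1}$ with $b_0=0$, $x=0$ equals $\mathbb{P}(Y<y,X=0|W=w)$, and similarly for $x=1$ with $B_1$;
\item experimental constraints: the marginals of $B_0$ and of $B_1$ (summing over $x$) equal $\mathbb{P}(Y<y|X=x,W=w)$;
\item the $q$'s sum to one and are nonnegative.
\end{itemize}

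Then I derive each term of $l^P$ and $u^P$ by elementary inequalities. The terms $0$ and the Fréchet bounds $\mathbb{P}(B_0=0)-\mathbb{P}(B_1=0)$, $\mathbb{P}(B_0=0)$ and $1-\mathbb{P}(B_1=0)$ follow from Lemma~\ref{LEM1} and Theorem~\ref{theo_b1}. For the third term of $l^P$, I write
\[
\mathbb{P}(B_0=0)=\mathbb{P}(B_0=0,X=0)+\mathbb{P}(B_0=0,X=1)\leq \mathbb{P}(Y<y,X=0)+\mathbb{P}(B_0=0,B_1=1,X=1)+\mathbb{P}(Y<y,X=1).
\]
This gives PNS $\geq \mathbb{P}(Y<y|X=0)-\mathbb{P}(Y<y)$. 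The fourth term comes from the symmetric decomposition of $\mathbb{P}(B_1=1)$. For the upper terms, PNS is at most $\mathbb{P}(B_0=0,B_1=1,X=0)+\mathbb{P}(B_0=0,B_1=1,X=1)$. The first piece is at most $\mathbb{P}(X=0,B_0=0)=\mathbb{P}(Y<y,X=0)$, and the second is at most $\mathbb{P}(X=1,B_1=1)=\mathbb{P}(Y\geq y,X=1)$. Their sum is $1-\mathbb{P}(Y<y,X=1)+\mathbb{P}(Y<y,X=0)-\mathbb{P}(X=0)$, which is at most the third term of $u^P$. The fourth term comes from writing PNS $=\mathbb{P}(B_0=0)-\mathbb{P}(B_1=0)+\mathbb{P}(B_0=1,B_1=0)$ and bounding $\mathbb{P}(B_0=1,B_1=0)$ by the analogous observational pieces $\mathbb{P}(Y\geq y,X=0)+\mathbb{P}(Y<y,X=1)$.

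The main obstacle is getting the two cross terms exactly in the stated form, including the placement of the $\mathbb{P}(X=x|W=w)$ factors: a naive bound like the one above is slightly looser than the stated expression. I would first try a linear programming duality argument, as in Tian--Pearl, to confirm that these four terms are the binding vertices. If my elementary derivations come out tighter than the stated terms, they still imply the stated inequalities, so validity of the bounds is all that is needed. The theorem claims only bounds, not sharpness, so no construction of attaining models is required.
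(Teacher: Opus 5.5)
Your proposal is correct and follows the paper's route: bound the integrand $\mathbb{P}(Y_0<y\leq Y_1|W=w)$ pointwise by the Tian--Pearl PNS bounds for the binarized outcome $\mathbb{I}(Y\geq y)$, which the paper simply cites from \citet{Tian2000,Kuroki2011}. You then integrate over $\Omega_Y$ and get N-CACE by symmetry, exactly as the paper does. Your hand derivations of the four terms are valid, and for the two cross terms of $u^P$ they give $\mathbb{P}(X=1|W=w)$ and $\mathbb{P}(X=0|W=w)$, respectively, where the statement has $1$. That is a tighter bound which implies the stated one, so no LP-duality step is needed, and your remark that the naive bound is ``looser'' than the stated expression should read ``tighter''.
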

{The bounds in Theorem~\ref{theo_boud_binary} are not sharp. However, Theorem~\ref{theo_boud_binary} always provides tighter bounds than Theorem~\ref{theo_b1} because  Theorem~\ref{theo_boud_binary} utilizes more information ($\mathbb{P}(Y<y,X=x|W=w)$) than Theorem~\ref{theo_b1} ($\mathbb{P}(Y<y|X=x,W=w)$). }



\section{POSITIVELY AND NEGATIVELY AFFECTED CPICE}
In this section, we introduce new causal effect heterogeneity measures for continuous treatment and continuous outcome and provide their identification and bounding theorems. We will utilize stochastic interventions, allowing for the investigation of more flexible scenarios of continuous treatments.

{We consider a pair of stochastic interventions $(do(X^{\pi_0}), do(X^{\pi_1}))$ (denoted by $(X^{\pi_0},X^{\pi_1})$ for brevity from now on) where random variables $X^{\pi_0}$ and $X^{\pi_1}$ follow PDFs $\pi_0(x|w)$ and $\pi_1(x|w)$, respectively.} 
Here, we allow the interventions $(X^{\pi_0},X^{\pi_1})$ to depend on the subjects' covariates $W$. 
The covariate-dependent interventions have been utilized in tailored optimal treatment assignment \citep{Lo2002,Manski2004,Lou2018} and off-policy evaluation and learning \citep{Dudik2014,Oberst2019,Kallus2020,Saito2023}.

Under the stochastic interventions $(X^{\pi_0},X^{\pi_1})$, the ICE is given by $Y_{X^{\pi_1}}-Y_{X^{\pi_0}}$.  
We extend PICE $\mathbb{E}[Y_{X^{\pi_1}}-Y_{X^{\pi_0}}]$ defined by \citet{Munoz2012}, and define the \emph{conditional population intervention causal effect (CPICE)} as follows
{
\begin{equation}
\begin{aligned}
&\text{\normalfont CPICE}(w; \pi_0,\pi_1)\\
&\defeq\mathbb{E}[Y_{X^{\pi_1}}-Y_{X^{\pi_0}}|W=w]\defeq\int_{\Omega_X}\int_{\Omega_X} \mathbb{E}[Y_{x_1}-Y_{x_0}|W=w]\times\pi_0(x_0|w)\pi_1(x_1|w)dx_0dx_1.
\end{aligned}
\end{equation}}
CPICE can be expressed as 
\begin{equation}
\begin{aligned}
\label{eq8}
&\text{\normalfont CPICE}(w; \pi_0,\pi_1)=\int_{\Omega_Y} \{\mathbb{P}(Y_{X^{\pi_0}}<y|W=w)-\mathbb{P}(Y_{X^{\pi_1}}<y|W=w)\} dy,
\end{aligned}
\end{equation}
where the notation $\mathbb{P}(Y_{X^{\pi_0}}<y|W=w)\defeq \int_{\Omega_X}\mathbb{P}(Y_{x_0}<y|W=w)\pi_0(x_0|w)dx_0$.


\subsection{P-CPICE and N-CPICE for measuring causal effect heterogeneity}

First, we extend the principal stratification in Def.~\ref{def7} from binary treatment to stochastic interventions.
\begin{definition}[Potential outcome types for a continuous outcome under stochastic interventions $(X^{\pi_0},X^{\pi_1})$]
For any $y \in \Omega_Y$, 
we name subjects: 
\begin{enumerate}
\vspace{0.2cm}
\setlength{\parskip}{0cm}
\setlength{\itemsep}{0.1cm}
\setlength{\itemindent}{-0em}
\item for which $\{Y_{X^{\pi_0}}< y,Y_{X^{\pi_1}}\geq y\}$, {\it the (stochastically) positively affected subjects at point $y$},
\item for which $\{Y_{X^{\pi_0}}\geq y,Y_{X^{\pi_1}}< y\}$, {\it the (stochastically) negatively affected  subjects at point $y$},
\item for which $\{Y_{X^{\pi_0}}\geq y,Y_{X^{\pi_1}}\geq y\}$, {\it the (stochastically) positively immutables at point $y$},
\item for which $\{Y_{X^{\pi_0}}< y,Y_{X^{\pi_1}}< y\}$, {\it the (stochastically) negatively immutables at point $y$}.
\vspace{-0cm}
\end{enumerate}
\end{definition}
The stochastically positively affected subjects at point $y$ are those propelled by the treatment from an outcome less than $y$ to at least $y$ by changing the stochastic intervention from $X^{\pi_0}$ to $X^{\pi_1}$.
The stochastically negatively affected subjects at point $y$ are those propelled by the treatment from an outcome at least $y$ to less than $y$ by changing the stochastic intervention from $X^{\pi_0}$ to $X^{\pi_1}$.
The stochastically immutables  at point $y$ are those whose outcome remains at least $y$ or
less than $y$ even after changing the intervention from $X^{\pi_0}$ to $X^{\pi_1}$.


We have the following useful relationship, which is an extension of Lemma \ref{LEM1}:
\begin{restatable}{lemma}{Lemmatwo}
\label{LEM3}
For any $y \in \Omega_Y$ and $w \in \Omega_W$,  we have 
\begin{equation}
\begin{aligned}
&\mathbb{P}(Y_{X^{\pi_0}}<y|W=w)-\mathbb{P}(Y_{X^{\pi_1}}<y|W=w)\\
&=\mathbb{P}(Y_{X^{\pi_0}}<y\leq Y_{X^{\pi_1}}|W=w)-\mathbb{P}(Y_{X^{\pi_1}}<y\leq Y_{X^{\pi_0}}|W=w).
\end{aligned}
\end{equation}
\end{restatable}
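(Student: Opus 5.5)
The identity is a pure set-algebra fact about two events, so I will prove it exactly as Lemma~\ref{LEM1}. The one extra step is to make sense of the joint events $\{Y_{X^{\pi_0}}<y\leq Y_{X^{\pi_1}}\}$ under stochastic interventions.

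I will regard $Y_{X^{\pi_0}}$ and $Y_{X^{\pi_1}}$ as two random variables on a common probability space. This space carries $\boldsymbol U$ together with independent draws $X^{\pi_0}\sim\pi_0(\cdot|w)$ and $X^{\pi_1}\sim\pi_1(\cdot|w)$. Under the convention stated just before the lemma, the marginal $\mathbb{P}(Y_{X^{\pi_k}}<y|W=w)$ then equals $\int_{\Omega_X}\mathbb{P}(Y_{x_k}<y|W=w)\pi_k(x_k|w)dx_k$.

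Write $A=\{Y_{X^{\pi_0}}<y\}$ and $B=\{Y_{X^{\pi_1}}<y\}$, all conditional on $W=w$. I will decompose each event into two disjoint pieces:
\begin{align*}
\mathbb{P}(A|W=w)&=\mathbb{P}(A\cap B|W=w)+\mathbb{P}(A\cap B^c|W=w),\\
\mathbb{P}(B|W=w)&=\mathbb{P}(A\cap B|W=w)+\mathbb{P}(A^c\cap B|W=w).
\end{align*}
Subtracting, the common term $\mathbb{P}(A\cap B|W=w)$, the negatively immutables, cancels. What remains is $\mathbb{P}(A\cap B^c|W=w)-\mathbb{P}(A^c\cap B|W=w)$. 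By definition, $A\cap B^c=\{Y_{X^{\pi_0}}<y\leq Y_{X^{\pi_1}}\}$ and $A^c\cap B=\{Y_{X^{\pi_1}}<y\leq Y_{X^{\pi_0}}\}$, which gives the claim.

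Equivalently, I could condition on $(X^{\pi_0},X^{\pi_1})=(x_0,x_1)$ and apply the same decomposition to $Y_{x_0}$ and $Y_{x_1}$. I would then integrate against $\pi_0(x_0|w)\pi_1(x_1|w)$, using Fubini, which is harmless since all integrands are probabilities bounded by $1$.

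The only delicate point is definitional rather than computational. I must make sure the joint probability $\mathbb{P}(Y_{X^{\pi_0}}<y\leq Y_{X^{\pi_1}}|W=w)$ is interpreted as $\int\int\mathbb{P}(Y_{x_0}<y\leq Y_{x_1}|W=w)\pi_0(x_0|w)\pi_1(x_1|w)dx_0dx_1$. Under that interpretation its marginals agree with the paper's notation for $\mathbb{P}(Y_{X^{\pi_k}}<y|W=w)$, since $\int\pi_{1-k}=1$. Once this is fixed, the proof reduces to the finite additivity argument above.
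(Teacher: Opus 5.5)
Your proof is correct and is essentially the paper's argument. The paper follows your ``equivalently'' route: it writes the difference of marginals as a double integral against $\pi_0(x_0|w)\pi_1(x_1|w)$, applies Lemma~\ref{LEM1} pointwise in $(x_0,x_1)$, and reads the result back as the joint stochastic-intervention probabilities. Your augmented-probability-space framing adds only an explicit statement that $\mathbb{P}(Y_{X^{\pi_0}}<y\leq Y_{X^{\pi_1}}|W=w)$ means that double integral, which the paper assumes implicitly in its final equality.
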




We introduce the following measures for capturing causal
effect heterogeneity between positively
and negatively affected subjects.
\begin{definition}[P-CPICE and N-CPICE]
For any $w \in \Omega_W$, we define the CPICE for the positively and negatively affected subjects (P-CPICE and N-CPICE) under stochastic interventions $(X^{\pi_0},X^{\pi_1})$  by
\begin{equation}
\begin{aligned}
&\text{\normalfont P-CPICE}(w,\pi_0,\pi_1)\defeq\int_{\Omega_Y}\mathbb{P}(Y_{X^{\pi_0}} < y \leq Y_{X^{\pi_1}}|W=w)dy,
\end{aligned}
\end{equation}
\begin{equation}
\begin{aligned}
&\text{\normalfont N-CPICE}(w,\pi_0,\pi_1)\defeq\int_{\Omega_Y}\mathbb{P}(Y_{X^{\pi_1}} < y \leq Y_{X^{\pi_0}}|W=w)dy.
\end{aligned}
\end{equation}
We define {$\text{\normalfont P-PICE}(\pi_0,\pi_1)\defeq\mathbb{E}_W[\text{\normalfont P-CPICE}(W,\pi_0,\pi_1)]$}, and {$\text{\normalfont N-PICE}(\pi_0,\pi_1)\defeq\mathbb{E}_W[\text{\normalfont N-CPICE}(W,\pi_0,\pi_1)]$}.
\end{definition}
P-CPICE and N-CPICE can also be expressed as 
\begin{equation}
\begin{aligned}
&\text{\normalfont P-CPICE}(w,\pi_0,\pi_1)=\mathbb{E}\left[\int_{\Omega_Y} \mathbb{I}(Y_{X^{\pi_0}} < y \leq Y_{X^{\pi_1}})dy\Big|W=w\right],
\end{aligned}
\end{equation}
\begin{equation}
\begin{aligned}
&\text{\normalfont N-CPICE}(w,\pi_0,\pi_1)=\mathbb{E}\left[\int_{\Omega_Y} \mathbb{I}(Y_{X^{\pi_1}} < y \leq Y_{X^{\pi_0}})dy\Big|W=w\right].
\end{aligned}
\end{equation}
We have
\begin{align}\label{eq-bbb}
 \int_{\Omega_Y} \mathbb{I}(Y_{X^{\pi_0}} < y \leq Y_{X^{\pi_1}})dy = \begin{cases}
   \text{\normalfont ICE} & \text{\normalfont ICE} > 0 \\
   0 & \text{\normalfont ICE} \leq 0
 \end{cases},
\end{align}
where $\text{\normalfont ICE}=Y_{X^{\pi_1}}-Y_{X^{\pi_0}}$. 
From Eq.~(\ref{eq-bbb}), P-CPICE only accounts for individuals with positive ICE, that is, P-CPICE measures the portion of the conditional average causal effects (CPICE) over positively affected subjects (at some point). Similarly, N-CPICE can be interpreted as measuring (the absolute value of) the negative portion of  $\text{\normalfont CPICE}$ over negatively affected subjects (at some point).
Both P-CPICE and N-CPICE take non-negative values.

Importantly, CPICE can be decomposed into P-CPICE and N-CPICE as follows.
\begin{assumption}[Finiteness of P-CPICE and N-CPICE]
\label{exi3}
We assume $\int_{\Omega_Y}\mathbb{P}(Y_{X^{\pi_0}} < y \leq Y_{X^{\pi_1}}|W=w)dy<\infty$ and $\int_{\Omega_Y}\mathbb{P}(Y_{X^{\pi_1}} < y \leq Y_{X^{\pi_0}}|W=w)dy<\infty$
for any $w \in \Omega_W$.
\end{assumption}
\begin{restatable}{proposition}{Propositiontwo}[Decomposition]
\label{PROP3}
Under SCM ${\cal M}$ and Assumption \ref{exi3},
for any $w \in \Omega_W$, we have
\begin{equation}
\begin{aligned}
&\text{\normalfont CPICE}(w,\pi_0,\pi_1)=\text{\normalfont P-CPICE}(w,\pi_0,\pi_1)-\text{\normalfont N-CPICE}(w,\pi_0,\pi_1).
\end{aligned}
\end{equation}
\end{restatable}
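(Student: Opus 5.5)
The plan mirrors the proof of Proposition~\ref{PROP1}: start from the CDF representation of CPICE in Eq.~\eqref{eq8}, substitute the pointwise identity of Lemma~\ref{LEM3}, and split the integral.

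\textbf{Step 1.} By Eq.~\eqref{eq8},
\begin{equation*}
\text{\normalfont CPICE}(w,\pi_0,\pi_1)=\int_{\Omega_Y}\{\mathbb{P}(Y_{X^{\pi_0}}<y|W=w)-\mathbb{P}(Y_{X^{\pi_1}}<y|W=w)\}dy .
\end{equation*}
For each fixed $y$, Lemma~\ref{LEM3} replaces the integrand by
\begin{equation*}
\mathbb{P}(Y_{X^{\pi_0}}<y\leq Y_{X^{\pi_1}}|W=w)-\mathbb{P}(Y_{X^{\pi_1}}<y\leq Y_{X^{\pi_0}}|W=w).
\end{equation*}
Lemma~\ref{LEM3} itself follows by subtracting the common term $\mathbb{P}(Y_{X^{\pi_0}}<y,Y_{X^{\pi_1}}<y|W=w)$ from both CDFs.

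\textbf{Step 2.} Both resulting integrands are nonnegative. By Assumption~\ref{exi3}, each has a finite integral over $\Omega_Y$. Hence the integral of their difference equals the difference of their integrals, which is $\text{\normalfont P-CPICE}-\text{\normalfont N-CPICE}$ by definition.

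\textbf{Step 3 (main obstacle).} The delicate point is justifying Eq.~\eqref{eq8}, that is, that CPICE as defined (a double integral of $\mathbb{E}[Y_{x_1}-Y_{x_0}|W=w]$ against $\pi_0\pi_1$) equals the integral of the CDF difference. I would handle it in two parts.

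\emph{Pointwise identity.} For each pair $(x_0,x_1)$ the tail formula gives
\begin{equation*}
\mathbb{E}[Y_{x_1}-Y_{x_0}|W=w]=\int_{\Omega_Y}\{\mathbb{P}(Y_{x_0}<y|W=w)-\mathbb{P}(Y_{x_1}<y|W=w)\}dy .
\end{equation*}
This is the same argument as in Proposition~\ref{PROP1}, namely $\mathbb{E}[A-B]=\int\{\mathbb{I}(B<y)-\mathbb{I}(A<y)\}dy$ pointwise, followed by taking expectations.

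\emph{Interchange of integrals.} Integrating against $\pi_0(x_0|w)\pi_1(x_1|w)$ requires swapping the $x$- and $y$-integrals. Fubini needs absolute integrability, and the CDF difference is not sign-definite. To avoid this, I would work directly with the joint counterfactual $(Y_{X^{\pi_0}},Y_{X^{\pi_1}})$ and the representation in Eq.~\eqref{eq-bbb}. The pointwise identity
\begin{equation*}
Y_{X^{\pi_1}}-Y_{X^{\pi_0}}=\int_{\Omega_Y}\mathbb{I}(Y_{X^{\pi_0}}<y\leq Y_{X^{\pi_1}})dy-\int_{\Omega_Y}\mathbb{I}(Y_{X^{\pi_1}}<y\leq Y_{X^{\pi_0}})dy
\end{equation*}
holds for every unit. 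Each term on the right is nonnegative, so Tonelli applies to each separately. Assumption~\ref{exi3} makes both expectations finite, and linearity of conditional expectation then yields the decomposition without ever integrating a signed quantity.
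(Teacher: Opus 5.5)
Your Steps 1--2 are precisely the paper's proof. The paper writes CPICE as $\mathbb{E}[Y_{X^{\pi_1}}|W=w]-\mathbb{E}[Y_{X^{\pi_0}}|W=w]$, passes to the CDF integral of Eq.~\eqref{eq8} without comment, substitutes Lemma~\ref{LEM3}, and splits the integral using Assumption~\ref{exi3}.

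Your Step 3 is not really a justification of Eq.~\eqref{eq8}. It is a self-contained alternative proof. You write the unit-level effect as $\text{ICE}=\text{ICE}^{+}-\text{ICE}^{-}$ via Eq.~\eqref{eq-bbb} and apply Tonelli to each nonnegative part. This gives the decomposition directly and makes Lemma~\ref{LEM3} and Eq.~\eqref{eq8} unnecessary.

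Your route is slightly more general. It needs only $\mathbb{E}[|Y_{X^{\pi_1}}-Y_{X^{\pi_0}}|\,|W=w]<\infty$. By Tonelli this equals P-CPICE $+$ N-CPICE, so it is exactly Assumption~\ref{exi3}. The paper's split into two means, with the tail formula applied to each, tacitly requires the individual means to exist. For example, take $Y:=U_Y$ with $U_Y$ standard Cauchy: both sides of the decomposition are $0$, but $\mathbb{E}[Y_{X^{\pi_1}}|W=w]$ is undefined.

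Two minor points:
\begin{itemize}
\item The obstacle you flag is not genuine. The argument of Lemma~\ref{LEM1} at $(x_0,x_1)$ gives $|\mathbb{P}(Y_{x_0}<y|W=w)-\mathbb{P}(Y_{x_1}<y|W=w)|\le\mathbb{P}(Y_{x_0}<y\le Y_{x_1}|W=w)+\mathbb{P}(Y_{x_1}<y\le Y_{x_0}|W=w)$. Its integral against $\pi_0\pi_1\,dy$ is finite by Assumption~\ref{exi3}, so Fubini applies directly to the signed integrand.
\item If you start from the double-integral definition of CPICE, identifying it with the joint expectation $\mathbb{E}[\text{ICE}|W=w]$ is itself a signed Fubini step, licensed by the same bound. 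So the claim that you never integrate a signed quantity is slightly overstated.
\end{itemize}
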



{We revisit Examples 1-3 in Section~\ref{sec-34} to illustrate the P-PICE and N-PICE measures. We consider stochastic interventions $X^{\pi_0}=X\sim{\cal N}(0,1)$ and $X^{\pi_1}=X+d$, which increase the treatment of all subjects by a fixed amount $d>0$ from their current values.}

{\bf Example 1' (Homogeneous and positive ICE).}
Consider the model $Y:=X+U_Y$, $U_Y \sim {\cal N}(0,1)$. 
We have $\text{\normalfont PICE}=\mathbb{E}[Y_{X^{\pi_1}}]-\mathbb{E}[Y_{X^{\pi_0}}]=\mathbb{E}[X+d+U_Y]-\mathbb{E}[X+U_Y]=d$,
$\text{\normalfont P-PICE}
=\int_{-\infty}^{\infty} \mathbb{P}(X+U_Y < y \leq X+d+U_Y)dy=d$, and
$\text{\normalfont N-PICE}
=\int_{-\infty}^{\infty} \mathbb{P}(X+d+U_Y < y \leq X+U_Y)dy=0$.
{The results are consistent with the setting that there do not exist negatively affected subjects. }

{\bf Example 2' (Heterogeneous ICE).}
We consider the model $Y:=XU_Y$, $U_Y \sim {\cal N}(0,1)$.
Under this setting, 
$\text{\normalfont PICE}=\mathbb{E}[Y_{X^{\pi_1}}]-\mathbb{E}[Y_{X^{\pi_0}}]=\mathbb{E}[(X+d)U_Y]-\mathbb{E}[XU_Y]=\mathbb{E}[d U_Y]=0$. We can show that $\text{\normalfont P-PICE}=\text{\normalfont N-PICE} >0$. 
{Even if PICE is equal to 0, P-PICE and N-PICE are able to measure the heterogeneity of causal effects over positively and negatively affected subjects.}

{\bf Example 3' (Homogeneous and zero ICE).}
We consider the model $Y:=U_Y$, $U_Y \sim {\cal N}(0,1)$, where all subjects have zero ICE.
Under this setting,
$\text{\normalfont PICE}=\mathbb{E}[Y_{X^{\pi_1}}]-\mathbb{E}[Y_{X^{\pi_0}}]=0$,
$\text{\normalfont P-PICE}
=\int_{-\infty}^{\infty} \mathbb{P}(U_Y < y \leq U_Y)dy=0$, and
$\text{\normalfont N-PICE}
=\int_{-\infty}^{\infty} \mathbb{P}(U_Y < y \leq U_Y)dy=0$.

PICEs in both Examples 2' and 3' are equal to 0; however, 
they have completely different patterns of causal effect heterogeneity.
{Our newly introduced P-PICE and N-PICE are able to measure  effect heterogeneity unexplained by PICE.}

\subsection{Identification of P-CPICE and N-CPICE}

We have obtained the following identification theorem:
\begin{restatable}{theorem}{Theoremfour}[Identification of P-CPICE and N-CPICE]
\label{theo3}
Under SCM ${\cal M}$ and Assumptions \ref{ASEXO2}, \ref{MONO2}, and \ref{exi3}, 
P-CPICE and N-CPICE are identifiable by
{
\begin{equation}
\begin{aligned}
\label{eq36}
&\text{\normalfont P-CPICE}(w,\pi_0,\pi_1)=\int_{\Omega_X}\int_{\Omega_X}\int_{\Omega_Y}\max\Big\{\mathbb{P}(Y<y|X=x_0,W=w)\\
&\hspace{1.5cm}-\mathbb{P}(Y<y|X=x_1,W=w),0\Big\}\times\pi_0(x_0|w)\pi_1(x_1|w)dydx_0dx_1,
\end{aligned} 
\end{equation}
\begin{equation}
\begin{aligned}
\label{eq37}
&\text{\normalfont N-CPICE}(w,\pi_0,\pi_1)=\int_{\Omega_X}\int_{\Omega_X}\int_{\Omega_Y} \max\Big\{\mathbb{P}(Y<y|X=x_1,W=w)\\
&\hspace{1.5cm}-\mathbb{P}(Y<y|X=x_0,W=w),0\Big\}\times\pi_0(x_0|w)\pi_1(x_1|w)dydx_0dx_1.
\end{aligned} 
\end{equation}
}
\end{restatable}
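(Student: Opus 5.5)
The plan is to reduce Theorem~\ref{theo3} to the pointwise identification of the conditional PNS from \citet{Kawakami2024}, after unfolding the stochastic interventions into mixtures over atomic interventions. The key modelling point is that $X^{\pi_0}$ and $X^{\pi_1}$ are drawn independently of each other and of $\boldsymbol{U}$, given $W=w$, with densities $\pi_0(\cdot|w)$ and $\pi_1(\cdot|w)$. This is the same convention used in the definition of CPICE, where the joint integrand carries the product $\pi_0(x_0|w)\pi_1(x_1|w)$.

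\textbf{Step 1 (unfolding).} Under this convention, for each $y$ I would write
\[
\mathbb{P}(Y_{X^{\pi_0}} < y \leq Y_{X^{\pi_1}}|W=w)=\int_{\Omega_X}\int_{\Omega_X}\mathbb{P}(Y_{x_0}<y\leq Y_{x_1}|W=w)\,\pi_0(x_0|w)\pi_1(x_1|w)\,dx_0dx_1 .
\]
This follows by conditioning on $(X^{\pi_0},X^{\pi_1})=(x_0,x_1)$. Within the sub-model, $Y_{X^{\pi_0}}$ evaluated at a unit with that draw equals $Y_{x_0}(\boldsymbol{u})$, and similarly for $\pi_1$.

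\textbf{Step 2 (pointwise identification).} For fixed $x_0,x_1,y$, Assumptions~\ref{ASEXO2} and~\ref{MONO2} give the result of \citet{Kawakami2024} already recalled in Section~\ref{sec-2}:
\[
\mathbb{P}(Y_{x_0}<y\leq Y_{x_1}|W=w)=\max\{\mathbb{P}(Y<y|X=x_0,W=w)-\mathbb{P}(Y<y|X=x_1,W=w),0\}.
\]
If one wanted to re-derive it, the argument runs as follows. Lemma~\ref{LEM1}, applied with $(0,1)$ replaced by $(x_0,x_1)$ (same proof), gives the difference of the two PNS-type probabilities. Monotonicity kills one of them, and exogeneity replaces $\mathbb{P}(Y_x<y|W=w)$ by $\mathbb{P}(Y<y|X=x,W=w)$. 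Nonnegativity of the surviving term then yields the $\max\{\cdot,0\}$.

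\textbf{Step 3 (integration).} I would substitute Step~2 into Step~1 and integrate over $y\in\Omega_Y$. The integrand is nonnegative and measurable, so Tonelli's theorem lets me swap $\int dy$ with $\int\int dx_0dx_1$ with no further conditions, giving Eq.~\eqref{eq36}. Assumption~\ref{exi3} then guarantees that the resulting quantity is finite, so the formula identifies a genuine real number. Eq.~\eqref{eq37} follows symmetrically, by swapping the roles of $x_0$ and $x_1$ in Step~2. Alternatively, one could use Proposition~\ref{PROP3} together with Eq.~\eqref{eq8} and $a=\max\{a,0\}-\max\{-a,0\}$, but the direct symmetric argument is cleaner.

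\textbf{Main obstacle.} The hard step is Step~1, which requires a precise semantics for the joint counterfactual $(Y_{X^{\pi_0}},Y_{X^{\pi_1}})$. The proof needs the two stochastic treatments to be independent of each other and of $\boldsymbol{U}$ given $W$. I would state this explicitly as the interpretation already implicit in the paper's definition of CPICE and of $\mathbb{P}(Y_{X^{\pi_0}}<y|W=w)$, and then apply the law of total probability over $(x_0,x_1)$. Everything after that is a routine application of known results.
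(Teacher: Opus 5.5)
Your proposal is correct and follows essentially the same route as the paper: the paper writes P-CPICE and N-CPICE directly as triple integrals of $\mathbb{P}(Y_{x_0}<y\leq Y_{x_1}|W=w)\pi_0(x_0|w)\pi_1(x_1|w)$, treating your Step~1 unfolding as definitional via the same product-form convention used for CPICE, and then substitutes the pointwise identification from Theorem~4.2 of \citet{Kawakami2024}, packaged as an auxiliary lemma. Your explicit independence assumption on $(X^{\pi_0},X^{\pi_1})$ and your Tonelli argument for interchanging $\int dy$ with $\int\int dx_0dx_1$ make precise what the paper leaves implicit.
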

In the case $\pi_0(x_0|w)=\delta(x_0=0)$ and $\pi_1(x_1|w)=\delta(x_1=1)$, where $\delta$ is a delta function, Theorem \ref{theo3} reduces to Theorem \ref{theo1}.



\subsection{Bounding P-CPICE and N-CPICE}
To deal with scenarios where the monotonicity (Assumption \ref{MONO2}) is deemed implausible, we provide the following bounding results:
\begin{restatable}{theorem}{Theoremfive}\label{theo_b2}
Under SCM ${\cal M}$ and Assumptions \ref{ASEXO2} and \ref{exi3}, for any $w \in \Omega_W$,
the lower bound of P-CPICE is given by Eq.~\eqref{eq36} and the upper bound of P-CPICE is given by
\begin{equation}
\label{eq21}
\begin{aligned}
&\int_{\Omega_X}\int_{\Omega_X}\int_{\Omega_Y} \min\left\{
\begin{array}{c}
1-\mathbb{P}(Y<y|X=x_1,W=w),\\
\mathbb{P}(Y<y|X=x_0,W=w)
\end{array}
\right\}dydx_0dx_1;
\end{aligned}
\end{equation}
and the lower bound of N-CPICE is given by Eq.~\eqref{eq37}
and the upper bound of N-CPICE is given by
\begin{equation}
\label{eq22}
\begin{aligned}
&\int_{\Omega_X}\int_{\Omega_X}\int_{\Omega_Y} \min\left\{
\begin{array}{c}
1-\mathbb{P}(Y<y|X=x_0,W=w),\\
\mathbb{P}(Y<y|X=x_1,W=w)
\end{array}
\right\}dydx_0dx_1.
\end{aligned}
\end{equation}
These are sharp bounds given solely the conditional distributions $\mathbb{P}(Y<y|X=x,W=w)$.
\end{restatable}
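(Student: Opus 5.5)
The plan is to reduce everything to pairwise (atomic) statements and then integrate against $\pi_0(x_0|w)\pi_1(x_1|w)$, mirroring how Theorem~\ref{theo_b1} handles the binary case. By the definition of CPICE, the interventions $X^{\pi_0}$ and $X^{\pi_1}$ are drawn independently given $W=w$ and independently of $\boldsymbol U$. I therefore first write
\[
\mathbb{P}(Y_{X^{\pi_0}}<y\leq Y_{X^{\pi_1}}|W=w)=\int_{\Omega_X}\int_{\Omega_X}\mathbb{P}(Y_{x_0}<y\leq Y_{x_1}|W=w)\,\pi_0(x_0|w)\pi_1(x_1|w)\,dx_0dx_1,
\]
and similarly for the negatively affected event. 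Integrating over $y$ and using Tonelli's theorem (the integrand is non-negative, and Assumption~\ref{exi3} guarantees finiteness) lets me exchange the order of integration. The upper-bound expressions \eqref{eq21}--\eqref{eq22} are to be read with the same weights $\pi_0(x_0|w)\pi_1(x_1|w)$ as in \eqref{eq36}.

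\textbf{Validity of the bounds.} For a fixed pair $(x_0,x_1)$ and fixed $y$, write $F_x(y)=\mathbb{P}(Y_x<y|W=w)$. By Assumption~\ref{ASEXO2}, $F_x(y)=\mathbb{P}(Y<y|X=x,W=w)$.
\begin{itemize}
\item For the lower bound, the pairwise version of Lemma~\ref{LEM1} gives $\mathbb{P}(Y_{x_0}<y\leq Y_{x_1}|w)-\mathbb{P}(Y_{x_1}<y\leq Y_{x_0}|w)=F_{x_0}(y)-F_{x_1}(y)$. Since the subtracted probability is non-negative, $\mathbb{P}(Y_{x_0}<y\leq Y_{x_1}|w)\geq\max\{F_{x_0}(y)-F_{x_1}(y),0\}$.
\item For the upper bound, the event $\{Y_{x_0}<y\leq Y_{x_1}\}$ is contained in both $\{Y_{x_0}<y\}$ and $\{Y_{x_1}\geq y\}$. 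Hence its probability is at most $\min\{F_{x_0}(y),1-F_{x_1}(y)\}$.
\end{itemize}
These are just the Fréchet bounds. Integrating them over $y$, $x_0$ and $x_1$ yields the stated bounds for P-CPICE, and N-CPICE is handled by swapping the roles of $x_0$ and $x_1$.

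\textbf{Sharpness.} Here I construct, for a fixed $w$, a joint law of the potential-outcome process $\{Y_x\}_{x\in\Omega_X}$ with the prescribed marginals $F_x$ that attains each bound for all $y$ and all pairs simultaneously. Exogeneity is preserved by making this law independent of $X$ given $W$.
\begin{itemize}
\item For the lower bound, take a single $U\sim\mathrm{Unif}(0,1)$ and set $Y_x=F_x^{-1}(U)$ for every $x$. This comonotone coupling satisfies Assumption~\ref{MONO2}, so Theorem~\ref{theo3} shows the lower bound is attained exactly.
\item For the upper bound, I try $Y_x=F_x^{-1}(U)$ for $x$ in the support of $\pi_0$ and $Y_x=F_x^{-1}(1-U)$ for $x$ in the support of $\pi_1$. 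For any such pair this gives $\mathbb{P}(Y_{x_0}<y\leq Y_{x_1})=\mathbb{P}(U<F_{x_0}(y),\,U\leq 1-F_{x_1}(y))=\min\{F_{x_0}(y),1-F_{x_1}(y)\}$ uniformly in $y$.
\end{itemize}

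The main obstacle is the upper-bound sharpness when the supports of $\pi_0(\cdot|w)$ and $\pi_1(\cdot|w)$ overlap. A single $Y_x$ cannot be both comonotone and countermonotone, and in particular for $x_0=x_1$ the true probability is $0$ while the bound is positive. I would first handle the case of disjoint supports (which includes the atomic case of Theorem~\ref{theo_b1}) by the construction above. For the general case, I would argue that the overlap contributes only through the diagonal $\{x_0=x_1\}$, which is null when the $\pi_i$ are densities. Off the diagonal, I would try to realize the counter-monotone coupling pair by pair, using that sharpness ``given solely the conditional distributions'' constrains only marginals, so the two-dimensional joint law of each pair can be chosen freely. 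I expect making this pairwise freedom consistent within one process $\{Y_x\}$ to be the delicate point; if it fails, I would state sharpness in the pairwise/disjoint-support sense.
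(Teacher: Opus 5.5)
Your validity argument is the paper's: pointwise Fr\'echet bounds for $\mathbb{P}(Y_{x_0}<y\le Y_{x_1}|W=w)$ (the paper cites Gadbury et al.), integrated against $\pi_0(x_0|w)\pi_1(x_1|w)$. You are right that these weights are missing from \eqref{eq21}--\eqref{eq22}. Your lower-bound sharpness via the comonotone process and Theorem~\ref{theo3} is also correct. Your orientation of the couplings is the right one; the paper's proof reverses it, attributing the lower bound to countermonotone and the upper bound to comonotone couplings. The gap is upper-bound sharpness when $\pi_0(\cdot|w)$ and $\pi_1(\cdot|w)$ overlap, and the step you propose there fails: the obstruction is not the null diagonal. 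Pairwise countermonotone couplings cannot be glued into one process $\{Y_x\}$, because composing two decreasing couplings gives an increasing one. The resulting loss has positive size, so no argument can close it.

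Concretely, take $\pi_0(\cdot|w)=\pi_1(\cdot|w)=\pi$ and $\mathbb{P}(Y<y|X=x,W=w)=F(y)$ for all $x$, and put $p(y)=\pi\{x: f_Y(x,w,U_Y)<y\}$. By Tonelli and then Jensen,
\[
\int_{\Omega_X}\int_{\Omega_X}\mathbb{P}(Y_{x_0}<y\le Y_{x_1}|W=w)\pi(x_0)\pi(x_1)\,dx_0dx_1=\mathbb{E}[p(y)\{1-p(y)\}\mid W=w]\le F(y)\{1-F(y)\},
\]
because $\mathbb{E}[p(y)\mid W=w]=F(y)$ by Assumption~\ref{ASEXO2}. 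For $F=\Phi$, every compatible model has P-CPICE $\le\int\Phi(1-\Phi)\,dy=1/\sqrt{\pi}$, whereas \eqref{eq21} equals $\int\min\{\Phi,1-\Phi\}\,dy=\sqrt{2/\pi}$. So the sharpness claim for the upper bounds is false in general. The paper's proof just transplants the binary-treatment coupling and does not notice this, even though overlap is the typical case for shift interventions $(X,X+d)$. Your fallback is therefore the correct theorem rather than a concession. The upper bounds are sharp when $\pi_0(\cdot|w)$ and $\pi_1(\cdot|w)$ are mutually singular (e.g.\ disjoint supports, which covers the atomic case of Theorem~\ref{theo_b1}), by your split comonotone/countermonotone construction. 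In general only their validity, and the sharpness of the lower bounds, survive.
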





\section{APPLICATION TO A REAL-WORLD DATASET}
\label{sec7}


We show an application to a real-world dataset in medicine.

\noindent{{\bf Estimation.}
We estimate P-CACE, N-CACE, P-CPICE, and N-CPICE given in Eqs.~(\ref{eq14})-(\ref{eq20}) and (\ref{eq36})-(\ref{eq22}) from finite samples by first estimating conditional CDFs and then estimating  the integrations by Monte Carlo integration. 
In principle, any   conditional CDF estimators can be used. In the experiments, we use the local linear (LL) estimator \citep{Hall1999}.
We conduct  100 times bootstrapping
\citep{Efron1979} to provide the means and 95\% CI for each estimator.
The details of the estimation methods are provided  in Appendix \ref{appA}. Numerical experiments are provided in Appendix \ref{sec6} to illustrate the finite-sample properties of the estimators. 
{The source code is available at \url{https://github.com/NA679723/JCI2026_Heterogeneity.git}.}
}

\noindent{\bf Dataset.}
We take up an open dataset about medical costs in \citep{Lantz2019}
(\url{https://github.com/stedy/Machine-Learning-with-R-datasets/blob/master/insurance.csv}). 
This dataset contains 1,338 examples of beneficiaries currently enrolled in an insurance plan, with characteristics of patients, 
as well as the total medical expenses billed to the plan for the calendar year.
\citet{Lantz2019} aimed to predict medical expenses based on patients' characteristics.

\noindent{\bf Analysis.}
We study the causal impact of the patients' Body Mass Index (BMI) ($X$) on their medical expenses ($Y$ in $\$$) billed by health insurance  and its heterogeneity.
We consider patients who are 30 years old, non-smokers, men, and have 1 child, and denote these characteristics as covariates $W=w$.

We assume that Assumption \ref{ASEXO2} (Conditional exogeneity) is reasonable after adjusting for potential confounders $W$.  
Assumption \ref{MONO2} (Conditional monotonicity) requires that, within each level of $W$, the ordering of an individual's potential medical expenditures under different BMI levels does not reverse. This kind of “non-crossing” can be plausible in settings where individuals tend to maintain their relative position in the cost distribution, regardless of moderate changes in BMI. 
For example, individuals with persistently high medical expenditures—often due to chronic or severe health conditions—are likely to remain high-cost individuals whether their BMI is 20 or 40. Similarly, individuals who seldom use medical care and therefore incur consistently low expenditures are likely to remain relatively low-cost across BMI levels. In such situations, the ordering of potential outcomes across different BMI values remains stable, making the monotonicity assumption more reasonable. 
{Additionally, Assumption \ref{MONO2} is satisfied by the linear model used in \citep{Lantz2019}.}

\noindent{\bf Results.} 
We first evaluate the 
causal effect over the target population using CACE, $\hat{\mathbb{E}}[Y_{40}-Y_{20}|W=w]$, 
under the intervention from $X=20$ to $X=40$. 
The estimate is
\begin{equation*}
\text{\bf CACE: }\$792.842\ (\text{CI: }[\$1.031,\$2001.949]).
\end{equation*}
The result means that, \emph{on average}, a change in BMI from $20$ to $40$ increases 
medical costs (termed a positive effect) for this subpopulation.
However, does BMI have an increasing effect on all patients in the subpopulation? 
The estimated P-CACE and N-CACE are 
\begin{equation*}
\begin{gathered}
\text{\bf P-CACE: }\$798.276\ (\text{CI: }[\$1.080,\$2008.135]),\\
\text{\bf N-CACE: } \$5.434\ (\text{CI: }[\$0.024,\$17.996]).
\end{gathered}
\end{equation*}
N-CACE is quite low compared with CACE and P-CACE.
{The results mean that although BMI may  have a negative effect on costs for some subjects (represented by N-CACE),  the negative effects of BMI are negligible relative to its positive effects on those positively affected subjects (represented by P-CACE).}

Researchers may consider the causal effect heterogeneity of the populational changes of BMI, e.g., increasing (or decreasing) BMI by 3 for all subjects. 
This question can be answered by CPICE under stochastic interventions $(X^{\pi_0},X^{\pi_1})=(X,X+3)$.
The estimated CPICE is 
\begin{equation*}
\text{\bf CPICE: }\$168.199\ (\text{CI: }[-\$604.296,\$1056.766 ]).
\end{equation*}
The result means that, \emph{on average},  increasing BMI by 3 for all subjects has a positive (increasing) effect on medical costs for this subpopulation.
Is the effect positive over the whole subpopulation?
The estimated P-CPICE and N-CPICE are 
\begin{equation*}
\begin{gathered}
\text{\bf P-CPICE: }\$380.297\ (\text{CI: }[\$0.000,\$1077.583 ]),\\
\text{\bf N-CPICE: }\$212.098\ (\text{CI: }[\$0.000,\$626.473]).
\end{gathered}
\end{equation*}
{The results mean that even if the populational BMI increase has a positive effect on the costs on average (represented by CPICE), it has a high negative effect on some subjects (represented by N-CPICE).} 
{Such variation in effects may be due to hidden factors that influence medical expenses, such as income or place of residence. Observing such hidden factors would help to  explain effect heterogeneity.}



Finally, 
we estimate the bounds of P-CACE, N-CACE, P-CPICE, and N-CPICE given in Theorems~\ref{theo_b1} and ~\ref{theo_b2}. 
The lower bounds are the same as the estimates given previously.
The estimated upper bounds 
are 
\begin{equation}
\begin{aligned}
&\text{\bf Upper bound of P-CACE: }\$26793.801 (\text{95\%CI: }[\$24176.502,\$28392.460]),\\
&\text{\bf Upper bound of N-CACE: }\$825.933 (\text{95\%CI: }[\$3.677,\$1875.828]),\\
&\text{\bf Upper bound of P-CPICE: }\$26559.050 (\text{95\%CI: }[\$24282.050,28296.430\$]),\\
&\text{\bf Upper bound of N-CPICE: }\$818.279 (\text{95\%CI: }[\$64.520,1845.695\$]).
\end{aligned}
\end{equation}
The results mean that  
these measures could potentially take much larger values if Assumption \ref{MONO2} is violated.




\section{CONCLUSION}

CACE is the primary measure researchers use to assess causal effect heterogeneity. This paper provides new measures for assessing  heterogeneity unexplained by CACE in  binary treatment-continuous outcome and continuous treatment-continuous outcome settings. We establish identification theorems for the new measures. 
A limitation of these identification results is their reliance on the monotonicity assumption, which may be restrictive in some real-world applications. To address this, we provide bounds for cases where the monotonicity is deemed implausible. These new measures, together with their identification and bounding methods,  
offer valuable new tools for researchers seeking a more nuanced understanding of causal effect heterogeneity, offering a more complete picture of treatment effect variation and paving the way for more personalized interventions. 

Our new measures could have important impact on practical real-world decision-making.
For instance, even when CACE or CPICE is positive, high values of N-CACE or N-CPICE can cause decision-makers to hesitate to implement the intervention, as they indicate substantial negative effects on some individuals along with positive effects. In contrast, small values of N-CACE or N-CPICE can reassure decision-makers, suggesting minimal negative impact and thus supporting the implementation of the intervention.

Many sophisticated algorithms for estimating CACE have
been developed, such as causal forest \citep{Athey2016}, causal boosting \citep{Powers2018}, double machine
learning \citep{Chernozhukov2018}, meta-learners
\citep{Kunzel2019,Nie2020,Kennedy2023},
and CATE Lasso \citep{Kato2023}. 
Developing more sophisticated estimation methods for our new measures with improved statistical or computational properties is a direction for future research.\\

\bibliographystyle{plainnat}
\bibliography{icml2025}

@book{Pearl09,
 author = {Judea Pearl},
 title = {Causality: Models, Reasoning and Inference},
 publisher = {Cambridge University Press}, 
 year = {2009},
 edition = {2nd}
}

@article{Imbens1994,
 author = {Guido W. Imbens and Joshua D. Angrist},
 journal = {Econometrica},
 number = {2},
 pages = {467--475},
 publisher = {[Wiley, Econometric Society]},
 title = {Identification and Estimation of Local Average Treatment Effects},
 urldate = {2023-01-07},
 volume = {62},
 year = {1994}
}

@article{Wager2018,
	author = {Stefan Wager and Susan Athey},
	journal = {Journal of the American Statistical Association},
	number = {523},
	pages = {1228-1242},
	title = {Estimation and Inference of Heterogeneous Treatment Effects using Random Forests},
	volume = {113},
	year = {2018}}

@article{Athey2019,
	author = {Athey, Susan and Imbens, Guido W.},
	journal = {Annual Review of Economics},
	number = {1},
	pages = {685-725},
	title = {Machine Learning Methods That Economists Should Know About},
	volume = {11},
	year = {2019}}

@article{Kunzel2019,
	author = {S{\"o}ren R. K{\"u}nzel and Jasjeet S. Sekhon and Peter J. Bickel and Bin Yu},
	journal = {Proceedings of the National Academy of Sciences},
	number = {10},
	pages = {4156-4165},
	title = {Metalearners for estimating heterogeneous treatment effects using machine learning},
	volume = {116},
	year = {2019}}

@article{Nie2020,
	author = {Nie, X and Wager, S},
	journal = {Biometrika},
	month = {09},
	number = {2},
	pages = {299-319},
	title = {{Quasi-oracle estimation of heterogeneous treatment effects}},
	volume = {108},
	year = {2020}}

@article{Athey2016,
	author = {Susan Athey and Guido Imbens},
	journal = {Proceedings of the National Academy of Sciences},
	number = {27},
	pages = {7353-7360},
	title = {Recursive partitioning for heterogeneous causal effects},
	volume = {113},
	year = {2016}}

@article{Zhang2022,
  title={Towards R-learner of conditional average treatment effects with a continuous treatment: T-identification, estimation, and inference},
  author={Zhang, Yichi and Kong, Dehan and Yang, Shu},
  journal={arXiv e-prints},
  pages={arXiv--2208},
  year={2022}
}

@article{Robinson1988,
 author = {P. M. Robinson},
 journal = {Econometrica},
 number = {4},
 pages = {931--954},
 publisher = {[Wiley, Econometric Society]},
 title = {Root-N-Consistent Semiparametric Regression},
 urldate = {2023-03-31},
 volume = {56},
 year = {1988}
}

@article{Singh2023,
	author = {Singh, R and Xu, L and Gretton, A},
	journal = {Biometrika},
	month = {July},
	pages = {asad042},
	title = {{Kernel methods for causal functions: dose, heterogeneous and incremental response curves}},
	year = {2023}}

@InProceedings{Li2023,
  title = 	 {Trustworthy Policy Learning under the Counterfactual No-Harm Criterion},
  author =       {Li, Haoxuan and Zheng, Chunyuan and Cao, Yixiao and Geng, Zhi and Liu, Yue and Wu, Peng},
  booktitle = 	 {Proceedings of the 40th International Conference on Machine Learning},
  pages = 	 {20575--20598},
  year = 	 {2023},
  volume = 	 {202},
  series = 	 {Proceedings of Machine Learning Research},
  month = 	 {23--29 Jul},
  publisher =    {PMLR},
}

@article{Kallus2022,
  title={What's the harm? sharp bounds on the fraction negatively affected by treatment},
  author={Kallus, Nathan},
  journal={Advances in Neural Information Processing Systems},
  volume={35},
  pages={15996--16009},
  year={2022}
}

@article{Frangakis2002,
	author = {Frangakis, Constantine E and Rubin, Donald B},
	journal = {Biometrics},
	month = {Mar},
	number = {1},
	pages = {21--29},
	title = {Principal stratification in causal inference.},
	volume = {58},
	year = {2002}}

@article{Ben-Michael2024,
   author={Eli Ben-Michael and Kosuke Imai and Zhichao Jiang},
	journal = {Journal of the American Statistical Association},
	number = {0},
	pages = {1--14},
	title = {Policy Learning with Asymmetric Counterfactual Utilities},
	volume = {0},
	year = {2024}}

@article{Kawakami2024,
      title={Probabilities of Causation for Continuous and Vector Variables}, 
  author =       {Kawakami, Yuta and Kuroki, Manabu and Tian, Jin},
        year      = {2024},
   journal = {Proceedings of the 40th Conference on Uncertainty in Artificial Intelligence (UAI-2024)},
}

@article{Kuroki2011,
 author = {Manabu Kuroki and Zhihong Cai},
 journal = {Scandinavian Journal of Statistics},
 number = {3},
 pages = {564--577},
 publisher = {[Board of the Foundation of the Scandinavian Journal of Statistics, Wiley]},
 title = {Statistical Analysis of 'Probabilities of Causation' Using Co-variate Information},
 volume = {38},
 year = {2011}
}

@article{Tian2000,
	author = {Tian, Jin and Pearl, Judea},
	journal = {Annals of Mathematics and Artificial Intelligence},
	number = {1},
	pages = {287--313},
	title = {Probabilities of causation: Bounds and identification},
	volume = {28},
	year = {2000}}

@article{Pearl1999,
	author = {Pearl, Judea},
	journal = {Synthese},
	number = {1},
	pages = {93--149},
	title = {Probabilities Of Causation: Three Counterfactual Interpretations And Their Identification},
	volume = {121},
	year = {1999}}

@article{Munoz2012,
  author={Iv{\'a}n D{\'\i}az Mu{\~n}oz and Van Der Laan, Mark},
 journal = {Biometrics},
 number = {2},
 pages = {541--549},
 publisher = {International Biometric Society},
 title = {Population Intervention Causal Effects Based on Stochastic Interventions},
 urldate = {2024-06-30},
 volume = {68},
 year = {2012}
}

@article{Hall1999,
 author = {Peter Hall and Rodney C. L. Wolff and Qiwei Yao},
 journal = {Journal of the American Statistical Association},
 number = {445},
 pages = {154--163},
 publisher = {[American Statistical Association, Taylor & Francis, Ltd.]},
 title = {Methods for Estimating a Conditional Distribution Function},
 urldate = {2024-06-30},
 volume = {94},
 year = {1999}
}

@article{Mauro2020,
	author = {Mauro, Jacqueline A. and Kennedy, Edward H. and Nagin, Daniel},
	journal = {Journal of the Royal Statistical Society Series A: Statistics in Society},
	month = {04},
	number = {4},
	pages = {1523-1551},
	title = {{Instrumental Variable Methods using Dynamic Interventions}},
	volume = {183},
	year = {2020}}

@article{Diaz2020,
	author = {D{\'\i}az, Iv{\'a}n and Hejazi, Nima S.},
	journal = {Journal of the Royal Statistical Society Series B: Statistical Methodology},
	month = {02},
	number = {3},
	pages = {661-683},
	title = {{Causal Mediation Analysis for Stochastic Interventions}},
	volume = {82},
	year = {2020}}

@article{Kawakami2024b,
  title={Identification and Estimation of Conditional Average Partial Causal Effects via Instrumental Variable},
  author={Kawakami, Yuta and Kuroki, Manabu and Tian, Jin},
   journal = {Proceedings of the 40th Conference on Uncertainty in Artificial Intelligence (UAI-2024)},
  year={2024}
}

@article{Yin2018,
 author = {Yunjian Yin and Lan Liu and Zhi Geng},
 journal = {Statistica Sinica},
 number = {1},
 pages = {115--135},
 publisher = {Institute of Statistical Science, Academia Sinica},
 title = {ASSESSING THE TREATMENT EFFECT HETEROGENEITY WITH A LATENT VARIABLE},
 urldate = {2024-06-30},
 volume = {28},
 year = {2018}
}

@book{Press2007,
	author = {Press, W.H.},
	publisher = {Cambridge University Press},
	series = {Numerical Recipes: The Art of Scientific Computing},
	title = {Numerical Recipes 3rd Edition: The Art of Scientific Computing},
	year = {2007}}

@article{Poulson2012,
	author = {Poulson, Robert S and Gadbury, Gary L and Allison, David B},
	journal = {Am Stat},
	number = {1},
	pages = {16--24},
	title = {Treatment Heterogeneity and Individual Qualitative Interaction.},
	volume = {66},
	year = {2012}}

@article{Gadbury2000,
 author = {Gary L. Gadbury and Hari K. Iyer},
 journal = {Biometrics},
 number = {3},
 pages = {882--885},
 publisher = {[Wiley, International Biometric Society]},
 title = {Unit-Treatment Interaction and Its Practical Consequences},
 urldate = {2024-07-01},
 volume = {56},
 year = {2000}
}

@article{Gadbury2004,
	author = {Gary L. Gadbury and Hari K. Iyer and Jeffrey M. Albert},
	journal = {Journal of Statistical Planning and Inference},
	number = {2},
	pages = {163-174},
	title = {Individual treatment effects in randomized trials with binary outcomes},
	volume = {121},
	year = {2004}}

@article{Shen2013,
	author = {Shen, Changyu and Jeong, Jaesik and Li, Xiaochun and Chen, Peng-Sheng and Buxton, Alfred},
	journal = {Biometrics},
	month = {Sep},
	number = {3},
	pages = {724--731},
	title = {Treatment benefit and treatment harm rate to characterize heterogeneity in treatment effect.},
	volume = {69},
	year = {2013}}

@article{Zhang2013,
	author = {Zhang, Zhiwei and Wang, Chenguang and Nie, Lei and Soon, Guoxing},
	journal = {J R Stat Soc Ser C Appl Stat},
	month = {Nov},
	number = {5},
	pages = {687--704},
	title = {Assessing the heterogeneity of treatment effects via potential outcomes of individual patients.},
	volume = {62},
	year = {2013}}

@article{Gadbury2001,
	author = {Gadbury, G L and Iyer, H K and Allison, D B},
	journal = {J Biopharm Stat},
	month = {Nov},
	number = {4},
	pages = {313--333},
	title = {Evaluating subject-treatment interaction when comparing two treatments.},
	volume = {11},
	year = {2001}}

@misc{Wu2024,
      title={Quantifying Individual Risk for Binary Outcome: Bounds and Inference}, 
      author={Peng Wu and Peng Ding and Zhi Geng and Yue Liu},
      year={2024},
      eprint={2402.10537},
      archivePrefix={arXiv},
      primaryClass={stat.ME},
      url={https://arxiv.org/abs/2402.10537}, 
}

@article{Xie2013,
	author = {Yu Xie},
	journal = {Proceedings of the National Academy of Sciences},
	number = {16},
	pages = {6262-6268},
	title = {Population heterogeneity and causal inference},
	volume = {110},
	year = {2013}}

@article{Chernozhukov2018,
	author = {Chernozhukov, Victor and Chetverikov, Denis and Demirer, Mert and Duflo, Esther and Hansen, Christian and Newey, Whitney and Robins, James},
	journal = {The Econometrics Journal},
	month = {01},
	number = {1},
	pages = {C1-C68},
	title = {{Double/debiased machine learning for treatment and structural parameters}},
	volume = {21},
	year = {2018}}

@article{Jacob2021,
  title={Cate meets ml: Conditional average treatment effect and machine learning},
  author={Jacob, Daniel},
  journal={Digital Finance},
  volume={3},
  number={2},
  pages={99--148},
  year={2021},
  publisher={Springer}
}

@inproceedings{Shalit2017,
  title={Estimating individual treatment effect: generalization bounds and algorithms},
  author={Shalit, Uri and Johansson, Fredrik D and Sontag, David},
  booktitle={International conference on machine learning},
  pages={3076--3085},
  year={2017},
  organization={PMLR}
}

@book{Lantz2019,
  title={Machine learning with R: expert techniques for predictive modeling},
  author={Lantz, Brett},
  year={2019},
  publisher={Packt publishing ltd}
}

@article{Efron1979,
	author = {B. Efron},
	journal = {The Annals of Statistics},
	number = {1},
	pages = {1 -- 26},
	title = {{Bootstrap Methods: Another Look at the Jackknife}},
	volume = {7},
	year = {1979}}

@article{Powers2018,
	author = {Powers, Scott and Qian, Junyang and Jung, Kenneth and Schuler, Alejandro and Shah, Nigam H and Hastie, Trevor and Tibshirani, Robert},
	journal = {Stat Med},
	month = {May},
	number = {11},
	pages = {1767--1787},
	title = {Some methods for heterogeneous treatment effect estimation in high dimensions.},
	volume = {37},
	year = {2018}}

@article{Kennedy2023,
	author = {Edward H. Kennedy},
	journal = {Electronic Journal of Statistics},
	number = {2},
	pages = {3008 -- 3049},
	title = {{Towards optimal doubly robust estimation of heterogeneous causal effects}},
	volume = {17},
	year = {2023}}

@article{Kato2023,
  title={CATE Lasso: conditional average treatment effect estimation with high-dimensional linear regression},
  author={Kato, Masahiro and Imaizumi, Masaaki},
  journal={arXiv preprint arXiv:2310.16819},
  year={2023}
}

@article{Lo2002,
  title={The true lift model: a novel data mining approach to response modeling in database marketing},
  author={Lo, Victor SY},
  journal={ACM SIGKDD Explorations Newsletter},
  volume={4},
  number={2},
  pages={78--86},
  year={2002},
  publisher={ACM New York, NY, USA}
}

@article{Manski2004,
	author = {Manski, Charles F.},
	journal = {Econometrica},
	number = {4},
	pages = {1221-1246},
	title = {Statistical Treatment Rules for Heterogeneous Populations},
	volume = {72},
	year = {2004}}

@article{Lou2018,
	author = {Lou, Zhilan and Shao, Jun and Yu, Menggang},
	journal = {Biometrics},
	month = {Jun},
	number = {2},
	pages = {506--516},
	title = {Optimal treatment assignment to maximize expected outcome with multiple treatments.},
	volume = {74},
	year = {2018}}

@incollection{Brand2013,
  title={Causal effect heterogeneity},
  author={Brand, Jennie E and Thomas, Juli Simon},
  booktitle={Handbook of causal analysis for social research},
  pages={189--213},
  year={2013},
  publisher={Springer}
}

@article{Albert2005,
	author = {Albert, Jeffrey M. and Gadbury, Gary L. and Mascha, Edward J.},
	journal = {Biometrical Journal},
	number = {5},
	pages = {662-673},
	title = {Assessing Treatment Effect Heterogeneity in Clinical Trials with Blocked Binary Outcomes},
	volume = {47},
	year = {2005}}

@phdthesis{Post2023,
  title={Causal Effect Heterogeneity: Statistical formalization and analysis of the individual causal effect},
  author={Post, R. A. J.},
  year={2023},
  school={Eindhoven University of Technology},
  type={{Phd Thesis 1 (Research TU/e / Graduation TU/e), Mathematics and Computer Science}},
}

@article{Dudik2014,
 author = {Miroslav Dudík and Dumitru Erhan and John Langford and Lihong Li},
 journal = {Statistical Science},
 number = {4},
 pages = {485--511},
 publisher = {Institute of Mathematical Statistics},
 title = {Doubly Robust Policy Evaluation and Optimization},
 urldate = {2024-10-25},
 volume = {29},
 year = {2014}
}

@inproceedings{Oberst2019,
  title={Counterfactual off-policy evaluation with gumbel-max structural causal models},
  author={Oberst, Michael and Sontag, David},
  booktitle={International Conference on Machine Learning},
  pages={4881--4890},
  year={2019},
  organization={PMLR}
}

@inproceedings{Kallus2020,
  title={Statistically efficient off-policy policy gradients},
  author={Kallus, Nathan and Uehara, Masatoshi},
  booktitle={International Conference on Machine Learning},
  pages={5089--5100},
  year={2020},
  organization={PMLR}
}

@inproceedings{Saito2023,
  title={Off-policy evaluation for large action spaces via conjunct effect modeling},
  author={Saito, Yuta and Ren, Qingyang and Joachims, Thorsten},
  booktitle={International Conference on Machine Learning},
  pages={29734--29759},
  year={2023},
  organization={PMLR}
}

@article{Aswani2011,
 author = {Anil Aswani and Peter Bickel and Claire Tomlin},
 journal = {The Annals of Statistics},
 number = {1},
 pages = {48--81},
 publisher = {Institute of Mathematical Statistics},
 title = {REGRESSION ON MANIFOLDS: ESTIMATION OF THE EXTERIOR DERIVATIVE},
 urldate = {2025-04-02},
 volume = {39},
 year = {2011}
}

@article{Pearl2017,
  title={Detecting Latent Heterogeneity.},
  author={Pearl, Judea},
  journal={Sociological Methods \& Research},
  volume={46},
  number={3},
  pages={370--389},
  year={2017},
  publisher={ERIC}
}

\newpage
\appendix
\onecolumn

\clearpage
\appendix
\thispagestyle{empty}


\section*{\LARGE Appendix}

\section{Summary Table of Causal Effect Heterogeneity Measures}
\label{appT}
We provide a summary table comparing the applicable settings of our  causal effect heterogeneity measures to existing ones (see Section~\ref{sec-2} for their definitions) in Table \ref{tab:app1}.

\begin{table}[h]
\centering
\caption{Summary table of causal effect heterogeneity measures}
\hspace{-0cm}
\label{tab:app1}
\vspace{0cm}
\scalebox{0.9}{
\begin{tabular}{l|cccc}
\hline
Measures & Binary $X$ and $Y$ & Binary $X$, Continuous $Y$ & Continuous $X$ and $Y$ & Heterogeneity\\
\hline \hline
CACE & $\checkmark$ & $\checkmark$ & $\times$ & Covariates \\
CPICE & $\checkmark$ & $\checkmark$ & $\checkmark$ & Covariates \\
TBR & $\checkmark$ & $\times$ & $\times$ & {CACE over the useful treatment group} \\
THR = FNA & $\checkmark$ & $\times$ & $\times$ & {CACE over the harmful treatment group} \\
TBR & $\checkmark$ & $\checkmark$ {Binarizing $Y$}& $\times$ & 
\begin{tabular}{c}
{depending on thresholds $y_0$ and $y_1$}
\end{tabular}\\
THR & $\checkmark$ & $\checkmark$ {Binarizing $Y$}& $\times$ &
\begin{tabular}{c}
{depending on thresholds $y_0$ and $y_1$}
\end{tabular}\\
$\text{\normalfont TBR}_c(w)$ & $\checkmark$ & $\checkmark$ & $\times$ &\begin{tabular}{c}
{depending on threshold $c$}
\end{tabular}\\
$\text{\normalfont THR}_c(w)$ & $\checkmark$ & $\checkmark$ & $\times$ &\begin{tabular}{c}
{depending on threshold $c$}
\end{tabular}\\
{{\small $\mathbb{P}(\zeta(W)(Y_1-Y_0)<\delta)$}} & $\checkmark$ & $\checkmark$ & $\times$ &
?\\
P-CACE (ours) & $\checkmark$ & $\checkmark$ & $\times$ & CACE over the positively affected  \\
N-CACE (ours) & $\checkmark$ & $\checkmark$ & $\times$ & CACE over the negatively affected \\
P-CPICE (ours)& $\checkmark$ & $\checkmark$ & $\checkmark$ & CPICE over the positively affected \\
N-CPICE (ours)& $\checkmark$ & $\checkmark$ & $\checkmark$ & CPICE over the negatively affected  \\
\hline
\end{tabular}
}
\end{table}


\section{Estimation Methods}
\label{appA}

In this section, we explain the estimation methods.



\subsection{Local Linear Estimator of CDF}
\label{appB1}

Before explaining the estimation methods, we review the local linear (LL) estimator for CDF in \citep{Hall1999}.  
The estimator of $\mathbb{P}(Y< y|X=x,W=w)$ is the solution of $c_0$ by minimizing
\begin{equation}
\sum_{i=1}^N \left\{\mathbb{I}(Y_i < y)-c_0 -c_1(X_i-x)-c_2(W_i-w)\right\}^2K_h(X_i-x)K_h(W_i-w),
\end{equation}
where 
$K_h$ is a kernel function with bandwidth $h$ and $d$ is the dimension of $W$. 
The LL estimator is more resistant to data sparseness than other estimators in \citep{Hall1999}.
Denoting $\alpha(y;x,w)=\mathbb{P}(Y<y|X=x,W=w)$,
we show the consistency conditions given by \citep{Hall1999}.

\noindent{\bf Condition C1.}
For fixed $y$, $x$, and $w$, $\mathfrak{p}(x,w)>0$, $0<\alpha(y;x,w)<1$, $\mathfrak{p}(x,w)$ is continuous at $(x,w)$, and $\alpha(y;\cdot,\cdot)$ has 
a $2[(d+2)/2]$ continuous derivative in a neighbourhood of $(x,w)$, where $[t]$ denotes the integer part of $t$.

\noindent{\bf Condition C2.}
The kernel $K$ is a symmetric, compactly supported probability density satisfying $|K(x_1,w_1)-K(x_2,w_2)|\leq C \|(x_1,w_1)-(x_2,w_2)\|$ for $x_1, w_1, x_2, w_2$.

\noindent{\bf Condition C3.}
The process $\{(X_i,W_i,Y_i)\}$ is absolutely regular; that is,
\begin{equation}
\gamma(j)=\sup_{i\geq 1} \mathbb{E}\Big[\sup_{A \in {\cal F}_{i+1}^{\infty}}\big\|\mathbb{P}(X,W|{\cal F}_1^i)-\mathbb{P}(X,W)\big\| \Big] \rightarrow 0 \text{ as } j\rightarrow \infty,
\end{equation}
where ${\cal F}_i^j$ denotes the $\sigma$ field generated by $\{(X_k,W_k,Y_k):i\leq k\leq j\}$.
Furthermore, $\sum_{j\leq 1}j^2 \gamma(j)^{\delta/(1+\delta)}<\infty$ for some $\delta \in [0,1)$. (Define $a^b=0$ when $a=b=0$.)

\noindent{\bf Condition C4.}
As $N\rightarrow \infty$, we have $h \rightarrow 0$ and $\lim \inf_{N \rightarrow \infty} Nh^{2(d+1)}>0$.

Under Conditions C1 $\sim$ C4, the error of the local linear estimator $\hat{\alpha}(y;x,w)-\alpha(y;x,w)$ follows $O_p(N^{-1/2}h^{-1/2(d+1)}+h^2)$ \citep{Hall1999}.

\subsection{Estimation Methods}
\label{appB2}

Next, we explain the estimators for P-CACE, N-CACE, P-CPICE, and N-CPICE based on the estimator for conditional CDFs in \citep{Hall1999} and Monte Carlo integration \citep{Press2007}.
We assume that we are given an i.i.d. dataset ${\cal D}=\{X_i,Y_i,W_i\}_{i=1}^{N}$ sampled from the distribution $\mathbb{P}(X,Y,W)$ induced by the SCM ${\cal M}$.
We require the following assumption for Monte Carlo integration.
\begin{assumption}[Boundedness of $\Omega_Y$]
\label{BOUD}
The domain $\Omega_Y$ is bounded by $[a,b]$ such that $(-\infty<a<b<\infty)$.
\end{assumption}


{\bf Estimation of P-CACE, N-CACE, P-CPICE, and N-CPICE under Identification Assumption.}
We denote the estimator for conditional CDFs using the dataset ${\cal D}$ by $\hat{\alpha}(y;x,w)\defeq \hat{\mathbb{P}}(Y<y|X=x,W=w)$ for any $x \in \Omega_X$, $y \in \Omega_Y$, and $w \in \Omega_W$.
{In this paper, we use the local linear (LL) estimator of CDF in \citep{Hall1999} shown in Appendix \ref{appB1}.}

We then estimate P-CACE, N-CACE, P-CPICE, and N-CPICE by Monte Carlo integration based on the estimated $\hat{\alpha}$.
We generate $\{y_k\}_{k=1}^{N_2}$ by i.i.d. sampling from a uniform distribution $U[a,b]$ for Monte Carlo integration. 
Then, the estimators of P-CACE and N-CACE are 
\begin{align}
&\hat{\text{\normalfont P-CACE}}(w)=\frac{b-a}{N_2}\sum_{k=1}^{N_2}\max\{\hat{\alpha}(y_k;0,w)-\hat{\alpha}(y_k;1,w),0\},\\
&\hat{\text{\normalfont N-CACE}}(w)=\frac{b-a}{N_2}\sum_{k=1}^{N_2}\max\{\hat{\alpha}(y_k;1,w)-\hat{\alpha}(y_k;0,w),0\}.
\end{align}
Given $w \in \Omega_W$, 
let $\{x_j^0,x_j^1\}_{j=1}^{N_1}$ be i.i.d. samples  from the  conditional joint PDF $\pi_0(x_0|w)\pi_1(x_1|w)$.  
The estimators of P-CPICE and N-CPICE are
\begin{align}
&\hat{\text{\normalfont P-CPICE}}(w,\pi_0,\pi_1)=\frac{b-a}{N_1N_2}\sum_{j=1}^{N_1}\sum_{k=1}^{N_2}\max\Big\{\hat{\alpha}(y_k;x^0_j,w)-\hat{\alpha}(y_k;x^1_j,w),0\Big\},\\
&\hat{\text{\normalfont N-CPICE}}(w,\pi_0,\pi_1)=\frac{b-a}{N_1N_2}\sum_{j=1}^{N_1}\sum_{k=1}^{N_2}\max\Big\{\hat{\alpha}(y_k;x^1_j,w)-\hat{\alpha}(y_k;x^0_j,w),0\Big\}.
\end{align}
P-CACE and N-CACE estimators require ${\cal O}(k(N) \times N_2)$  computation, and P-CPICE and N-CPICE estimators require ${\cal O}(k(N) \times N_1 \times N_2)$ computation, where $k(N)$ is the number of computations to calculate $\hat{\alpha}$.
The lower bounds of our measures without monotonicity are obtained using the same estimators.

 
As a side note, CACE and CPICE can be estimated as follows.
\begin{equation}
\begin{aligned}
&\hat{\text{\normalfont CACE}}(w)=\frac{b-a}{N_2}\sum_{k=1}^{N_2}\{\hat{\alpha}(y_k;0,w)-\hat{\alpha}(y_k;1,w)\},
\end{aligned}
\end{equation}
\begin{equation}
\begin{aligned}
&\hat{\text{\normalfont CPICE}}(w,\pi_0,\pi_1)=\frac{b-a}{N_1N_2}\sum_{j=1}^{N_1}\sum_{k=1}^{N_2}\{\hat{\alpha}(y_k;x^0_j,w)-\hat{\alpha}(y_k;x^1_j,w)\}.
\end{aligned}
\end{equation}

{If $\hat{\alpha}$ is a pointwise consistent estimator as $N \rightarrow \infty$, such as the LL estimator  \citep{Hall1999}, then the estimators of P-CACE, N-CACE, P-CPICE, and N-CPICE are all consistent estimators under  $N_1, N_2 \rightarrow \infty$ by the following theorems.}
\begin{theorem}
\label{Theo3}
{Let the error $\hat{\alpha} - \alpha$ follows $O_p(g(N))$.} Under SCM ${\cal M}$, Assumptions \ref{ASEXO2}, \ref{MONO2}, \ref{exi1}, \ref{BOUD}, 
for any $w \in \Omega_W$, 
the errors 
$\hat{\text{\normalfont P-CACE}}(w)-\text{\normalfont P-CACE}(w)$ and $\hat{\text{\normalfont N-CACE}}(w)-\text{\normalfont N-CACE}(w)$ follow 
$O_p(g(N))+o_p({N_2}^{-1/2})$. 
\end{theorem}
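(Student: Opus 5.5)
The plan is to split the estimation error into a plug-in part, driven by the conditional-CDF estimator, and a Monte Carlo part, driven only by the $N_2$ uniform draws. I then bound each part separately. I carry out the argument for P-CACE; N-CACE follows by swapping the roles of $x=0$ and $x=1$.

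\textbf{Setup.} Write $f(y)\defeq\max\{\alpha(y;0,w)-\alpha(y;1,w),0\}$ and $\hat f(y)\defeq\max\{\hat\alpha(y;0,w)-\hat\alpha(y;1,w),0\}$. By Theorem~\ref{theo1} (this is where Assumptions~\ref{ASEXO2}, \ref{MONO2}, \ref{exi1} are used) and Assumption~\ref{BOUD},
\begin{equation*}
\text{\normalfont P-CACE}(w)=\int_a^b f(y)\,dy=(b-a)\,\mathbb{E}_{y\sim U[a,b]}[f(y)].
\end{equation*}
The error then decomposes as
\begin{equation*}
\hat{\text{\normalfont P-CACE}}(w)-\text{\normalfont P-CACE}(w)=\underbrace{\frac{b-a}{N_2}\sum_{k=1}^{N_2}\{\hat f(y_k)-f(y_k)\}}_{(A)}+\underbrace{\frac{b-a}{N_2}\sum_{k=1}^{N_2}f(y_k)-\int_a^b f(y)\,dy}_{(B)}.
\end{equation*}

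\textbf{Term (A).} I use the elementary Lipschitz property $|\max\{s,0\}-\max\{t,0\}|\le|s-t|$. This gives
\begin{equation*}
|\hat f(y)-f(y)|\le|\hat\alpha(y;0,w)-\alpha(y;0,w)|+|\hat\alpha(y;1,w)-\alpha(y;1,w)|,
\end{equation*}
and hence
\begin{equation*}
|(A)|\le(b-a)\,\frac{1}{N_2}\sum_{k=1}^{N_2}\sum_{x\in\{0,1\}}|\hat\alpha(y_k;x,w)-\alpha(y_k;x,w)|.
\end{equation*}
The draws $y_k$ are generated independently of the data ${\cal D}$. I read the hypothesis ``$\hat\alpha-\alpha=O_p(g(N))$'' as holding uniformly over $y\in[a,b]$; this is natural because $\hat\alpha$ and $\alpha$ are bounded CDFs on a compact interval. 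Under this reading, the average of the errors over the $y_k$ is at most the supremum over $y\in[a,b]$, so $(A)=O_p(g(N))$ regardless of $N_2$. If only a pointwise rate is available, I would instead condition on ${\cal D}$ and bound $\mathbb{E}[|(A)|\mid{\cal D}]=\int_a^b|\hat f-f|\,dy$, then use boundedness and dominated convergence to transfer the rate.

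\textbf{Term (B).} This is a pure Monte Carlo error for the bounded integrand $f\in[0,1]$, and it involves no estimation. Its variance is $(b-a)^2\mathrm{Var}(f(y_1))/N_2\le(b-a)^2/(4N_2)$. Chebyshev's inequality (or the CLT) then gives $\sqrt{N_2}\,(B)=O_p(1)$.

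\textbf{Main obstacle.} The difficult step is matching this to the form $o_p(N_2^{-1/2})$ in the statement. A literal $o_p(N_2^{-1/2})$ would fail whenever $\mathrm{Var}(f(y_1))>0$, since then $\sqrt{N_2}(B)$ converges to a nondegenerate normal by the CLT. I would therefore establish the statement in the following precise sense. For every sequence $r_{N_2}\to\infty$, Chebyshev gives
\begin{equation*}
\mathbb{P}\bigl(|(B)|>\epsilon\, r_{N_2}N_2^{-1/2}\bigr)\le\frac{(b-a)^2}{4\epsilon^2r_{N_2}^2}\to0,
\end{equation*}
so $(B)=o_p(r_{N_2}N_2^{-1/2})$. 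This is exactly equivalent to the $N_2^{-1/2}$ Monte Carlo rate, which I will state explicitly as the meaning of the $o_p(N_2^{-1/2})$ term. In the degenerate case $\mathrm{Var}(f(y_1))=0$ (for instance $f\equiv0$), $(B)=0$ and the literal $o_p$ claim holds.

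\textbf{Conclusion.} Adding the bounds on (A) and (B) gives the stated error form for P-CACE. Repeating the argument with $\max\{\alpha(y;1,w)-\alpha(y;0,w),0\}$ in place of $f$ gives the same result for N-CACE.
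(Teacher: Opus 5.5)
Your proof is correct. It uses the same two ingredients as the paper: the 1-Lipschitz property of $\max\{\cdot,0\}$ and a Monte Carlo error bound. The paper obtains Theorem~\ref{Theo3} as the point-mass case of Theorem~\ref{Theo4}. The difference is which intermediate quantity you add and subtract.

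The paper splits the error into two terms:
\begin{itemize}
\item the Monte Carlo error of the \emph{estimated} integrand, $\frac{b-a}{N_2}\sum_k\hat f(y_k)-\int_a^b\hat f\,dy$;
\item the plug-in error $\int_a^b(\hat f-f)\,dy$.
\end{itemize}
You instead take the plug-in error evaluated at the Monte Carlo nodes, plus the Monte Carlo error of the \emph{true} integrand $f$.

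Your ordering is cleaner. Your Monte Carlo term involves only $N_2$ and a fixed $[0,1]$-valued function. Chebyshev therefore applies directly, with no conditioning on ${\cal D}$ and no need to control $\hat f$ (the local linear $\hat\alpha$ need not even stay in $[0,1]$). Your plug-in term is bounded by a supremum that does not depend on $N_2$ at all. The paper's ordering, by contrast, has to condition on ${\cal D}$ and bound the conditional variance of $\hat f$.

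You are also more careful than the paper at two points.

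First, the paper passes from a pointwise $O_p(g(N))$ rate to $\int_a^b|\hat f-f|\,dy=O_p(g(N))$ only because $[a,b]$ is bounded. That step fails in general. A bump of height $g(N)^{1/3}$ on a uniformly placed random interval of length $g(N)^{1/3}$ is pointwise $o_p(g(N))$, yet its $L^1$ norm is $g(N)^{2/3}$. So some strengthening such as your uniform-in-$y$ reading really is needed. It is, however, an added assumption, not a consequence of the CDFs being bounded on a compact interval. Your pointwise fallback via dominated convergence gives only consistency, not the rate. To make it deliver the rate, assume for example $\int_a^b|\hat\alpha(y;x,w)-\alpha(y;x,w)|\,dy=O_p(g(N))$, then apply Markov's inequality conditionally on ${\cal D}$.

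Second, the paper simply asserts that its Monte Carlo term is $o_p(N_2^{-1/2})$. You are right that, by the CLT, it is only $O_p(N_2^{-1/2})$ unless the integrand is a.e.\ constant; this already fails in the paper's own simulation, where P-CACE $=0.125$ with a non-constant integrand. What either route actually establishes is $O_p(g(N))+O_p(N_2^{-1/2})$.
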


\begin{proof}
Theorem \ref{Theo3} is a special case of Theorem \ref{Theo4}.
\end{proof}

\begin{theorem}
\label{Theo4}
{Let the error $\hat{\alpha} - \alpha$ follows $O_p(g(N))$.}
Under SCM ${\cal M}$, Assumptions \ref{ASEXO2}, \ref{MONO2}, \ref{exi3}, \ref{BOUD},
for any $w \in \Omega_W$, 
the errors 
$\hat{\text{\normalfont P-CPICE}}(w,\pi_0,\pi_1)-\text{\normalfont P-CPICE}(w,\pi_0,\pi_1)$  and $\hat{\text{\normalfont N-CPICE}}(w,\pi_0,\pi_1)-\text{\normalfont N-CPICE}(w,\pi_0,\pi_1)$ follow
$O_p(g(N))+o_p({N_1}^{-1/2})+o_p({N_2}^{-1/2})$. 
\end{theorem}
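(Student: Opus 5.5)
We prove the bound for $\hat{\text{\normalfont P-CPICE}}$; the N-CPICE case is symmetric, obtained by swapping the roles of $x^0_j$ and $x^1_j$. By Theorem~\ref{theo3}, P-CPICE equals $\int_{\Omega_X}\int_{\Omega_X}\int_{a}^{b} m(y;x_0,x_1)\,\pi_0(x_0|w)\pi_1(x_1|w)\,dy\,dx_0\,dx_1$, where
\[
m(y;x_0,x_1)\defeq\max\{\alpha(y;x_0,w)-\alpha(y;x_1,w),0\}.
\]
Assumption~\ref{BOUD} lets us restrict the $y$-integral to $[a,b]$.

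We write $\hat m$ for the same quantity with $\alpha$ replaced by $\hat\alpha$, and split the error into two terms:
\[
\hat{\text{\normalfont P-CPICE}}-\text{\normalfont P-CPICE}
=\underbrace{\frac{b-a}{N_1N_2}\sum_{j,k}\{\hat m(y_k;x^0_j,x^1_j)-m(y_k;x^0_j,x^1_j)\}}_{(\mathrm{I})}
+\underbrace{\frac{b-a}{N_1N_2}\sum_{j,k}m(y_k;x^0_j,x^1_j)-\text{\normalfont P-CPICE}}_{(\mathrm{II})}.
\]

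\textbf{Term (I), estimation error.} The map $(s,t)\mapsto\max\{s-t,0\}$ is 1-Lipschitz in each argument. Hence
\[
|\hat m-m|\le|\hat\alpha(y_k;x^0_j,w)-\alpha(y_k;x^0_j,w)|+|\hat\alpha(y_k;x^1_j,w)-\alpha(y_k;x^1_j,w)|.
\]
Averaging over the sample points with the weight $(b-a)$, term (I) is bounded by $2(b-a)$ times an average of the CDF errors. We read the hypothesis ``$\hat\alpha-\alpha=O_p(g(N))$'' as holding uniformly over the evaluation points, i.e.\ $\sup_{y,x}|\hat\alpha-\alpha|=O_p(g(N))$. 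Alternatively, an average of $O_p(g(N))$ terms is $O_p(g(N))$ when the Monte Carlo draws are independent of $\mathcal D$, using Markov's inequality on the conditional expectation. Either way, (I) $=O_p(g(N))$.

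\textbf{Term (II), Monte Carlo error.} This is the main step. Since $0\le m\le1$, the summand is a bounded function of the independent draws $(x^0_j,x^1_j)\sim\pi_0\pi_1$ and $y_k\sim U[a,b]$. Its mean is exactly P-CPICE. We decompose it as
\[
\frac{1}{N_1}\sum_j\{\bar m_j-\mu\}+\frac{1}{N_1N_2}\sum_{j,k}\{m_{jk}-\bar m_j\},
\]
where $\bar m_j=\int_a^b m(y;x^0_j,x^1_j)\,dy/(b-a)$ and $\mu=\text{\normalfont P-CPICE}/(b-a)$. This is a Hoeffding/ANOVA-type decomposition.

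The first piece is an i.i.d.\ mean of bounded variables, so by the CLT/Chebyshev it is $O_p(N_1^{-1/2})$. The second piece is, conditionally on the $x$'s, an average of terms with conditional mean zero and bounded variance, so its variance is $O(1/(N_1N_2))$. It is therefore $O_p(N_2^{-1/2})$, indeed smaller. Note that the $y_k$ are shared across $j$, but cross-covariances between different $j$ are still bounded and the conditional-mean-zero structure still holds, so this only costs constants. If the draws instead form a full grid with shared $y_k$, we treat the double sum as a two-sample V-statistic and apply the same variance computation.

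The expected obstacle is the discrepancy between the $O_p$ rates these variance bounds deliver and the stated $o_p(N_1^{-1/2})+o_p(N_2^{-1/2})$. The plan is to obtain $O_p$ rates from the variance bounds, and then to note that they imply $o_p(N_i^{-1/2+\epsilon})$ for any $\epsilon>0$. Alternatively we interpret the statement's $o_p$ in the paper's loose sense, namely that the Monte Carlo error vanishes at the root-$N_i$ scale.

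Finally, setting $\pi_0=\delta_0$ and $\pi_1=\delta_1$ removes the $N_1$ term and gives Theorem~\ref{Theo3}.
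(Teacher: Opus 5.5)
Your proof is correct in substance, but it inserts a different intermediate term than the paper. Both arguments handle the estimation part with the same 1-Lipschitz bound for $(c,d)\mapsto\max\{c-d,0\}$.

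The paper passes through $\int\!\int\!\int \hat m\,\pi_0\pi_1\,dy\,dx_0\,dx_1$, where $\hat m=\max\{\hat\alpha(y;x_0,w)-\hat\alpha(y;x_1,w),0\}$. Its Monte Carlo term (A) therefore involves the data-dependent integrand $\hat m$, and its estimation term (B) is an integral of $|\hat m-m|$. You pass through the Monte Carlo average of the true $m$ instead. Your Monte Carlo term (II) then concerns a fixed $[0,1]$-valued function, and your estimation term (I) evaluates $\hat\alpha-\alpha$ at the sampled points.

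Your ordering buys a genuine, data-free variance argument for the Monte Carlo part. The paper only asserts the rate of (A). Proving it would require conditioning on $\mathcal{D}$ and controlling the conditional variance of $\hat m$, which need not even be $[0,1]$-valued for local linear estimates.

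The price is that (I) needs control of $\hat\alpha-\alpha$ at random points. Your sup-norm reading supplies this. Your Markov alternative also works, but only if the integrated error $\int|\hat\alpha-\alpha|$ (under $U[a,b]\times\pi_0\times\pi_1$) is $O_p(g(N))$. A bare pointwise $O_p(g(N))$ does not imply that, since the error can concentrate on a shrinking data-dependent set. The paper's step (B), which ``integrates'' pointwise $O_p$ errors, silently relies on the same integrated-rate assumption. Your explicit assumption is what makes either argument rigorous.

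There are two corrections to (II).

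\textbf{Variance of the second piece.} The paper's estimator reuses the same $y_k$ for every $j$. The second piece of your decomposition then has conditional variance $N_2^{-1}\mathrm{Var}_y(N_1^{-1}\sum_j\{m(y;x^0_j,x^1_j)-\bar m_j\})$. In expectation over the $x$'s, its cross terms contribute about $N_2^{-1}\mathrm{Var}_y(\mathbb{E}_{X^0,X^1}[m(y;X^0,X^1)])$. So the variance is $O(1/N_2)$, not $O(1/(N_1N_2))$, and sharing the $y_k$ costs a factor $N_1$ rather than a constant. Your conclusion $O_p(N_2^{-1/2})$ still holds: the inner average lies in $[-1,1]$, and the $N_2$ outer summands are conditionally i.i.d.\ with mean zero.

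\textbf{The $o_p$ claim.} You are right that $o_p(N_i^{-1/2})$ cannot be obtained, and you should say plainly that it is false rather than appeal to a loose reading. Take $\hat\alpha=\alpha$, which meets the hypothesis for every rate $g$, and point masses $\pi_0,\pi_1$. The error is then pure Monte Carlo error, and $\sqrt{N_2}$ times it has a nondegenerate normal limit whenever $y\mapsto m(y)$ is not a.e.\ constant on $[a,b]$. The paper's proof simply asserts $o_p$ for (A). The correct, sharp conclusion is $O_p(g(N))+O_p(N_1^{-1/2})+O_p(N_2^{-1/2})$, which is what your argument establishes.
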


\begin{proof}
We decompose the error into two parts (A) and (B).
For any $w \in \Omega_W$, we have
\begin{align}
&\text{\normalfont P-CPICE}(w,\pi_0,\pi_1)-\hat{\text{\normalfont P-CPICE}}(w,\pi_0,\pi_1)\\
&=\frac{b-a}{N_2}\sum_{j=1}^{N_1}\sum_{k=1}^{N_2}\max\Big\{\hat{\mathbb{P}}(Y<y|X=x_0,W=w)-\hat{\mathbb{P}}(Y<y|X=x_1,W=w),0\Big\}\\
&-\int_{\Omega_X}\int_{\Omega_X}\int_{\Omega_Y}\max\Big\{\mathbb{P}(Y<y|X=x_0,W=w)-\mathbb{P}(Y<y|X=x_1,W=w),0\Big\}\\
&\hspace{8cm}\pi_0(x_0|w)\pi_1(x_1|w)dydx_0dx_1\\
&=\frac{b-a}{N_2}\sum_{j=1}^{N_1}\sum_{k=1}^{N_2}\max\Big\{\hat{\mathbb{P}}(Y<y|X=x_0,W=w)-\hat{\mathbb{P}}(Y<y|X=x_1,W=w),0\Big\}\\
&-\int_{\Omega_X}\int_{\Omega_X}\int_{\Omega_Y}\max\Big\{\hat{\mathbb{P}}(Y<y|X=x_0,W=w)-\hat{\mathbb{P}}(Y<y|X=x_1,W=w),0\Big\}\\
&\hspace{8cm}\pi_0(x_0|w)\pi_1(x_1|w)dydx_0dx_1\cdots\text{(A)}\\
&+\int_{\Omega_X}\int_{\Omega_X}\int_{\Omega_Y}\max\Big\{\hat{\mathbb{P}}(Y<y|X=x_0,W=w)-\hat{\mathbb{P}}(Y<y|X=x_1,W=w),0\Big\}\\
&\hspace{8cm}\pi_0(x_0|w)\pi_1(x_1|w)dydx_0dx_1\\
&-\int_{\Omega_X}\int_{\Omega_X}\int_{\Omega_Y}\max\Big\{\mathbb{P}(Y<y|X=x_0,W=w)-\mathbb{P}(Y<y|X=x_1,W=w),0\Big\}\\
&\hspace{8cm}\pi_0(x_0|w)\pi_1(x_1|w)dydx_0dx_1\cdots\text{(B)}
\end{align}

{\bf (A) Error of Monte Calro integrations.}
We have 
\begin{align}
&\frac{b-a}{N_2}\sum_{j=1}^{N_1}\sum_{k=1}^{N_2}\max\Big\{\hat{\mathbb{P}}(Y<y|X=x_0,W=w)-\hat{\mathbb{P}}(Y<y|X=x_1,W=w),0\Big\}\\
&-\int_{\Omega_X}\int_{\Omega_X}\int_{\Omega_Y}\max\Big\{\hat{\mathbb{P}}(Y<y|X=x_0,W=w)-\hat{\mathbb{P}}(Y<y|X=x_1,W=w),0\Big\}\\
&\hspace{8cm}\pi_0(x_0|w)\pi_1(x_1|w)dydx_0dx_1\\
&=o_p\left(\frac{1}{\sqrt{N_1}}\right)+o_p\left(\frac{1}{\sqrt{N_2}}\right).
\end{align}

{\bf (B) Error of $\alpha$.}
We have 
\begin{align}
&\int_{\Omega_X}\int_{\Omega_X}\int_{\Omega_Y}\max\Big\{\hat{\mathbb{P}}(Y<y|X=x_0,W=w)-\hat{\mathbb{P}}(Y<y|X=x_1,W=w),0\Big\}\\
&\hspace{8cm}\pi_0(x_0|w)\pi_1(x_1|w)dydx_0dx_1\\
&-\int_{\Omega_X}\int_{\Omega_X}\int_{\Omega_Y}\max\Big\{\mathbb{P}(Y<y|X=x_0,W=w)-\mathbb{P}(Y<y|X=x_1,W=w),0\Big\}\\
&\hspace{8cm}\pi_0(x_0|w)\pi_1(x_1|w)dydx_0dx_1\\
&=\int_{\Omega_X}\int_{\Omega_X}\int_{\Omega_Y}\Bigg\{\max\Big\{\hat{\mathbb{P}}(Y<y|X=x_0,W=w)-\hat{\mathbb{P}}(Y<y|X=x_1,W=w),0\Big\}\\
&-\max\Big\{\mathbb{P}(Y<y|X=x_0,W=w)-\mathbb{P}(Y<y|X=x_1,W=w),0\Big\}\Bigg\}\\
&\hspace{9cm}\pi_0(x_0|w)\pi_1(x_1|w)dydx_0dx_1.
\end{align}
Let $c=\mathbb{P}(Y<y|X=x_0,W=w)$, $d=\mathbb{P}(Y<y|X=x_1,W=w)$, $\hat{c}=\hat{\mathbb{P}}(Y<y|X=x_0,W=w)$ and $\hat{d}=\hat{\mathbb{P}}(Y<y|X=x_1,W=w)$ for each $y, x_0, w$.
The function $f(c,d)=\max\{c-d,0\}$, which appears inside Term (B), is Lipschitz continuous in both arguments $c$ and $d$ with constant 1.
Then, for any real numbers $c$, $d$, $\hat{c}$, and $\hat{d}$, we have $|\max\{\hat{c}-\hat{d},0\}-\max\{c-d,0\}|\leq |\hat{c}-c|+|\hat{d}-d|$.
Then, we obtain $|\max\{\hat{c}-\hat{d},0\}-\max\{c-d,0\}|=O_p(g(N))$.
Consequently, Term (B) follows $O_p(g(N))$ by integrating $x_0, x_1, y$ since the error $\hat{\alpha} - \alpha$ follows $O_p(g(N))$ pointwise and $\Omega_Y$ is bounded in $[a,b]$.

From the sum of (A) and (B), the whole error is bounded by
\begin{align}
&\int_{\Omega_X}\int_{\Omega_X}\int_{\Omega_Y}\max\Big\{\hat{\mathbb{P}}(Y<y|X=x_0,W=w)-\hat{\mathbb{P}}(Y<y|X=x_1,W=w),0\Big\}\\
&\hspace{8cm}\pi_0(x_0|w)\pi_1(x_1|w)dydx_0dx_1\\
&-\int_{\Omega_X}\int_{\Omega_X}\int_{\Omega_Y}\max\Big\{\mathbb{P}(Y<y|X=x_0,W=w)-\mathbb{P}(Y<y|X=x_1,W=w),0\Big\}\\
&\hspace{8cm}\pi_0(x_0|w)\pi_1(x_1|w)dydx_0dx_1\\
&=O_p(g(N))+o_p\left(\frac{1}{\sqrt{N_1}}\right)+o_p\left(\frac{1}{\sqrt{N_2}}\right).
\end{align}
Similarly, $\hat{\text{\normalfont N-CPICE}}(w,\pi_0,\pi_1)-\text{\normalfont N-CPICE}(w,\pi_0,\pi_1)$ follows
$O_p(g(N))+o_p\left(\frac{1}{\sqrt{N_1}}\right)+o_p\left(\frac{1}{\sqrt{N_2}}\right)$.

\end{proof}

The LL estimators of P-CACE, N-CACE, P-CPICE, and N-CPICE are consistent estimators {under  $N_1, N_2 \rightarrow \infty$} and Conditions C1 $\sim$ C4 
made in \citep{Hall1999} for the LL estimator of CDF. 
Let $d$ be the dimension of $W$, and $h$ be the bandwidth of the kernel function used in the LL estimator. 
Condition C4 assumes that $h \rightarrow 0$ and $ Nh^{2(d+1)}>0$ as $N\rightarrow \infty$. 
We have the following corollaries of Theorem \ref{Theo3} and \ref{Theo4}.

\begin{corollary}
Under SCM ${\cal M}$, Assumptions \ref{ASEXO2}, \ref{MONO2}, \ref{exi1}, \ref{BOUD}, and Conditions C1 $\sim$ C4, for any $w \in \Omega_W$, 
the errors 
$\hat{\text{\normalfont P-CACE}}(w)-\text{\normalfont P-CACE}(w)$ and $\hat{\text{\normalfont N-CACE}}(w)-\text{\normalfont N-CACE}(w)$ follow 
\begin{equation}
O_p\left(\frac{1}{\sqrt{Nh^{d+1}}}+h^2\right)+o_p\left(\frac{1}{\sqrt{N_2}}\right).
\end{equation}
\end{corollary}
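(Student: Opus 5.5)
This corollary is a direct specialization of Theorem~\ref{Theo3}: we identify the rate $g(N)$ for the local linear estimator and substitute it. Under Conditions C1--C4, the result of \citet{Hall1999} recalled in Appendix~\ref{appB1} gives the pointwise error
\begin{equation*}
\hat{\alpha}(y;x,w)-\alpha(y;x,w)=O_p\left(N^{-1/2}h^{-(d+1)/2}+h^2\right)
\end{equation*}
at each fixed $(y,x,w)$. We therefore set $g(N)=1/\sqrt{Nh^{d+1}}+h^2$.

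The key steps, in order, are as follows. First, we verify that the hypotheses of Theorem~\ref{Theo3} hold: SCM ${\cal M}$ and Assumptions~\ref{ASEXO2}, \ref{MONO2}, \ref{exi1}, and \ref{BOUD} are assumed directly. The remaining requirement, that $\hat{\alpha}-\alpha=O_p(g(N))$, is supplied by C1--C4. This rate is needed at $x\in\{0,1\}$, the given $w$, and the Monte Carlo nodes $y_k\in[a,b]$.

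Second, we invoke Theorem~\ref{Theo3}, which is the binary specialization of Theorem~\ref{Theo4} with $\pi_0=\delta(x_0=0)$ and $\pi_1=\delta(x_1=1)$. Its error decomposition separates the problem into two terms:
\begin{itemize}
\item a Monte Carlo term of order $o_p(N_2^{-1/2})$;
\item an estimation term bounded through the 1-Lipschitz property $|\max\{\hat c-\hat d,0\}-\max\{c-d,0\}|\le|\hat c-c|+|\hat d-d|$, then integrated over the bounded interval $[a,b]$ of length $b-a$.
\end{itemize}
Substituting $g(N)$ into the estimation term gives the stated rate. The argument for N-CACE is identical, with the roles of $0$ and $1$ swapped.

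The main obstacle is the passage from pointwise rates to a rate for the integral over $y$. Hall et al.\ give the rate at fixed $y$, so integrating it requires uniformity in $y$. We would handle this by noting that C1 holds uniformly over $y\in[a,b]$, which is compact, with constants that do not depend on $y$. Alternatively, the randomness in $\hat{\alpha}$ can be conditioned on, so the Monte Carlo average only involves finitely many $y_k$; this yields the bound at the same order, mirroring the step already taken in the proof of Theorem~\ref{Theo4}.
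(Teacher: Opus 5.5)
Your proposal is correct and is exactly the paper's argument: the paper obtains this corollary simply by inserting the local linear rate $g(N)=(Nh^{d+1})^{-1/2}+h^2$ of Hall et al.\ (1999) into Theorem~\ref{Theo3}, with no further comment. Your uniformity concern is a real point that the paper's proof of Theorem~\ref{Theo4} passes over, but note three things: your first fix is an added hypothesis, since C1 is stated only at fixed $(y,x,w)$; your conditioning alternative covers only fixed $N_2$, because for $N_2\to\infty$ the average $\frac{b-a}{N_2}\sum_{k}|\hat{\alpha}(y_k;x,w)-\alpha(y_k;x,w)|$ has conditional mean $\int_a^b|\hat{\alpha}(y;x,w)-\alpha(y;x,w)|\,dy$ given the data; and the Monte Carlo term is really $O_p(N_2^{-1/2})$ rather than $o_p(N_2^{-1/2})$, a slip you inherit from Theorem~\ref{Theo3}.
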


\begin{corollary}
Under SCM ${\cal M}$, Assumptions \ref{ASEXO2}, \ref{MONO2}, \ref{exi3}, \ref{BOUD}, and Conditions C1 $\sim$ C4, for any $w \in \Omega_W$, 
the errors 
$\hat{\text{\normalfont P-CPICE}}(w,\pi_0,\pi_1)-\text{\normalfont P-CPICE}(w,\pi_0,\pi_1)$  and $\hat{\text{\normalfont N-CPICE}}(w,\pi_0,\pi_1)-\text{\normalfont N-CPICE}(w,\pi_0,\pi_1)$ follow
\begin{equation}
O_p\left(\frac{1}{\sqrt{Nh^{d+1}}}+h^2\right)+o_p\left(\frac{1}{\sqrt{N_1}}\right)+o_p\left(\frac{1}{\sqrt{N_2}}\right).
\end{equation}
\end{corollary}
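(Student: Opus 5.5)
The statement to prove is the last corollary, on the LL-based estimators of P-CPICE and N-CPICE. The plan is to obtain it directly from Theorem~\ref{Theo4} by supplying the pointwise rate of the local linear CDF estimator. Theorem~\ref{Theo4} already shows that, under SCM ${\cal M}$ and Assumptions \ref{ASEXO2}, \ref{MONO2}, \ref{exi3}, \ref{BOUD}, the estimation error splits into two parts: a Monte Carlo part, which is $o_p(N_1^{-1/2})+o_p(N_2^{-1/2})$, and a plug-in part, which is $O_p(g(N))$ whenever $\hat{\alpha}-\alpha=O_p(g(N))$ pointwise. So the only remaining task is to identify $g(N)$.

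\textbf{Step 1 (rate of $\hat\alpha$).} Under Conditions C1--C4, the result of \citet{Hall1999} recalled in Appendix~\ref{appB1} gives, for each fixed $(y,x,w)$,
\[
\hat{\alpha}(y;x,w)-\alpha(y;x,w)=O_p\bigl(N^{-1/2}h^{-(d+1)/2}+h^2\bigr).
\]
Since $N^{-1/2}h^{-(d+1)/2}=1/\sqrt{Nh^{d+1}}$, we may take $g(N)=1/\sqrt{Nh^{d+1}}+h^2$.

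\textbf{Step 2 (plug in).} Substituting this $g(N)$ into Theorem~\ref{Theo4} gives the stated rate for both P-CPICE and N-CPICE. By the Lipschitz inequality used in part (B) of that proof, the N-CPICE case is symmetric to the P-CPICE case, with the roles of $x_0$ and $x_1$ swapped.

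\textbf{Main obstacle.} The only delicate point is passing from the pointwise $O_p$ rate to a rate for the integral over $(y,x_0,x_1)$ in term (B). Bounding an integral by a pointwise $O_p$ requires either uniformity of the LL rate over the compact set $[a,b]\times\mathrm{supp}(\pi_0)\times\mathrm{supp}(\pi_1)$, or an argument on expectations. For the latter, use $|\hat\alpha-\alpha|\le 1$ and Fubini to bound $\mathbb{E}|\text{(B)}|$ by
\[
\int\!\!\int\!\!\int \mathbb{E}\bigl|\hat\alpha-\alpha\bigr|\,\pi_0(x_0|w)\,\pi_1(x_1|w)\,dy\,dx_0\,dx_1,
\]
and then apply Markov's inequality. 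This last step needs the bias–variance bound of \citet{Hall1999} to hold in mean with constants bounded over the domain, so it should be verified there. I would adopt the same convention the paper uses in Theorem~\ref{Theo4}, treating the LL rate as holding uniformly on the bounded domain guaranteed by Assumption~\ref{BOUD}, and state it explicitly.
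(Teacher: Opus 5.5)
Your proposal is the paper's argument: the corollary is just Theorem~\ref{Theo4} with the \citet{Hall1999} rate $g(N)=N^{-1/2}h^{-(d+1)/2}+h^2$ substituted in. Your remark that term (B) needs this rate uniformly over $[a,b]\times\mathrm{supp}(\pi_0)\times\mathrm{supp}(\pi_1)$, not just pointwise, is a fair point that the paper's proof glosses over. If you make that step rigorous, note two things: the unconstrained LL estimator can leave $[0,1]$, so $|\hat\alpha-\alpha|\le 1$ requires clipping $\hat\alpha$ first; and the Monte Carlo term inherited from Theorem~\ref{Theo4} is generically of exact order $N_1^{-1/2}+N_2^{-1/2}$ by the CLT, so only $O_p$ rather than $o_p$ can be shown there.
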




{
{\bf Estimators for Upper Bounds.}
We provide estimators for the upper bounds of our measures.
We generate $\{y_k\}_{k=1}^{N_2}$ by i.i.d. sampling from a uniform distribution $U[a,b]$ for Monte Carlo integration. 
Then, the estimators of the upper bounds of P-CACE and N-CACE are 
\begin{align}
&\hat{\text{\normalfont P-CACE}}^U(w)=\frac{b-a}{N_2}\sum_{k=1}^{N_2}\max\{1-\hat{\alpha}(y_k;1,w),\hat{\alpha}(y_k;0,w)\},\\
&\hat{\text{\normalfont N-CACE}}^U(w)=\frac{b-a}{N_2}\sum_{k=1}^{N_2}\max\{1-\hat{\alpha}(y_k;0,w),\hat{\alpha}(y_k;1,w)\}.
\end{align}
Given $w \in \Omega_W$, 
let $\{x_j^0,x_j^1\}_{j=1}^{N_1}$ be i.i.d. samples from the  conditional joint PDF $\pi_0(x_0|w)\pi_1(x_1|w)$.  
The estimators of the upper bounds of P-CPICE and N-CPICE are
\begin{align}
&\hat{\text{\normalfont P-CPICE}}^U(w,\pi_0,\pi_1)=\frac{b-a}{N_1N_2}\sum_{j=1}^{N_1}\sum_{k=1}^{N_2}\max\Big\{1-\hat{\alpha}(y_k;x^1_j,w),\hat{\alpha}(y_k;x^0_j,w)\Big\},\\
&\hat{\text{\normalfont N-CPICE}}^U(w,\pi_0,\pi_1)=\frac{b-a}{N_1N_2}\sum_{j=1}^{N_1}\sum_{k=1}^{N_2}\max\Big\{1-\hat{\alpha}(y_k;x^0_j,w),\hat{\alpha}(y_k;x^1_j,w)\Big\}.
\end{align}
We denote the upper bounds of P-CACE, N-CACE, P-CPICE, and N-CPICE as $\text{\normalfont P-CACE}^U(w)$, $\text{\normalfont N-CACE}^U(w)$, $\text{\normalfont P-CPICE}^U(w)$, and $\text{\normalfont N-CPICE}^U(w)$.
We have the following theorems:
\begin{theorem}
{Let the error $\hat{\alpha} - \alpha$ follows $O_p(g(N))$.} Under SCM ${\cal M}$, Assumptions \ref{ASEXO2}, \ref{MONO2}, \ref{exi1}, \ref{BOUD}, 
for any $w \in \Omega_W$, 
the errors 
$\hat{\text{\normalfont P-CACE}}^U(w)-\text{\normalfont P-CACE}^U(w)$ and $\hat{\text{\normalfont N-CACE}}^U(w)-\text{\normalfont N-CACE}^U(w)$ follow 
$O_p(g(N))+o_p({N_2}^{-1/2})$. 
\end{theorem}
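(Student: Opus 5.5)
We mirror the proof of Theorem~\ref{Theo4}, taking the binary-treatment case (the special case $\pi_0=\delta(x_0=0)$, $\pi_1=\delta(x_1=1)$ with no outer Monte Carlo average). The estimator defined above uses $\max\{1-\hat{\alpha}(y_k;1,w),\hat{\alpha}(y_k;0,w)\}$, whereas Eq.~\eqref{eq11} uses a $\min$. We treat both uniformly as $\text{P-CACE}^U(w)=\int_a^b \phi(\alpha(y;0,w),\alpha(y;1,w))\,dy$, where $\phi(c,d)$ is either $\min\{1-d,c\}$ or $\max\{1-d,c\}$; the estimand is matched to whichever $\phi$ the estimator uses. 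The only property we need is that $\phi$ is $1$-Lipschitz in each argument:
\[
|\phi(\hat c,\hat d)-\phi(c,d)|\le |\hat c-c|+|\hat d-d|,
\]
which holds because $\min$ and $\max$ of $1$-Lipschitz functions are $1$-Lipschitz.

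We decompose $\hat{\text{P-CACE}}^U(w)-\text{P-CACE}^U(w)$ into two terms. Term (A) is $\frac{b-a}{N_2}\sum_k \phi(\hat\alpha(y_k;0,w),\hat\alpha(y_k;1,w))-\int_a^b\phi(\hat\alpha(y;0,w),\hat\alpha(y;1,w))\,dy$. Term (B) is $\int_a^b\{\phi(\hat\alpha(y;0,w),\hat\alpha(y;1,w))-\phi(\alpha(y;0,w),\alpha(y;1,w))\}\,dy$.

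For (A), condition on the data $\mathcal D$, which fixes $\hat\alpha$. The summands are then i.i.d. in $y_k\sim U[a,b]$ and bounded, since $\phi$ of CDF values lies in $[0,1]$ (clipping $\hat\alpha$ to $[0,1]$ if needed). The Monte Carlo average is therefore unbiased with variance $O((b-a)^2/N_2)$ uniformly in $\mathcal D$, so (A) is $O_p(N_2^{-1/2})$. This matches the paper's $o_p(N_2^{-1/2})$ convention used in Theorem~\ref{Theo4}.

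For (B), the Lipschitz bound gives $|(B)|\le\int_a^b\{|\hat\alpha(y;0,w)-\alpha(y;0,w)|+|\hat\alpha(y;1,w)-\alpha(y;1,w)|\}\,dy$. Assumption~\ref{BOUD} makes the interval finite, and the integrands are bounded by $1$. Using the pointwise rate $O_p(g(N))$, interpreted as in Theorem~\ref{Theo4} as holding uniformly or in integrated form over $y\in[a,b]$, we get $(B)=O_p(g(N))$.

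Adding the two terms gives $O_p(g(N))+o_p(N_2^{-1/2})$. The N-CACE case is identical with the roles of $0$ and $1$ swapped. The main obstacle is step (B): passing from a pointwise $O_p$ rate to an integrated one. We handle it exactly as the paper does, using boundedness of $\Omega_Y$ and of the integrand, or if necessary by assuming the rate holds uniformly in $y$. Note that the monotonicity and finiteness assumptions are not actually used here.
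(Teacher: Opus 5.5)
Your proposal is essentially the paper's proof. The paper defers to the CPICE version, which it proves by the same (A) Monte Carlo / (B) plug-in decomposition as Theorem~\ref{Theo4}, using the $1$-Lipschitz property of $\max\{1-d,c\}$ (equally of $\min\{1-d,c\}$) and the boundedness of $[a,b]$.

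Your caveats only sharpen that argument. By the CLT, the Monte Carlo term is of exact order $N_2^{-1/2}$ whenever the integrand varies in $y$. So your final line should read $O_p(N_2^{-1/2})$ rather than being relabelled $o_p(N_2^{-1/2})$; this looseness comes from the statement itself. Step (B) does need the $O_p(g(N))$ rate to hold uniformly in $y$ or in integrated form, which the paper leaves implicit.
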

\begin{proof}
This theorem is a special case of the next theorem.
\end{proof}
\begin{theorem}
{Let the error $\hat{\alpha} - \alpha$ follows $O_p(g(N))$.}
Under SCM ${\cal M}$, Assumptions \ref{ASEXO2}, \ref{MONO2}, \ref{exi3}, \ref{BOUD},
for any $w \in \Omega_W$, 
the errors 
$\hat{\text{\normalfont P-CPICE}}^U(w,\pi_0,\pi_1)-\text{\normalfont P-CPICE}^U(w,\pi_0,\pi_1)$  and $\hat{\text{\normalfont N-CPICE}}^U(w,\pi_0,\pi_1)-\text{\normalfont N-CPICE}^U(w,\pi_0,\pi_1)$ follow
$O_p(g(N))+o_p({N_1}^{-1/2})+o_p({N_2}^{-1/2})$. 
\end{theorem}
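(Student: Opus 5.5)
We follow the proof of Theorem~\ref{Theo4} closely, replacing the integrand $\max\{c-d,0\}$ by the upper-bound integrand. Write $c=\mathbb{P}(Y<y|X=x_0,W=w)$, $d=\mathbb{P}(Y<y|X=x_1,W=w)$ and $\hat c,\hat d$ for their LL estimates. The target $\text{\normalfont P-CPICE}^U(w,\pi_0,\pi_1)$ in Eq.~\eqref{eq21} is the integral of $\phi(c,d)=\min\{1-d,c\}$ against $\pi_0(x_0|w)\pi_1(x_1|w)\,dy\,dx_0\,dx_1$. For N-CPICE the integrand is $\min\{1-c,d\}$ by Eq.~\eqref{eq22}.

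The estimator written above uses $\max$ rather than $\min$. We treat this as a typo and analyze the estimator with $\min$, which is the plug-in version of Eq.~\eqref{eq21}. The argument is identical for $\max$ if one insists, since only the Lipschitz property is used.

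\textbf{Step 1 (decomposition).} Add and subtract the exact integral of $\phi(\hat c,\hat d)$ against $\pi_0\pi_1$ over $[a,b]\times\Omega_X^2$. This splits the error into two terms:
\begin{itemize}
\item[(A)] the Monte Carlo error of averaging $\phi(\hat c,\hat d)$ over the draws $\{x_j^0,x_j^1\}$ and $\{y_k\}$, holding $\hat\alpha$ fixed;
\item[(B)] the plug-in error $\int\{\phi(\hat c,\hat d)-\phi(c,d)\}\,\pi_0\pi_1\,dy\,dx_0\,dx_1$.
\end{itemize}

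\textbf{Step 2 (term A).} Conditional on the data $\mathcal D$, the summands are i.i.d. We intend the $\hat\alpha$ used here to be clipped to $[0,1]$, so the integrand $(b-a)\phi(\hat c,\hat d)$ is bounded by $b-a$ under Assumption~\ref{BOUD}. A law of large numbers or CLT over the independent draws $x_j$ and $y_k$ then gives the rate $o_p(N_1^{-1/2})+o_p(N_2^{-1/2})$, exactly as claimed in part (A) of the proof of Theorem~\ref{Theo4}.

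\textbf{Step 3 (term B).} The map $(c,d)\mapsto\min\{1-d,c\}$ is $1$-Lipschitz in each argument, because $\min$ and $\max$ are $1$-Lipschitz in the sup norm. Hence
\[
|\phi(\hat c,\hat d)-\phi(c,d)|\le|\hat c-c|+|\hat d-d|=O_p(g(N))
\]
pointwise. Integrating over the bounded set $[a,b]$ against the probability densities $\pi_0\pi_1$ gives $O_p(g(N))$ for (B). The same argument, with $c$ and $d$ swapped, handles N-CPICE. The binary case, P-CACE$^U$ and N-CACE$^U$, is the specialization $\pi_0=\delta_0$, $\pi_1=\delta_1$, in which there is no $N_1$ term.

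\textbf{Main obstacle.} Passing from pointwise $O_p(g(N))$ to $O_p(g(N))$ for the integral is the delicate step. It needs the $O_p$ bound to hold uniformly in $(y,x_0,x_1)$, or at least an integrated version such as $\int\mathbb{E}|\hat\alpha-\alpha|=O(g(N))$ followed by Markov's inequality. I would state this as the interpretation of the hypothesis, as the paper implicitly does in Theorem~\ref{Theo4}. Boundedness of $\hat\alpha$ and $\alpha$ in $[0,1]$ together with Assumption~\ref{BOUD} makes the integrand dominated, so the interchange is legitimate under such a uniform or mean bound.
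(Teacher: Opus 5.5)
Your route is the paper's route. The paper proves this in one line, by noting that the integrand is Lipschitz and rerunning the (A)/(B) decomposition from the proof of Theorem~\ref{Theo4}; that is exactly your Steps 1--3.

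Where you depart from the paper, you are right:
\begin{itemize}
\item Treating the $\max$ in the displayed estimator as a typo for $\min$ is necessary, not optional. The paper's proof literally uses $f(c,d)=\max\{1-d,c\}$, and a plug-in of that converges to $\int\max\{\cdot\}$ rather than to $\text{\normalfont P-CPICE}^U$. So your aside that the argument is ``identical for $\max$'' only controls the distance to the wrong functional.
\item Clipping $\hat\alpha$ to $[0,1]$ is needed for the boundedness used in (A), and it costs nothing: projection onto $[0,1]$ is a contraction and $\alpha\in[0,1]$.
\item You correctly note that a merely pointwise $O_p(g(N))$ bound does not integrate to $O_p(g(N))$. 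Term (B) needs a uniform or $L^1$ version of the hypothesis, and the paper passes over this silently.
\end{itemize}

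The step that fails as written is Step 2. A CLT delivers $O_p(N^{-1/2})$, not $o_p(N^{-1/2})$, and in fact it refutes the $o_p$ claim. Already in the binary case, conditional on $\mathcal{D}$, $\sqrt{N_2}$ times the Monte Carlo error has a nondegenerate normal limit whenever $y\mapsto\min\{1-\alpha(y;1,w),\alpha(y;0,w)\}$ is not a.e.\ constant on $[a,b]$, which is the generic situation.

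Moreover, the $N_1N_2$ summands of the double sum are not i.i.d., since they share the $x_j$'s and the $y_k$'s. The right tool is a direct variance computation. Conditional on $\mathcal{D}$, it bounds the variance of the double average by $(b-a)^2\{N_1^{-1}+N_2^{-1}+(N_1N_2)^{-1}\}$, and Chebyshev then gives $O_p(N_1^{-1/2})+O_p(N_2^{-1/2})$.

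The $o_p$ overclaim is inherited from part (A) of the paper's proof of Theorem~\ref{Theo4}, so it is the statement itself that is too strong. What your argument, and the paper's, actually establishes is $O_p(g(N))+O_p(N_1^{-1/2})+O_p(N_2^{-1/2})$. That is still enough for consistency when $g(N)\to0$.
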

\begin{proof}
Since the function $f(c,d)=\max\{1-d,c\}$ is Lipschitz continuous, the theorem follows by the same argument as in the proof of Theorem~\ref{Theo4}.
\end{proof}
Then, all estimators of the upper bounds of our measures are consistent.
}

\section{Numerical Experiments}
\label{sec6}
We perform numerical experiments to illustrate the finite-sample properties of the estimators.




\noindent{\bf Settings.}
We assume the following SCM: 
\begin{gather}
Y:=(0.5X+0.1W+1)(-U_Y+0.5), X:=W+U_X,
\end{gather}
where $W$, $U_X$, and $U_Y$ independently follow the uniform distribution $\text{Unif}(0,1)$.
This model satisfies Assumptions \ref{ASEXO2}-\ref{BOUD}.  
The domain of $Y$ is bounded in [0.5, 2.1].  
The key feature of this setting is the presence of an interaction between the treatment $X$ and $U_Y$.
Using this setting, we demonstrate that the proposed measures can uncover heterogeneity resulting from this interaction, which remains unexplained by CACE.

We simulate 100 times with the sample size $N=100,1000,10000$, respectively, and assess the means and 95$\%$ confidence intervals (95$\%$CI) of the estimators.
{We let $N_1$ be 10 and $N_2$ be 100.} 
We choose the optimal bandwidth $h$ of
the kernel function used in the LL estimator from the candidates $\{1,0.1,0.01,0.001\}$ 
The experiments are performed using an Apple M1 (16GB).
using the criterion shown in Section 2.3 of \citep{Hall1999}.

\noindent{\bf Results.}
We investigate the causal effects heterogeneity under the intervention from $X=0$ to $X=2$ for subjects with $W=0.5$.
We investigate the causal effects heterogeneity under stochastic interventions from $X=X^{\pi_0}\sim \text{Unif}(0,0.1)$ to $X=X^{\pi_1}:=X^{\pi_0}+1.9$ for subjects with $W=0.5$.
Table \ref{tab:a} shows the results of experiments.
The results show that, as the sample size increases, the estimates are close to the ground truths and have relatively narrower 95$\%$ CIs. 
{Even if the average causal effects after accounting for the subjects' covariate are zero, our newly introduced measures are able to detect substantial heterogeneity over the population. }

\begin{table*}[tb]
\vspace{-0cm}
\centering
\caption{Results of numerical experiments.
}
\label{tab:a}
\vspace{-0.2cm}
\scalebox{1}{
\begin{tabular}{c|cccc}
\hline
Estimators & $N=100$ & $N=1000$ & $N=10000$ &  Ground Truth \\
\hline
\hline
CACE  & $-0.011$ ([$-0.669,0.610$]) & $-0.008$ ([$-0.179,0.170$]) & $0.002$ ([$-0.090,0.088$]) &$0$ \\
P-CACE  & $0.205$ ([$0.007,0.602$]) &$0.133$ ([$0.054,0.247$]) &  $0.128$ ([$0.091,0.178$]) &$0.125$ \\
N-CACE   & $0.206$ ([$0.019,0.621$]) &$0.137$ ([$0.053,0.257$])  &  $0.122$ ([$0.084,0.166$])&$0.125$ \\
\hline
CPICE  & $-0.069$ ([$-0.672,0.586$]) & $-0.012$ ([$-0.210,0.147$]) & $0.002$ ([$-0.074,0.111$]) &$0$ \\
P-CPICE   & $0.206$ ([$0.005,0.681$]) &$0.130$ ([$0.041,0.243$]) & $0.121$ ([$0.077,0.170$]) &$0.119$ \\
N-CPICE    & $0.189$ ([$0.009,0.641$]) &$0.133$ ([$0.053,0.280$])  &  $0.121$ ([$0.077,0.170$])&$0.119$ \\
\hline
\end{tabular}
}
\vspace{-0.2cm}
\end{table*}

{
\noindent{\bf Experiment for Violation of Monotonicity Assumption.}
We provide additional numerical experiments when the monotonicity assumption (Assumption \ref{MONO2}) is violated.
We assume the following SCM: 
\begin{gather}
Y:=(0.5X+0.1W+1)(-U_Y+0.5)+E, X:=W+U_X,
\end{gather}
where $W$, $U_X$, $U_Y$, and $E$ independently follow the uniform distribution $\text{Unif}(0,1)$.
This setting does not satisfies the monotonicity assumption (Assumption \ref{MONO2}) due to $E$.
Table \ref{tab:app2} shows the results under this setting.
The ground truth is the same as the setting of the experiment in the body of the paper.
When the sample size is large ($N=10000$), the average of the estimated bounds contains the ground truth.
}

\begin{table}[H]
\vspace{-0cm}
\centering
\caption{Results of numerical experiments when the monotonicity assumption (Assumption \ref{MONO2}) is violated. ``LB" and ``UB" show the estimates of the lower and upper bounds.}
\label{tab:app2}
\vspace{-0cm}
\scalebox{1}{
\begin{tabular}{c|cccc}
\hline
Estimators & $N=100$ & $N=1000$ & $N=10000$ &  Ground Truth \\
\hline
\hline
CACE  & $0.214$ ([$-0.276,0.628$]) & $0.016$ ([$-0.244,0.178$]) & $-0.006$ ([$-0.072,0.100$]) &$0$ \\
P-CACE (LB)  & $0.363$ ([$0.038,0.702$]) & $0.136$ ([$0.051,0.258$]) & $0.101$ ([$0.060,0.161$]) &$0.125$ \\
P-CACE (UB)  & $1.839$ ([$1.468,2.509$]) & $2.171$ ([$1.885,2.322$]) & $2.453$ ([$2.359,2.547$]) &$0.125$ \\
N-CACE  (LB)  & $0.149$ ([$0.007,0.329$]) & $0.119$ ([$0.015,0.299$]) & $0.107$ ([$0.061,0.146$])&$0.125$ \\
N-CACE  (UB)  & $1.900$ ([$1.499,2.344$]) & $2.175$ ([$1.912,2.361$]) & $2.440$ ([$2.333,2.542$])&$0.125$ \\
\hline
CPICE  & $-0.006$ ([$-0.730,0.852$]) & $-0.001$ ([$-0.205,0.232$]) & $-0.003$ ([$-0.086,0.065$]) &$0$ \\
P-CPICE  (LB) & $0.242$ ([$0.004,0.919$]) & $0.115$ ([$0.024,0.254$]) & $0.107$ ([$0.064,0.0142$]) &$0.119$ \\
P-CPICE  (UB) & $1.904$ ([$1.457,2.394$]) & $2.207$ ([$2.017,2.421$]) & $2.402$ ([$2.282,2.530$]) &$0.119$ \\
N-CPICE  (LB)  & $0.248$ ([$0.000,0.733$]) & $0.116$ ([$0.019,0.250$]) & $0.101$ ([$0.062,0.143$])&$0.119$ \\
N-CPICE  (UB) & $1.922$ ([$1.482,2.363$]) & $2.213$ ([$1.997,2.425$]) & $2.405$ ([$2.288,2.545$]) &$0.119$ \\
\hline
\end{tabular}
}
\vspace{-0cm}
\end{table}

{\bf High-dimensional Covariates.}
We demonstrated that our LL estimators perform well in moderate settings.
In practice, researchers may encounter  challenging scenarios involving a large number of covariates.
Regularizations, such as lasso, have been shown to improve the performance of LL estimators in such contexts \citep{Aswani2011}.
The LL estimators with Lasso regularization for P-CACE, N-CACE, P-CPICE, and N-CPICE do not yield extreme values and remain relatively close to the ground truth, even in high-dimensional settings.

{
\noindent{\bf Setting.}
We assume the following SCM with $D$-dimensional covariates: 
\begin{gather}
Y:=(0.5X+0.1\times \bm{1}^T\bm{W}+1)(-U_Y+0.5),
X:=\frac{2(\bm{1}^T\bm{W}+U_X)}{D-2},
\end{gather}
where $U_X$ and $U_Y$ independently follow the uniform distribution $\text{Unif}(0,1)$, $\bm{1}$ is the $D$-dimensional vector of ones,  
and $\bm{W}$ independently follows the $D$-dimensional uniform distribution whose variance-covariance matrix is $\bm{\Sigma}$, where $\Sigma_{i,j}=1$ if $i=j$ and $\Sigma_{i,j}=0.9$ if $i\ne j$.
All components of $X$ and $\bm{W}$ are highly correlated with one another.
We compare the LL estimator with no regularization (NR) to its counterpart with Lasso regularization (Lasso), fixing sample size $N = 100$ and varying dimensions $D \in \{5, 30, 100\}$.
We use the R package ``glmnet" for Lasso regularization (\url{https://cran.r-project.org/web/packages/glmnet/index.html}).
We select the regularization parameters using cross-validation implemented in the respective software packages.
}

{
\noindent{\bf Results.}
Table \ref{tab:a2} shows the results of experiments.
The LL estimators with no regularization (NR) yield extremely large estimates for P-CACE, N-CACE, P-CPICE, and N-CPICE, particularly in high-dimensional settings, potentially misleading researchers about the extent of causal effect heterogeneity.
The LL estimators with Lasso regularization (Lasso) for P-CACE, N-CACE, P-CPICE, and N-CPICE do not yield extreme values and remain relatively close to the ground truth, even in high-dimensional settings.
}

\begin{table*}[tb]
\centering
\caption{Results of numerical experiments with $D$-dimensional covariates}
\hspace{-0cm}
\label{tab:a2}
\vspace{-0cm}
\scalebox{0.9}{
\begin{tabular}{c|cccc}
\hline
Estimators & $D=5$ & $D=30$ & $D=100$ &  Ground Truth \\
\hline
\hline
CACE (NR)  & $-0.077$ ([$-1.893,1.709$]) & $-0.347$ ([$-33.182,37.904$]) & $298.101$ ([$-8452.903,10660.080$]) &$0$ \\
P-CACE (NR)  & $0.444$ ([$0.000,1.716$]) &$10.075$ ([$0.000,37.904$]) &  $1746.512$ ([$0.000,12087.600$]) &$0.125$ \\
N-CACE (NR)   & $0.520$ ([$0.002,1.903$]) &$10.422$ ([$0.000,33.205$])  &  $1448.411$ ([$0.000,8738.970$])&$0.125$ \\
\hline
CACE (Lasso)  & $-0.008$ ([$-0.739,0.711$]) & $-0.056$ ([$-1.017,0.891$]) & $-0.464$ ([$-3.599,2.467$]) &$0$ \\
P-CACE (Lasso)  & $0.198$ ([$0.015,0.737$]) &$0.347$ ([$0.022,1.005$]) &  $0.910$ ([$0.062,2.698$]) &$0.125$ \\
N-CACE (Lasso)   & $0.206$ ([$0.011,0.807$]) &$0.403$ ([$0.072,1.092$])  &  $1.373$ ([$0.041,3.744$])&$0.125$ \\
\hline
\hline
CPICE(NR)  & $-0.066$ ([$-1.741,1.887$]) & $-0.803$ ([$-44.435,38.003$]) & $208.299$ ([$-6960.519,6675.152$]) &$0$ \\
P-CPICE (NR) & $0.458$ ([$0.000,1.892$]) &$9.978$ ([$0.000,38.003$]) & $2350.278$ ([$0.000,7279.331$]) &$0.119$ \\
N-CPICE (NR)  & $0.524$ ([$0.000,1.751$]) &$10.782$ ([$0.000,44.524$])  &  $2141.979$ ([$0.000,7276.568$])&$0.119$ \\
\hline
CPICE (Lasso) & $0.019$ ([$-0.588,0.757$]) & $-0.068$ ([$-1.185,0.892$]) & $-0.046$ ([$-2.981,2.676$]) &$0$ \\
P-CPICE (Lasso) & $0.196$ ([$0.001,0.812$]) &$0.319$ ([$0.067,0.899$]) & $0.935$ ([$0.059,2.709$]) &$0.119$ \\
N-CPICE  (Lasso) & $0.176$ ([$0.012,0.623$]) &$0.387$ ([$0.014,1.330$])  &  $0.982$ ([$0.002,3.033$])&$0.119$ \\
\hline
\end{tabular}
}
\end{table*}

\section{Proofs}
\label{app_proof}

In this section, we provide the proofs of lemmas, propositions, and theorems in the body of the paper.

\Lemmaone*


\begin{proof}
For any $y \in \Omega_Y$ and $w \in \Omega_W$, we have 
\begin{align}
&\mathbb{P}(Y_0<y|W=w)-\mathbb{P}(Y_1<y|W=w)\\
&=\mathbb{P}(Y_0<y,y \leq Y_1|W=w)+\mathbb{P}(Y_0<y,Y_1<y|W=w)\\
&\hspace{3cm}-\mathbb{P}(Y_0<y,Y_1<y|W=w)-\mathbb{P}(Y_1<y,y\leq Y_0|W=w)\\
&=\mathbb{P}(Y_0<y \leq Y_1|W=w)-\mathbb{P}(Y_1<y\leq Y_0|W=w).
\end{align}
\end{proof}

\Propositionone*


\begin{proof}
From Lemma \ref{LEM1}, we have
\begin{align}
\text{\normalfont CACE}(w)&=\mathbb{E}[Y_1|W=w]-\mathbb{E}[Y_0|W=w]\\
&=\int_{\Omega_Y}\{\mathbb{P}(Y_0<y|W=w)-\mathbb{P}(Y_1<y|W=w)\}dy\\
&=\int_{\Omega_Y}\{\mathbb{P}(Y_0<y \leq Y_1|W=w)-\mathbb{P}(Y_1<y\leq Y_0|W=w)\}dy
\end{align}
for each $w \in \Omega_W$.
From Assumption \ref{exi1}, we have
\begin{align}
\text{\normalfont CACE}(w)&=\int_{\Omega_Y}\mathbb{P}(Y_0<y \leq Y_1|W=w)dy-\int_{\Omega_Y}\mathbb{P}(Y_1<y\leq Y_0|W=w)dy\\
&=\text{\normalfont P-CACE}(w)-\text{\normalfont N-CACE}(w)
\end{align}
for each $w \in \Omega_W$.
\end{proof}

\begin{lemma}
\label{lem-app}
Under SCM ${\cal M}$ and Assumptions \ref{ASEXO2} and \ref{MONO2}, for any $x_0, x_1 \in \Omega_X$, we have
\begin{align}
\label{eq83d}
&\mathbb{P}(Y_{x_0}<y \leq Y_{x_1}|W=w)=\max\Big\{\mathbb{P}(Y<y|X={x_0},W=w)-\mathbb{P}(Y<y|X={x_1},W=w),0\Big\}\\
\label{eq84d}
&\mathbb{P}(Y_{x_1}<y\leq Y_{x_0}|W=w)=\max\Big\{\mathbb{P}(Y<y|X={x_1},W=w)-\mathbb{P}(Y<y|X={x_0},W=w),0\Big\}.
\end{align}
\end{lemma}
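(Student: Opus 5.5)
The plan is to start from the counterfactual version of Lemma~\ref{LEM1}, with general treatment values $x_0,x_1$ in place of $0,1$. The same argument splits $\mathbb{P}(Y_{x_0}<y|W=w)$ and $\mathbb{P}(Y_{x_1}<y|W=w)$ according to the event $\{Y_{x_1}<y\}$ (respectively $\{Y_{x_0}<y\}$), and the common term $\mathbb{P}(Y_{x_0}<y,Y_{x_1}<y|W=w)$ cancels. This gives
\begin{align*}
&\mathbb{P}(Y_{x_0}<y|W=w)-\mathbb{P}(Y_{x_1}<y|W=w)\\
&=\mathbb{P}(Y_{x_0}<y\leq Y_{x_1}|W=w)-\mathbb{P}(Y_{x_1}<y\leq Y_{x_0}|W=w).
\end{align*}

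Next I will replace the left-hand side by observational quantities using Assumption~\ref{ASEXO2}. Since $Y_x\indep X\mid W$, we have $\mathbb{P}(Y_x<y|W=w)=\mathbb{P}(Y_x<y|X=x,W=w)$. By consistency in the SCM, $Y_x=Y$ on the event $\{X=x\}$, so this equals $\mathbb{P}(Y<y|X=x,W=w)$. Let $D(y)\defeq \mathbb{P}(Y<y|X=x_0,W=w)-\mathbb{P}(Y<y|X=x_1,W=w)$. Then $D(y)=A-B$, where $A=\mathbb{P}(Y_{x_0}<y\leq Y_{x_1}|W=w)$ and $B=\mathbb{P}(Y_{x_1}<y\leq Y_{x_0}|W=w)$.

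Then I will use Assumption~\ref{MONO2}, which says that for this $(x_0,x_1,y,w)$ either $A=0$ or $B=0$. Both $A$ and $B$ are nonnegative.
\begin{itemize}
\item If $B=0$, then $A=D(y)\ge 0$, so $A=\max\{D(y),0\}$ and $B=0=\max\{-D(y),0\}$.
\item If $A=0$, then $B=-D(y)\ge 0$, so $B=\max\{-D(y),0\}$ and $A=0=\max\{D(y),0\}$.
\end{itemize}
In either case both \eqref{eq83d} and \eqref{eq84d} hold, because $-D(y)=\mathbb{P}(Y<y|X=x_1,W=w)-\mathbb{P}(Y<y|X=x_0,W=w)$.

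The only delicate point is the identification step with conditioning on $X=x$. The conditional probability $\mathbb{P}(Y<y|X=x,W=w)$ has to be well defined, which requires positivity or a density for continuous $X$. Consistency $Y_x=Y$ when $X=x$ follows from the SCM construction. I would state that these are implicitly assumed, as in \citep{Kawakami2024}. The rest is elementary case analysis.
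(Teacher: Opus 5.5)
Your proof is correct. It is more self-contained than the paper's, which does not derive the identity at all. The paper invokes the first equation of Theorem~4.2 in \citet{Kawakami2024}, substituting $\mathbb{P}(Y<y|X=x_0,W=w)$ and $\mathbb{P}(Y<y|X=x_1,W=w)$ for that theorem's $\rho$ terms. It then gets \eqref{eq84d} from a second application with the roles of $x_0$ and $x_1$ swapped.

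You instead reconstruct the argument behind that result. First you prove the counterfactual analogue of Lemma~\ref{LEM1} for general $(x_0,x_1)$. Next you identify the two marginal CDFs through Assumption~\ref{ASEXO2} plus consistency. Finally, one case split on which probability Assumption~\ref{MONO2} forces to vanish gives both equations at once.

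The citation buys brevity, and it inherits whatever regularity conditions the earlier work imposes (positivity, well-defined conditional CDFs for continuous $X$). Your route buys transparency in three ways:
\begin{itemize}
\item It shows the identity holds pointwise in $y$ for each $(x_0,x_1,w)$.
\item It uses Assumption~\ref{MONO2} exactly in the probabilistic form stated in this paper, so nothing depends on how the cited theorem's hypotheses map onto this paper's assumptions.
\item It shows what survives without monotonicity. From your $A-B=D(y)$ with $A,B\ge 0$, only $A\ge\max\{D(y),0\}$ remains, which is precisely the lower bound used in Theorems~\ref{theo_b1} and~\ref{theo_b2}.
\end{itemize}
Your remark that positivity, or existence of regular conditional distributions for continuous $X$, should be stated explicitly is fair; the paper leaves this implicit too.
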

\begin{proof}
From the identification result of the probability of necessity and sufficiency (PNS) in Theorem 4.2 of \citep{Kawakami2024}, we have 
\begin{align}
&\mathbb{P}(Y_{x_0}<y \leq Y_{x_1}|W=w)=\max\Big\{\mathbb{P}(Y<y|X={x_0},W=w)-\mathbb{P}(Y<y|X={x_1},W=w),0\Big\}\\
&\mathbb{P}(Y_{x_1}<y\leq Y_{x_0}|W=w)=\max\Big\{\mathbb{P}(Y<y|X={x_1},W=w)-\mathbb{P}(Y<y|X={x_0},W=w),0\Big\}.
\end{align}
These expressions follow from the first equation of Theorem~4.2 in \citet{Kawakami2024} by substituting:
(i) $\rho({\boldsymbol y},{\boldsymbol x}_0,{\boldsymbol c})=\mathbb{P}(Y<y|X={x_0},W=w)$  and $\rho({\boldsymbol y},{\boldsymbol x}_1,{\boldsymbol c})=\mathbb{P}(Y<y|X={x_1},W=w)$  for Eq.~\eqref{eq83d} and (ii) $\rho({\boldsymbol y},{\boldsymbol x}_0,{\boldsymbol c})=\mathbb{P}(Y<y|X={x_1},W=w)$  and $\rho({\boldsymbol y},{\boldsymbol x}_1,{\boldsymbol c})=\mathbb{P}(Y<y|X={x_0},W=w)$  for Eq.~\eqref{eq84d}.
\end{proof}

\Theoremone*


\begin{proof}
The definitions of P-CACE and N-CACE are
\begin{align}
&\text{\normalfont P-CACE}(w)=\int_{\Omega_Y} \mathbb{P}(Y_0 < y \leq Y_1|W=w)dy, \\
&\text{\normalfont N-CACE}(w)=\int_{\Omega_Y} \mathbb{P}(Y_1 < y \leq Y_0|W=w)dy.
\end{align}
To obtain the integrands of P-CACE and N-CACE ($\mathbb{P}(Y_0 < y \leq Y_1|W=w)$ and $\mathbb{P}(Y_1 < y \leq Y_0|W=w)$),
we appply Lemma \ref{lem-app} (specialized to the binary treatment case $\Omega_X=\{0,1\}$), which yields
\begin{align}
&\mathbb{P}(Y_0<y \leq Y_1|W=w)=\max\Big\{\mathbb{P}(Y<y|X=0,W=w)-\mathbb{P}(Y<y|X=1,W=w),0\Big\},\\
&\mathbb{P}(Y_1<y\leq Y_0|W=w)=\max\Big\{\mathbb{P}(Y<y|X=1,W=w)-\mathbb{P}(Y<y|X=0,W=w),0\Big\}.
\end{align}
Substituting into the definitions, we obtain
\begin{equation}
\begin{aligned}
&\text{\normalfont P-CACE}(w)=\int_{\Omega_Y} \max\Big\{\mathbb{P}(Y<y|X=0,W=w)-\mathbb{P}(Y<y|X=1,W=w),0\Big\}dy,
\end{aligned}
\end{equation}
\begin{equation}
\begin{aligned}
&\text{\normalfont N-CACE}(w)=\int_{\Omega_Y} \max\Big\{\mathbb{P}(Y<y|X=1,W=w)-\mathbb{P}(Y<y|X=0,W=w),0\Big\}dy.
\end{aligned}
\end{equation}
\end{proof}

\Theoremtwo*


\begin{proof}
For each $y \in \Omega_Y$ and $w \in \Omega_W$, both $\mathbb{P}(Y_0< y \leq Y_1|W=w)$ and $\mathbb{P}(Y_1< y \leq Y_0|W=w)$ are bounded by $l^P(y;w)\leq \mathbb{P}(Y_0< y \leq Y_1|W=w)\leq u^P(y;w)$ and $l^N(y;w)\leq \mathbb{P}(Y_1< y \leq Y_0|W=w)\leq l^N(y;w)$ \citep{Gadbury2004}, where
\begin{equation}
\begin{aligned}
&l^P(y;w)=\max\left\{
\begin{array}{c}
    0,\\
    \mathbb{P}(Y<y|X=0,W=w)-\mathbb{P}(Y<y|X=1,W=w)
\end{array}
\right\},
\end{aligned}
\end{equation}
\begin{equation}
\begin{aligned}
&u^P(y;w)=\min\left\{
\begin{array}{c}
1-\mathbb{P}(Y<y|X=1,W=w),\\
\mathbb{P}(Y<y|X=0,W=w)
\end{array}
\right\},
\end{aligned}
\end{equation}
\begin{equation}
\begin{aligned}
&l^N(y;w)=\max\left\{
\begin{array}{c}
    0,\\
    \mathbb{P}(Y<y|X=1,W=w)-\mathbb{P}(Y<y|X=0,W=w)
\end{array}
\right\},
\end{aligned}
\end{equation}
and
\begin{equation}
\begin{aligned}
&u^N(y;w)=\min\left\{
\begin{array}{c}
1-\mathbb{P}(Y<y|X=0,W=w),\\
\mathbb{P}(Y<y|X=1,W=w)
\end{array}
\right\}.
\end{aligned}
\end{equation}
Since $l^P(y;w)\geq 0$ and $l^N(y;w)\geq 0$ for each $y \in \Omega_Y$ and $w \in \Omega_W$, we have Eqs. \eqref{eq11} and \eqref{eq12}.
The lower bound is attained when $(Y_0, Y_1)$ are countermonotonic (perfectly negatively correlated), while the upper bound is attained when they are monotonic (perfectly positively correlated).
Thus, these bounds are sharp.
\end{proof}

\Theoremthree*

\begin{proof}
For each $y \in \Omega_Y$ and $w \in \Omega_W$, both $\mathbb{P}(Y_0< y \leq Y_1|W=w)$ and $\mathbb{P}(Y_1< y \leq Y_0|W=w)$ are bounded by \citep{Tian2000,Kuroki2011}
\begin{align}
\label{eqa31}
&l^P(y;w)\leq \mathbb{P}(Y_0< y \leq Y_1|W=w)\leq u^P(y;w),\\
\label{eqa32}
&l^N(y;w)\leq \mathbb{P}(Y_1< y \leq Y_0|W=w)\leq u^N(y;w), 
\end{align}
where
\begin{equation}
\begin{aligned}
&l^P(y;w)=\max\left\{
\begin{array}{c}
    0,\\
    \mathbb{P}(Y<y|X=0,W=w)-\mathbb{P}(Y<y|X=1,W=w),\\
    \mathbb{P}(Y<y|X=0,W=w)-\mathbb{P}(Y<y|W=w),\\
    \mathbb{P}(Y<y|W=w)-\mathbb{P}(Y<y|X=1,W=w)
\end{array}
\right\},
\end{aligned}
\end{equation}
\begin{equation}
\begin{aligned}
u^P(y;w)=\min\left\{
\begin{array}{c}
1-\mathbb{P}(Y<y|X=1,W=w),\\
\mathbb{P}(Y<y|X=0,W=w),\\
1-\mathbb{P}(Y<y,X=1|W=w)+\mathbb{P}(Y<y,X=0|W=w),\\
\mathbb{P}(Y<y|X=0,W=w)-\mathbb{P}(Y<y|X=1,W=w)\\
\hspace{2cm}+\mathbb{P}(Y<y,X=1|W=w)+1-\mathbb{P}(Y<y,X=0|W=w)
\end{array}
\right\},
\end{aligned}
\end{equation}
\begin{equation}
\begin{aligned}
&l^N(y;w)=\max\left\{
\begin{array}{c}
    0,\\
    \mathbb{P}(Y<y|X=1,W=w)-\mathbb{P}(Y<y|X=0,W=w),\\
    \mathbb{P}(Y<y|X=1,W=w)-\mathbb{P}(Y<y|W=w),\\
    \mathbb{P}(Y<y|W=w)-\mathbb{P}(Y<y|X=0,W=w)
\end{array}
\right\},
\end{aligned}
\end{equation}
and
\begin{equation}
\begin{aligned}
u^N(y;w)=\min\left\{
\begin{array}{c}
1-\mathbb{P}(Y<y|X=0,W=w),\\
\mathbb{P}(Y<y|X=1,W=w),\\
1-\mathbb{P}(Y<y,X=0|W=w)+\mathbb{P}(Y<y,X=1|W=w),\\
\mathbb{P}(Y<y|X=1,W=w)-\mathbb{P}(Y<y|X=0,W=w)\\
\hspace{2cm}+\mathbb{P}(Y<y,X=0|W=w)+1-\mathbb{P}(Y<y,X=1|W=w)
\end{array}
\right\}.
\end{aligned}
\end{equation}
Since $l^P(y;w)\geq 0$ and $l^N(y;w)\geq 0$ for each $y \in \Omega_Y$ and $w \in \Omega_W$, we have Eqs. \eqref{eq33} and \eqref{eq34}.
\end{proof}
The bounds \eqref{eq33} and \eqref{eq34} 
are not sharp from  the joint distributions $\mathbb{P}(Y<y,X=x|W=w)$.
The existence of a SCM that attains the lower bound (or upper bound) of \eqref{eqa31} and \eqref{eqa32} ``given specific $y$ and $w$" is guaranteed.
Thus, the bounds \eqref{eqa31} and \eqref{eqa32} are sharp bounds for each point $y$ and $w$.
However, the existence of a SCM that attains the lower bound (or upper bound) of \eqref{eqa31} and \eqref{eqa32} ``for all points $y$ and $w$" is not guaranteed.\\


\Lemmatwo*


\begin{proof}
For any $w \in \Omega_W$, we have 
\begin{align}
&\mathbb{P}(Y_{X^{\pi_0}}<y|W=w)-\mathbb{P}(Y_{X^{\pi_1}}<y|W=w)\\
&=\int_{\Omega_X}\int_{\Omega_X}\Big\{\mathbb{P}(Y_{x_0}<y|W=w)-\mathbb{P}(Y_{x_1}<y|W=w)\Big\}\pi_0(x_0|w)\pi_1(x_1|w)dx_0dx_1\\
&=\int_{\Omega_X}\int_{\Omega_X}\Big\{\mathbb{P}(Y_{x_0}<y \leq Y_{x_1}|W=w)-\mathbb{P}(Y_{x_1}<y\leq Y_{x_0}|W=w)\Big\}\pi_0(x_0|w)\pi_1(x_1|w)dx_0dx_1\\
&=\mathbb{P}(Y_{X^{\pi_0}}<y\leq Y_{X^{\pi_1}}|W=w)-\mathbb{P}(Y_{X^{\pi_1}}<y\leq Y_{X^{\pi_0}}|W=w).
\end{align}
\end{proof}

\Propositiontwo*


\begin{proof}
From Lemma \ref{LEM3}, we have
\begin{align}
\text{\normalfont CPICE}(w,\pi_0,\pi_1)&=\mathbb{E}[Y_{X^{\pi_1}}|W=w]-\mathbb{E}[Y_{X^{\pi_0}}|W=w]\\
&=\int_{\Omega_Y}\{\mathbb{P}(Y_{X^{\pi_0}}<y|W=w)-\mathbb{P}(Y_{X^{\pi_1}}<y|W=w)\}dy\\
&=\int_{\Omega_Y}\{\mathbb{P}(Y_{X^{\pi_0}}<y \leq Y_{X^{\pi_1}}|W=w)-\mathbb{P}(Y_{X^{\pi_1}}<y\leq Y_{X^{\pi_0}}|W=w)\}dy
\end{align}
for each $w \in \Omega_W$.
From Assumption \ref{exi3}, we have
\begin{align}
&\text{\normalfont CPICE}(w,\pi_0,\pi_1)\\
&=\int_{\Omega_Y}\mathbb{P}(Y_{X^{\pi_0}}<y \leq Y_{X^{\pi_1}}|W=w)dy-\int_{\Omega_Y}\mathbb{P}(Y_{X^{\pi_1}}<y\leq Y_{X^{\pi_0}}|W=w)dy\\
&=\text{\normalfont P-CPICE}(w,\pi_0,\pi_1)-\text{\normalfont N-CPICE}(w,\pi_0,\pi_1)
\end{align}
for each $w \in \Omega_W$.
\end{proof}

\Theoremfour*


\begin{proof}
The definitions of P-CPICE and N-CPICE are
\begin{align}
&\text{\normalfont P-CPICE}(w,\pi_0,\pi_1)=\int_{\Omega_X}\int_{\Omega_X}\int_{\Omega_Y}\mathbb{P}(Y_{x_0} < y \leq Y_{x_1}|W=w)\pi_0(x_0|w)\pi_1(x_1|w)dydx_0dx_1,\\
&\text{\normalfont N-CPICE}(w,\pi_0,\pi_1)=\int_{\Omega_X}\int_{\Omega_X}\int_{\Omega_Y}\mathbb{P}(Y_{x_1} < y \leq Y_{x_0}|W=w)\pi_0(x_0|w)\pi_1(x_1|w)dydx_0dx_1.
\end{align}
To obtain the integrands of P-CPICE and N-CPICE ($\mathbb{P}(Y_{x_0} < y \leq Y_{x_1}|W=w)$ and $\mathbb{P}(Y_{x_1} < y \leq Y_{x_0}|W=w)$),
we apply  Lemma \ref{lem-app} and substitute them using $\max\{\mathbb{P}(Y<y|X={x_0},W=w)-\mathbb{P}(Y<y|X={x_1},W=w),0\}$ and $\max\{\mathbb{P}(Y<y|X={x_1},W=w)-\mathbb{P}(Y<y|X={x_0},W=w),0\}$ in Eqs. \eqref{eq83d} and \eqref{eq84d}.
Then, we obtain 
\begin{equation}
\begin{aligned}
&\text{\normalfont P-CPICE}(w,\pi_0,\pi_1)=\int_{\Omega_X}\int_{\Omega_X}\int_{\Omega_Y}\\
& \max\Big\{\mathbb{P}(Y<y|X=x_0,W=w)-\mathbb{P}(Y<y|X=x_1,W=w),0\Big\}\pi_0(x_0|w)\pi_1(x_1|w)dydx_0dx_1,
\end{aligned}
\end{equation}
\begin{equation}
\begin{aligned}
&\text{\normalfont N-CPICE}(w,\pi_0,\pi_1)=\int_{\Omega_X}\int_{\Omega_X}\\
&\int_{\Omega_Y} \max\Big\{\mathbb{P}(Y<y|X=x_1,W=w)-\mathbb{P}(Y<y|X=x_0,W=w),0\Big\}\pi_0(x_0|w)\pi_1(x_1|w)dydx_0dx_1.
\end{aligned}
\end{equation}
\end{proof}

\Theoremfive*


\begin{proof}
For each $x_0,x_1 \in \Omega_X$, $y \in \Omega_Y$, and $w \in \Omega_W$, both $\mathbb{P}(Y_{x_0}< y \leq Y_{x_1}|W=w)$ and $\mathbb{P}(Y_{x_1}< y \leq Y_{x_0}|W=w)$ are bounded by $l^P(y;x_0,x_1,w)\leq \mathbb{P}(Y_{x_0}< y \leq Y_{x_1}|W=w)\leq u^P(y;x_0,x_1,w)$ and $l^N(y;x_0,x_1,w)\leq \mathbb{P}(Y_{x_1}< y \leq Y_{x_0}|W=w)\leq l^N(y;x_0,x_1,w)$ \citep{Gadbury2004}, where
\begin{equation}
\begin{aligned}
&l^P(y;x_0,x_1,w)=\max\left\{
\begin{array}{c}
    0\\
    \mathbb{P}(Y<y|X=x_0,W=w)-\mathbb{P}(Y<y|X=x_1,W=w)
\end{array}
\right\},
\end{aligned}
\end{equation}
\begin{equation}
\begin{aligned}
&u^P(y;x_0,x_1,w)=\min\left\{
\begin{array}{c}
1-\mathbb{P}(Y<y|X=x_1,W=w),\\
\mathbb{P}(Y<y|X=x_0,W=w)
\end{array}
\right\},
\end{aligned}
\end{equation}
\begin{equation}
\begin{aligned}
&l^N(y;x_0,x_1,w)=\max\left\{
\begin{array}{c}
    0\\
    \mathbb{P}(Y<y|X=x_1,W=w)-\mathbb{P}(Y<y|X=x_0,W=w)
\end{array}
\right\},
\end{aligned}
\end{equation}
and
\begin{equation}
\begin{aligned}
&u^N(y;x_0,x_1,w)=\min\left\{
\begin{array}{c}
1-\mathbb{P}(Y<y|X=x_0,W=w),\\
\mathbb{P}(Y<y|X=x_1,W=w)
\end{array}
\right\}.
\end{aligned}
\end{equation}
Since $l^P(y;x_0,x_1,w)\geq 0$ and $l^N(y;x_0,x_1,w)\geq 0$ for each $y \in \Omega_Y$ and $w \in \Omega_W$, we have Eqs. \eqref{eq21} and \eqref{eq22}.
The lower bound is attained when $(Y_0, Y_1)$ are countermonotonic (perfectly negatively correlated), while the upper bound is attained when they are monotonic (perfectly positively correlated).
Thus, these bounds are sharp.
\end{proof}

\end{document}